\newtheorem{definition}{Definition}
\newtheorem{lemma}{Lemma}
\newtheorem{thm}{Theorem}
\newtheorem{example}{Example}
\newtheorem{remark}{Remark}
\newtheorem{corollary}{Corollary}
\tikzset{edge/.style = {->,> = latex'}}
\begin{document}

\begin{frontmatter}

\title{Minimum Target Sets in Non-Progressive Threshold Models: When Timing Matters}

\author[A]{\fnms{Hossein}~\snm{Soltani}}
\author[B]{\fnms{Ahad}~\snm{N. Zehmakan}\thanks{Corresponding Author. Email: ahadn.zehmakan@anu.edu.au.}}
\author[A]{\fnms{Ataabak}~\snm{B. Hushmandi}} % use of \orcid{} is optional

\address[A]{Urmia University of Technology}
\address[B]{The Australian National University}

\begin{abstract}
Let $G$ be a graph, which represents a social network, and suppose each node $v$ has a threshold value $\tau(v)$. Consider an initial configuration, where each node is either positive or negative. In each discrete time step, a node $v$ becomes/remains positive if at least $\tau(v)$ of its neighbors are positive and negative otherwise. A node set $\mathcal{S}$ is a Target Set (TS) whenever the following holds: if $\mathcal{S}$ is fully positive initially, all nodes in the graph become positive eventually.
We focus on a generalization of TS, called Timed TS (TTS), where it is permitted to assign a positive state to a node at any step of the process, rather than just at the beginning.

We provide graph structures for which the minimum TTS is significantly smaller than the minimum TS, indicating that timing is an essential aspect of successful target selection strategies. Furthermore, we prove tight bounds on the minimum size of a TTS in terms of the number of nodes and maximum degree when the thresholds are assigned based on the majority rule.

We show that the problem of determining the minimum size of a TTS is NP-hard and provide an Integer Linear Programming formulation and a greedy algorithm. We evaluate the performance of our algorithm by conducting experiments on various synthetic and real-world networks. We also present a linear-time exact algorithm for trees.
\end{abstract}

\end{frontmatter}

\section{Introduction}

Over the past few decades, the world has experienced an extreme surge in the proliferation of online social networks. These platforms have emerged as an omnipresent facet of contemporary societies, enabling people to forge connections with their peers, aggregate information, and express their opinions. As such, these networks have become the principal conduits for the rapid dissemination of information, facilitating the fluid formation of opinions through online interactions. Consequently, marketing firms and political campaigns frequently exploit social networks to achieve their desired outcomes, cf.~\cite{kempe}. These entities harness the power of these platforms to advertise new consumer goods and promote political factions. The underlying strategy often revolves around the notion that pinpointing a select group of influential individuals within a given community could trigger a massive ripple effect of influence across the network at large. This has catalyzed a growing interest in the quantitative analysis of opinion diffusion and collective decision-making mechanisms, cf.~\cite{FG,coro2019exploiting,C}.

From a theoretical standpoint, it is pertinent to introduce and investigate mathematical models of influence diffusion, which simulate the process of how individuals revise their opinions and how the influence disseminates through social interactions. The majority of the proposed models utilize a graph, denoted by $G$, to model the interactions between members of a community, cf.~\cite{nichterlein2013tractable}. The graph, intended to represent a social network, features each node as an individual, with an edge representing a relationship between individuals such as friendship, collaboration, or mentorship. Furthermore, each node is typically assigned a binary state, representing a positive or negative stance regarding a specific topic or the status on the adoption of a novel technological product. Thereafter, nodes continue to update their states as a function of their neighboring nodes' states.

One category of models which has gained significant popularity is the class of threshold models, cf.~\cite{nichterlein2013tractable,FG,ackerman}. These models entail that each node $v$ possesses a distinct threshold value $\tau(v)$ and updates its opinion to a positive state only when the number of its positive neighbors exceeds the stipulated threshold. In these models, while peer pressure could potentially influence a node's decision-making process, nodes can exhibit varying degrees of resistance, where those with higher threshold values require a greater number of positive connections to adopt a positive state. 

The majority of threshold based models explored in previous research fall under the umbrella of two categories: progressive and non-progressive, cf.~\cite{C, kempe,zehmakan2023random}. Progressive models are designed to simulate situations where states evolve in a fixed direction, i.e., once a node assumes a positive disposition, it remains positive indefinitely. This type of dynamic is particularly suited to scenarios where nodes transition from an uninformed (negative) to an informed (positive) state, or where a node adopts a new technology (i.e., switches from negative to positive). Conversely, in the non-progressive setting, nodes possess the capability of oscillating between positive and negative states. In this context, a node's state represents its stance on a given topic (such as levying additional taxes on alcoholic beverages), favoring one of two political parties, or embracing one of two competing services.

A set of nodes whose agreement on positive state results in the whole (or a large body of) network eventually adopting a positive stance is called a \textit{target set}, cf.~\cite{C}. In order to acquire insights into the most effective manipulation strategies for controlling the outcome of opinion formation dynamics, the problem of identifying the minimum size of a target set has extensively been examined. This problem is commonly referred to as \emph{target set selection}~\cite{ackerman} or \emph{influence maximization}~\cite{kempe}, depending on its exact formulation, and has yielded a plethora of hardness, algorithmic, and combinatorial findings, cf.~\cite{nichterlein2013tractable,FG,coro2019exploiting,C,kempe}.

% The target set selection problem is proven to be NP-hard for different variants of the threshold model and different graph classes, cf.~\cite{C,kempe}. On the other hand, several approximation, heuristic, and randomized algorithms have been proposed, cf.~\cite{nichterlein2013tractable,FG,coro2019exploiting}.

% In this study, we introduce a novel concept, namely the \textit{timed target set}, which is a generalization of the target set. Unlike the traditional definition, our framework permits nodes to be targeted at various stages of the process rather than all at once.

In the present work, we introduce a generalization of a target set, where we allow the nodes to be targeted at different steps of the process (rather than all at once). Such a set of targeted nodes is called a \textit{timed target set}. Some prior work has considered the framework where the manipulator intervenes in the process in a more dynamic fashion, but their setup is fundamentally different from ours, in terms of the underlying diffusion process, the permitted intervention operations, and the manipulator's objectives, cf.~\cite{chen2012time,aggarwal2012influential,kim2014ct, zhang2013maximizing}.

We investigate the problem of finding the minimum size of a timed target set in a non-progressive threshold model. We provide some hardness results, propose a greedy algorithm whose performance is evaluated on real and synthetic graph data, present an exact linear-time algorithm for trees, and prove tight bounds on the minimum size of a timed target set in terms of different graph parameters.

\textbf{Outline.} In the rest of this section, we first give some basic definitions in Subsection~\ref{preliminaries}. Building on that, we present our contributions and give an overview of related work in the following two subsections. Our bounds on the minimum size of a timed target set are proven in Section~\ref{section Lower bound}. Our complexity and algorithmic results are given in Section~\ref{algorithms}.
% Subsection~\ref{hardness_greedy} includes proposed greedy algorithm and the conducted experiments.
% Subsection~\ref{Timed Target Set in Trees} covers the linear-time algorithm for trees, and the ILP formulation is provided in Subsection~\ref{Integer programming approach}.

\subsection{Preliminaries}\label{preliminaries}
Consider a simple undirected graph $G=(V,E)$. We let $n:=|V|$, $m:=|E|$ and use the shorthand $vu$ (or $uv$) for an edge $\{v,u\}\in E$. Let $N(v):=\{u\in V: vu\in E\}$ be the \emph{neighborhood} of $v$ and $N[v]=N(v)\cup\{v\}$ be the \emph{closed neighborhood} of $v$. $d(v):=|N(v)|$ denotes the degree of $v$ and $\Delta$ stands for the maximum degree in $G$. Furthermore, for $D\subseteq V$, let $d_D(v):=|N(v)\cap D|$. We say $G$ is an \emph{even} graph if $d(v)$ is even for every node $v$ in $G$. For $D\subseteq V$, the induced subgraph of $G$ on $D$ is denoted by $G[D]$, and $G\setminus D$ stands for the induced subgraph on $V\setminus D$. In case of $D=\{v\}$ for some $v\in V$, we use the notation $G\setminus v$ instead of $G\setminus \{v\}$.

By the \emph{threshold assignment} for the nodes of a graph $G$, we mean a function $\tau:V\longrightarrow \mathbb{N}\cup\{0\}$ such that for each node $v\in V$ the inequality $0\leq\tau(v)\leq d(v)$ holds. Some special choices of the threshold assignment are the \emph{strict majority} $\tau(v)=\left\lceil (d(v)+1)/2\right\rceil$ and \emph{simple majority} $\tau(v)=\left\lceil d(v)/2\right\rceil$.

Consider a pair $(G,\tau)$ and an initial \emph{configuration} where each node is either \emph{positive} or \emph{negative}. In the \emph{progressive threshold model}, in each discrete time step, a negative node $v$ becomes positive if at least $\tau(v)$ of its neighbors are positive and positive nodes remain unchanged. In the \emph{non-progressive threshold model}, a node $v$ becomes positive if at least $\tau(v)$ of its neighbors are positive and becomes negative otherwise. Note that in the non-progressive model, nodes can switch from positive to negative while this is not possible in the progressive model. Furthermore, we define $\mathcal{A}_i$ to be the set of positive nodes in the $i$-th step of the process.

A set $\mathcal{S}\subseteq V$ is called a \emph{target set} (TS) whenever the following holds: If $\mathcal{S}$ is fully positive, then all nodes become positive after some steps, i.e., if $\mathcal{A}_0=\mathcal{S}$, then $\mathcal{A}_i=V$ for some $i\in \mathbb{N}$. For a pair $(G,\tau)$, the minimum size of a TS in the progressive and non-progressive model is denoted by $\overrightarrow{MT}(G,\tau)$ and $\overleftrightarrow{MT}(G,\tau)$, respectively. Note that we use a forward arrow for the progressive model and a bidirectional arrow for the non-progressive model.

According to the definition of a TS, a manipulator targets a set of nodes at once. However, it is sensible to consider the set-up where the manipulator can target nodes at different steps of the process. We capture this by introducing the concept of a \emph{timed target set} (TTS), defined below.

\begin{definition}\label{DTS}
For a pair $(G,\tau)$, the finite sequence $\mathcal{S}_0,\mathcal{S}_1,\ldots,\mathcal{S}_k$ of subsets of $V$, for some integer $k$, is said to be a TTS in the non-progressive model when there is a sequence $Q_0,Q_1,\ldots,Q_k$ of subsets of $V$ such that : (i) $Q_0=\emptyset$; (ii) $v\in Q_i$ for $i\geq1$ if and only if $|N(v)\cap(\mathcal{S}_{i-1}\cup Q_{i-1})|\geq\tau(v)$; (iii) $\mathcal{S}_k=\emptyset$; (iv) $Q_k=V$. We denote the minimum size of a TTS in the non-progressive model with $\overleftrightarrow{MTT}(G,\tau)$, where the size of a TTS is equal to $\sum_{i=0}^{k} |\mathcal{S}_i|$.
\end{definition}

Note that $\mathcal{S}_i\cup Q_i$ is the set of positive nodes in the $i$-th step, i.e., $\mathcal{A}_i$. The nodes of $Q_i$ become positive using their positive neighbors in step $i-1$ (i.e., their neighbors in $\mathcal{S}_{i-1}\cup Q_{i-1}$) but the nodes of $\mathcal{S}_i$ are chosen to be positive in step $i$ by the manipulator.

We do not define TTS for the progressive model since the ability to target nodes at different steps does not give the manipulator any extra power. This is because there is no benefit for a manipulator to target a node in some step $i$, for $i\ge1$, instead of targeting it in step $0$.

As a warm-up, let us provide the example below. 
\begin{example}\label{example}
	Consider the complete bipartite graph $K_{1,n-1}$ with partite sets $X=\{x\}$ and $Y=\{y_1,y_2,\ldots,y_{n-1}\}$. Let $\tau$ be the strict majority threshold, i.e., $\tau(v)=\left\lceil (d(v)+1)/2\right\rceil$ for every node $v$.
	\begin{itemize}
		\item $\mathcal{S}=X$ is a minimum size TS in the progressive model and thus $\overrightarrow{MT}(K_{1,n-1}, \tau)=1$.
		
	\item Let $\mathcal{S}=X\cup Y'$ where $Y'$ is any subset of $Y$ of cardinality $\tau(x)$. It is easy to see that $\mathcal{S}$ is a minimum size TS in the non-progressive model, which yields $\overleftrightarrow{MT}(K_{1,n-1}, \tau)=\tau(x)+1=\lceil\frac{n}{2}\rceil+1$.
		
		\item The sequence $\mathcal{S}_0,\mathcal{S}_1,\mathcal{S}_2$ with $\mathcal{S}_0=\mathcal{S}_1=\{x\}$ and $\mathcal{S}_2=\emptyset$ is a TTS of size $2$ and there is no TTS of size $1$. Hence, $\overleftrightarrow{MTT}(K_{1,n-1},\tau)=2$. 
	\end{itemize}
\end{example}

\subsection{Our Contribution} \label{our_contribution}

We focus on the minimum size of a TTS in the non-progressive threshold model, i.e., $\overleftrightarrow{MTT}(G,\tau)$. We present tight bounds, prove hardness results, and provide approximation and exact algorithms for general and special classes of graphs.

\textbf{Timing Matters.} In the non-progressive model, TTS has two advantages over the original TS: (i) the nodes can be targeted at different steps (ii) a node can be targeted more than once. Example~\ref{example} demonstrates that these two advantages amplify the power of a manipulator significantly since $\overleftrightarrow{MTT}(K_{1,n-1},\tau)$ is much smaller than $\overleftrightarrow{MT}(K_{1,n-1}, \tau)$. What if we require that $\mathcal{S}_i\cap \mathcal{S}_j = \emptyset$ for every two distinct sets $\mathcal{S}_i$ and $\mathcal{S}_j$ in the definition of a TTS, i.e., take away the advantage (ii)? We prove that the advantage (i) suffices to make the manipulator substantially stronger for some classes of graphs and threshold assignments. Thus, targeting nodes at appropriate time is very crucial for successful manipulation and this paper is the first to consider this fundamental aspect of the target set selection.

\textbf{Tight Bounds on $\overleftrightarrow{MTT}(G,\tau)$ for Strict Majority.} We prove that $\overleftrightarrow{MTT}(G,\tau)\ge 2n/(\Delta+1)$ when $\tau$ is the strict majority and this bound is the best possible. (This extends a result from~\cite{FG}.) We first prove this bound for bipartite graphs using some combinatorial and potential function arguments and then extend to the general case. This result implies that for bounded-degree graphs, a na\"ive algorithm which simply targets all nodes has a constant approximation ratio. When the graph $G$ is even, we provide the stronger bound of $4n/(\Delta+2$). The improvement might seem negligible at the first glance, but it is actually quite impactful. For example, for a cycle $C_n$, the first bound is equal to $2n/3$, but the second one gives the tight bound of $n$. Furthermore, while requiring the graph to be even might seem very demanding, it actually captures some important graph classes such as the $d$-dimensional torus. Determining $\overleftrightarrow{MT}(G,\tau)$ and $\overrightarrow{MT}(G,\tau)$ for the $d$-dimensional torus was studied extensively by prior work, due to certain applications in statistical physics, and the exact answer was proven only after a long line of papers, cf.~\cite{morris2009minimal}.

\textbf{Inapproximability Result.} We prove that the problem of finding $\overleftrightarrow{MTT}(G,\tau)$ for a given pair $(G,\tau)$ cannot be approximated within a ratio of $\mathcal{O}(2^{\log^{1-\epsilon}n})$ for any fixed $\epsilon>0$ unless $NP\subseteq DTIME(n^{polylog(n)})$, by a polynomial time reduction from the progressive variant. 

\textbf{Integer Linear Programming Formulation.} A standard approach to tackle an NP-hard problem is to formulate it as an Integer Linear Program (ILP). Then, we can use standard and powerful ILP solvers to solve small-size problems.
% Furthermore, one can efficiently solve the relaxed variant of such an ILP and use some rounding procedure to convert the fractional solutions of the relaxed variant to integral solutions which approximate the optimal solution of the ILP.
We provide an ILP formulation for the problem of finding $\overleftrightarrow{MTT}(G,\tau)$.
% which permits us to find the optimal solution for small graphs.

\textbf{Greedy Algorithm and Experiments.} We propose a greedy algorithm which finds a TTS for a given pair $(G,\tau)$ and prove its correctness. Then, we provide the outcome of our experiments on various synthetic and real-world graph data. Our experiments on small synthetic networks demonstrate that the minimum size of a TTS is strictly smaller than the minimum size of a TS in most cases, confirming that the effect of timing is not restricted to tailored graphs (such as the one given in Example~\ref{example}). To find the optimal solutions, we rely on our ILP formulation. Furthermore, we observe that our greedy approach returns almost optimal solutions in most cases. We also compare the outcome of our greedy algorithm against an analogous greedy algorithm for TS on real-world networks, such as Facebook and Twitter, and observe a $13\%$ improvement.

\textbf{Exact Linear-Time Algorithm for Trees.} It is known, cf.~\cite{C,Centeno,Ben-Zwi}, that if we limit ourselves to trees, then the problem of determining the minimum size of a TS in the progressive model (i.e., $\overrightarrow{MT}(G,\tau)$) is no longer NP-hard and an exact polynomial time algorithm exists. It was left as an open problem in~\cite{Ben-Zwi} whether a similar result could be proven for the non-progressive variants. Recently, it was shown~\cite{BER} that if $\tau(v)\in\{0,1,d(v)\}$ for every node $v$, then the problem for $\overleftrightarrow{MT}(G,\tau)$ is tractable on trees. We make progress on this front by providing a linear-time algorithm which outputs $\overleftrightarrow{MTT}(G,\tau$) when $G$ is a tree and for any choice of $\tau$. Our algorithm can be interpreted as a dynamic programming approach. It is worth to emphasize that algorithms for trees are not only theoretically interesting, but also the pathway to fixed-parameter tractable algorithms in terms of treewidth, cf.~\cite{Ben-Zwi}, which are relevant from a practical perspective too. 

\textbf{Proof Techniques.} Whilst for some of our results such as hardness proof and greedy algorithm, we leverage the rich literature on TS, for others we need to develop several novel proof techniques. In particular, we introduce several new combinatorial and graph tools for the proof of our bounds. Furthermore, we devise novel techniques to establish a linear-time algorithm for trees, which in fact might be useful to settle the problem for TS (since it is only resolved for a very constrained setup as mentioned above.) 

\subsection{Related Work}\label{prior_work}

Numerous models have been developed and studied to gain more insights into the mechanisms and general principles driving the opinion formation and information spreading among the members of a community, cf.~\cite{faliszewskiopinion,castiglioni2020election,brill2016pairwise,anagnostopoulosbiased,bredereck2017manipulating}. In the plethora of opinion diffusion models, the threshold models have received a substantial amount of attention. While both the progressive and non-progressive threshold models had been studied in the earlier work, cf.~\cite{P,aizenman1988metastability}, they were popularized by the seminal work of Kempe, Kleinberg, and Tardos~\cite{kempe}.

\textbf{Convergence Properties.} In the progressive threshold model, it is straightforward to observe that the process reaches a fixed configuration (where no node can update) in at most $n$ steps. For the non-progressive variant, Goles and Olivos~\cite{GOLES1980187} proved that the process always reaches a cycle of configurations of length one or two (i.e., a fixed configuration or switching between two configurations). Furthermore, this happens in $\mathcal{O}(m)$ steps, where $m$ is the number of edges, according to~\cite{poljak1986pre}. Stronger bounds are known for special cases. For example, a logarithmic upper bound is proven in~\cite{zehmakan2020opinion} for graphs with strong expansion properties and the simple majority threshold. The convergence properties have also been studied for directed acyclic graphs, cf.~\cite{chistikov2020convergence}, and when a bias towards a superior opinion is present, cf.~\cite{lesfari2022biased}.

\textbf{Bounds.} There is a large body of research whose main goal is to find tight bounds on the minimum size of a TS in threshold models. Some prior work has investigated this for special classes of graphs, such as the $d$-dimensional torus~\cite{morris2009minimal} and random regular graphs~\cite{zehmakan2020opinion}. However, the main focus has been devoted to discovering sharp bounds in terms of various graph parameters such as the number of nodes~\cite{KSZ1,auletta2018reasoning}, girth~\cite{coja2014contagious}, maximum/minimum degree~\cite{FG}, expansion~\cite{zehmakan2020opinion}, vertex-transitivity~\cite{mossel2014majority}, and the minimum size of a feedback vertex set~\cite{adams2011dynamic}.

\textbf{Hardness.} The problem of determining the minimum size of a TS in the progressive threshold model has been investigated extensively, and it is known to be NP-hard even for some special choices of the threshold assignment and the input graph. Notably, it was proven in~\cite{C} that the problem cannot be approximated within the ratio of $\mathcal{O}(2^{\log^{1-\epsilon}n})$, for any fixed constant $\epsilon>0$, unless $NP\subseteq DTIME(n^{polylog(n)})$, even if we limit ourselves to simple majority threshold assignment and regular graphs.
% The problem remains NP-hard for diameter-two split graphs~\cite{nichterlein2013tractable} and when all nodes have a fixed threshold $\tau$ for some $\tau\ge 3$.
Hardness results are also known for the non-progressive variant. For the simple majority threshold assignment, the problem cannot be approximated within a factor of $ \log \Delta \log \log \Delta$, unless $P=NP$, according to~\cite{mishra2002hardness}. For more hardness results also see~\cite{bredereck2021maximizing,nichterlein2013tractable,zehmakan2021majority}.

\textbf{Algorithms.} For certain classes of graphs and threshold assignments, the problem of finding the minimum size of a TS becomes tractable. For the progressive variant, it is proven that there is a polynomial time algorithm when the input graph is a tree, cf.~\cite{C,Centeno,Ben-Zwi}. This was generalized to block-cactus graphs in~\cite{chiang2013some}. The problem also is tractable when the feedback edge set number is small, cf.~\cite{nichterlein2013tractable}. Following up on an open problem from~\cite{Ben-Zwi}, recently it was shown in~\cite{BER} that in the non-progressive variant if $\tau(v)\in\{0,1,d(v)\}$ for every node $v$, then the problem is polynomial time solvable for trees. It is still open whether a polynomial time algorithm for the general threshold assignment on trees exists or not. (As mentioned in Section~\ref{our_contribution}, we provide a linear-time algorithm for TTS for any choice of the threshold assignment.) Furthermore, it is known that for the non-progressive model with the simple majority threshold, there exist a $(\log \Delta)$-approximation algorithm for general graphs~\cite{mishra2002hardness} and an exact linear-time solution for cycle and path graphs~\cite{DPRS}. Finally, the integer linear programming formulation of the problem has been studied by prior work, cf.~\cite{ackerman,SM,chen2023polyhedral}.

\section{Bounding Minimum Size of a TTS}\label{section Lower bound}
% In the non-progressive model, the timed target sets have two advantages over original target sets: (i) the nodes can be targeted at any step (ii) a node can be targeted more than once. As we explained, Example 2 demonstrates that there are graphs for which $\overleftrightarrow{MTT}(G,\tau)$ is significantly smaller than $\overleftrightarrow{MT}(G,\tau)$, where $\tau$ is the strict majority threshold assignment.
Let a \emph{disjoint} TTS be the same as a TTS except that a node cannot be targeted more than once (i.e., $\mathcal{S}_i\cap \mathcal{S}_j = \emptyset$ for every two distinct sets $\mathcal{S}_i$ and $\mathcal{S}_j$ in the definition of a TTS) and define $\overleftrightarrow{MDTT}(G,\tau)$ to be the minimum size of a disjoint TTS in the non-progressive model. In Theorem~\ref{disjoint_timed}, we prove that there are graphs and threshold assignments for which $\overleftrightarrow{MDTT}(G,\tau)$ is asymptotically smaller than $\overleftrightarrow{MT}(G,\tau)$. This indicates that from the two advantages that TTS has over TS (namely, (i) the
nodes can be targeted at different steps (ii) a node can be targeted more than once), advantage (i) solely suffices to make the manipulator substantially more powerful.

\begin{thm}\label{disjoint_timed}
There are arbitrarily large graphs $G$ such that $\overleftrightarrow{MT}(G,\tau)=\omega(\overleftrightarrow{MDTT}(G,\tau))$, where $\tau$ is strict majority.
\end{thm}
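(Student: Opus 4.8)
The plan is to exhibit an explicit family of graphs $G$ together with the strict majority threshold $\tau$ for which every target set is large (say, of size $\Omega(f(n))$ for some growing $f$) while a disjoint timed target set of size $o(f(n))$, or even constant size, exists. The natural starting point is to iterate and combine the idea behind Example~\ref{example}: a star $K_{1,n-1}$ already has $\overleftrightarrow{MT}=\lceil n/2\rceil+1$ but its $\overleftrightarrow{MTT}$ is only $2$, and one checks that the TTS used there ($\mathcal{S}_0=\mathcal{S}_1=\{x\}$) is \emph{not} disjoint. So the first task is to redesign the gadget so that the center can be "held positive long enough" using a small set of \emph{fresh} targeted nodes at each step rather than re-targeting the same node. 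A promising construction is to take a central structure whose activation, once achieved, propagates through a long "conveyor belt" (a path or a caterpillar) so that the manipulator only needs to inject a few new positive nodes at consecutive steps to keep a wavefront alive, each injection at a brand-new vertex; by making the belt long and the rest of the graph hang off it as pendant stars or leaves, one forces $\overleftrightarrow{MT}$ to scale with the number of high-threshold peripheral nodes while $\overleftrightarrow{MDTT}$ stays bounded by the (short, fixed) length of the activation schedule.

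Concretely, I would first fix the gadget and compute $\overleftrightarrow{MT}(G,\tau)$ from below: identify a large independent set of vertices each of which has strict-majority threshold at least $2$ and pairwise disjoint "private" neighborhoods, so that in any static target set each such vertex must either itself be targeted or have enough targeted neighbors at time $0$; a counting/charging argument then yields $\overleftrightarrow{MT}(G,\tau)=\Omega(f(n))$. Second, I would describe the disjoint timed schedule $\mathcal{S}_0,\dots,\mathcal{S}_k$ explicitly and verify conditions (i)--(iv) of Definition~\ref{DTS} step by step: show that the sets $Q_i$ grow monotonically to $V$, that at each step the newly targeted set $\mathcal{S}_i$ is disjoint from all previous ones (this is where the "conveyor belt" design pays off — each step uses a new segment of the belt), that $\mathcal{S}_k=\emptyset$ and $Q_k=V$, and that the total $\sum_i|\mathcal{S}_i|$ is $o(f(n))$. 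Finally, dividing the two estimates gives $\overleftrightarrow{MT}(G,\tau)=\omega(\overleftrightarrow{MDTT}(G,\tau))$, and letting the belt/periphery grow makes $G$ arbitrarily large.

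The main obstacle, I expect, is the non-progressive dynamics: because positive vertices can flip back to negative, one must ensure the activation wavefront is \emph{self-sustaining} once launched — i.e., after the manipulator stops injecting at step $k$, the configuration must still flow to all-positive without any vertex the manipulator already "spent" flipping off and derailing the spread. Controlling this requires choosing the gadget's degrees (hence thresholds, via strict majority) so that every vertex, once it and enough of its neighbors are positive, stays positive; the even/odd parity of degrees will matter here, since strict majority of an even-degree vertex is $d(v)/2+1$ whereas for odd degree it is $(d(v)+1)/2$. A secondary subtlety is making the lower bound on $\overleftrightarrow{MT}$ robust: I must rule out clever static target sets that exploit back-and-forth oscillation to activate many peripheral vertices cheaply, which I would handle by a potential-function or monovariant argument showing the positive set cannot "catch up" to all the high-threshold peripheral nodes unless enough of them were seeded initially.
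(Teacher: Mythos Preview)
Your outline identifies the right overall strategy and the right obstacles, but it does not supply the two ideas that actually make the proof go through, and one of your sketched arguments is incorrect as stated.

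\textbf{The lower bound sketch is flawed.} You write that in any target set ``each such vertex must either itself be targeted or have enough targeted neighbors at time $0$''. This is not how a (non-progressive) TS works: a TS is seeded only at time $0$, but the process then evolves, so a vertex may become positive at some step $t>0$ because enough of its neighbors became positive at step $t-1$, none of which need have been targeted. A static charging argument over ``private neighborhoods at time $0$'' therefore does not give a lower bound on $\overleftrightarrow{MT}$. The paper's bound is obtained by a dynamic argument instead: in its layered graph it shows that if too few seeds lie in $L_{i-1}\cup L_{i+1}$ then \emph{all} of $L_i$ turns negative after one step, this negativity then cascades upward layer by layer to $L_1$, and once the unique node in $L_1$ is negative it and its two attached leaves oscillate forever --- contradicting that the seed was a TS. Summing the per-layer requirement gives $\overleftrightarrow{MT}(G,\tau)=\Omega(n)$. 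Your potential/monovariant idea could perhaps be made to do something similar, but the ``enough targeted neighbors at time $0$'' shortcut is not it.

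\textbf{The upper bound construction is missing.} A bare path or caterpillar under strict majority does not give a self-sustaining wavefront with a disjoint schedule: targeting one backbone vertex per step leads to oscillation (the just-activated vertex has too few positive neighbors in the next round and flips back, so the wave dies). You correctly flag this as ``the main obstacle'' but you do not resolve it. The paper's construction is a layered graph with $|L_i|=i$ and complete bipartite edges between consecutive layers (plus two leaves on the node in $L_1$); the \emph{growing} layer sizes are exactly what makes the wave self-sustaining: once $L_\kappa$ and half of $L_{\kappa-1}$ are seeded, each layer, once fully positive, simultaneously supplies enough positive neighbors to keep the layer below positive and to turn the layer above fully positive. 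This gives a disjoint TTS of size $\Theta(\kappa)=\Theta(\sqrt{n})$ against $\overleftrightarrow{MT}=\Omega(n)$. Until you exhibit a concrete gadget with this stability property and verify the schedule, the proposal remains a plan rather than a proof.
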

\textit{Proof Sketch.}
For an arbitrary integer $\kappa$, let set $L_i$, for $1\le i \le \kappa$, contains $i$ nodes. Then, add an edge between every node in $L_i$ and $L_{i+1}$, for $1\le i\le \kappa-1$. Finally, attach two leaves to the node in $L_1$ to construct a graph $G$ with $n=\Theta(\kappa^2)$ nodes. (See Figure~\ref{Tower Graph} for an example.)

	\begin{center}
		\begin{figure}
			\begin{center}
				\begin{tikzpicture}[scale=0.9]
					\draw(-0.5,0)--(0,1)--(0.5,0);
					\draw(-0.5,2)--(0,1)--(0.5,2);
					\draw[black,fill=blue](0,1)circle(0.1);
					\draw[black,fill=blue](-0.5,2)circle(0.1);
					\draw[black,fill=blue](0.5,2)circle(0.1);
					%%%
					\draw(-0.5,0)--(-1,-1);
					\draw(-0.5,0)--(0,-1);
					\draw(-0.5,0)--(1,-1);
					\draw(0.5,0)--(-1,-1);
					\draw(0.5,0)--(0,-1);
					\draw(0.5,0)--(1,-1);
					\draw[black,fill=blue](-0.5,0)circle(0.1);
					\draw[black,fill=blue](0.5,0)circle(0.1);
					%%%
					\draw[black,fill=blue](-1,-1)circle(0.1);
					\draw[black,fill=blue](0,-1)circle(0.1);
					\draw[black,fill=blue](1,-1)circle(0.1);
					%%%
					\node[] at (-1,1){$L_1$};
					\draw[gray](0,1)ellipse(0.5 and 0.3);
					\node[] at (-1.5,0){$L_2$};
					\draw[gray](0,0)ellipse(1 and 0.3);
					\node[] at (-2,-1){$L_3$};
					\draw[gray](0,-1)ellipse(1.5 and 0.3);
					\node[] at (0,-2){$\vdots$};
					\node[] at (-3.6,-3){$L_{\kappa-1}$};
					\draw[gray](0,-3)ellipse(3 and 0.3);
					\node[] at (-4,-4){$L_{\kappa}$};
					\draw[gray](0,-4)ellipse(3.5 and 0.3);
					%%%%%%%%%%%%%%%%%%%%%%%%%%%%%%%%%
					\draw(-2.5,-3)--(-3,-4);
					\draw(-2.5,-3)--(-2,-4);
					\draw(-2.5,-3)--(-1,-4);
					\draw(-2.5,-3)--(0,-4);
					\draw(-2.5,-3)--(3,-4);
					\draw(-1.5,-3)--(-3,-4);
					\draw(-1.5,-3)--(-2,-4);
					\draw(-1.5,-3)--(-1,-4);
					\draw(-1.5,-3)--(0,-4);
					\draw(-1.5,-3)--(3,-4);
					\draw(-0.5,-3)--(-3,-4);
					\draw(-0.5,-3)--(-2,-4);
					\draw(-0.5,-3)--(-1,-4);
					\draw(-0.5,-3)--(0,-4);
					\draw(-0.5,-3)--(3,-4);
					\draw(2.5,-3)--(-3,-4);
					\draw(2.5,-3)--(-2,-4);
					\draw(2.5,-3)--(-1,-4);
					\draw(2.5,-3)--(0,-4);
					\draw(2.5,-3)--(3,-4);
					%%%%%%%%%%%%%%%%%%%%%%%%%%%%%%%%%
					\draw[black,fill=blue](-2.5,-3)circle(0.1);
					\draw[black,fill=blue](-1.5,-3)circle(0.1);
					\draw[black,fill=blue](-0.5,-3)circle(0.1);
					\node[] at (1,-3){$\cdots$};
					\draw[black,fill=blue](2.5,-3)circle(0.1);
					%%%
					\draw[black,fill=blue](-3,-4)circle(0.1);
					\draw[black,fill=blue](-2,-4)circle(0.1);
					\draw[black,fill=blue](-1,-4)circle(0.1);
					\draw[black,fill=blue](0,-4)circle(0.1);
					\node[] at (1.5,-4){$\cdots$};
					\draw[black,fill=blue](3,-4)circle(0.1);
					%%%%%%%%%%%%%%%%%%%%%%%%%%%%%%%%%
				\end{tikzpicture}
			\end{center}
			\caption{A graph where $\protect\overleftrightarrow{MT}(G,\tau)=\omega(\protect\overleftrightarrow{MDTT}(G,\tau))$.}
			\label{Tower Graph}
		\end{figure}
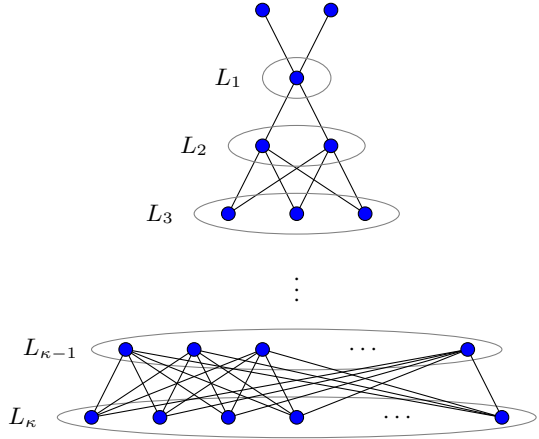
	\end{center}
 
Suppose that $\mathcal{S}_0=L_{\kappa}\cup L'_{\kappa-1}$ in which $L'_{\kappa-1}$ is any subset of $L_{\kappa-1}$ of cardinality $\left\lceil\tfrac{\kappa}{2}\right\rceil$. Also assume that $\mathcal{S}_{\kappa-2}$ consists of the node of $L_1$ and one of its leaf neighbors. For $1\le i\neq \kappa-2\le \kappa-1$ set $\mathcal{S}_i=\emptyset$. It is straightforward to check that $\mathcal{S}_0,\mathcal{S}_1,\ldots,\mathcal{S}_{\kappa-1}$ is a disjoint TTS of size $\kappa+\left\lceil\tfrac{\kappa}{2}\right\rceil+2$. This implies that $\overleftrightarrow{MDTT}(G,\tau)=\mathcal{O}(\sqrt{n})$.

Let $\mathcal{S}$ be a TS in the non-progressive model with strict majority. We claim that $|(L_{i-1}\cup L_{i+1})\cap \mathcal{S}|$ must be at least $i+1$, for $2\le i \le \kappa-1$. Thus, we have $|\mathcal{S}|=\Omega(n)$, which implies that $\overleftrightarrow{MT}(G,\tau)=\Omega(n)$. For the sake of contradiction, assume that $|(L_{i-1}\cup L_{i+1})\cap \mathcal{S}|< i+1$ for some $2\le i\le \kappa-1$. Consider the initial configuration where only the nodes in $\mathcal{S}$ are positive. We observe that $L_i$ becomes fully negative after one step. One step after that, $L_{i-1}$ becomes fully negative and so on until the only node in $L_1$ is negative. Once this happens, in all the following steps either the node in $L_1$ or its two leaf neighbors will be negative, regardless of the state of other nodes. This is in contradiction with $\mathcal{S}$ being a TS. (Please refer to Appendix~\ref{appendix-disjoint-TS}, for a full proof.) $\square$

\subsection{General Graphs}\label{general_graphs}
We first prove that $\overleftrightarrow{MTT}(G,\tau)\geq\frac{2n}{\Delta+1}$ for the strict majority threshold assignment if $G$ is bipartite in Theorem~\ref{main bound} (which is based on Lemma~\ref{Lem2}). Then, we provide Theorem~\ref{bipartite} which sets a connection between the value of $\overleftrightarrow{MTT}(G,\tau)$ in bipartite graphs and general graphs. Combining these two theorems gives us our desired bound for general graphs in Theorem~\ref{bound}.
The full proofs for these theorems are given in~\ref{appendix-general}.

\begin{lemma}\label{Lem2}
Let $\mathcal{S}_0,\mathcal{S}_1,\ldots,\mathcal{S}_k$ be a TTS of $(G,\tau)$ for a bipartite graph $G$ with partite sets $X$ and $Y$. Define $\mathcal{S}_e:=\mathcal{S}_0\cup \mathcal{S}_2\cup \mathcal{S}_4\cup \cdots$ and $\mathcal{S}_o:=\mathcal{S}_1\cup \mathcal{S}_3\cup \mathcal{S}_5\cup \cdots$. If for $D\subseteq V$ we have $D\cap \mathcal{S}_e\cap X=\emptyset$ and $D\cap \mathcal{S}_o\cap Y=\emptyset$ (or $D\cap \mathcal{S}_o\cap X=\emptyset$ and $D\cap \mathcal{S}_e\cap Y=\emptyset$), then $\left|E(G[D])\right|\leq\sum_{u\in D} \left(d(u)-\tau(u)\right)$.
\end{lemma}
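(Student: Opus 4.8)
The plan is to fix a bipartite graph $G$ with parts $X,Y$, a TTS $\mathcal{S}_0,\dots,\mathcal{S}_k$ with associated sets $Q_0,\dots,Q_k$, and a set $D$ satisfying the first hypothesis $D\cap\mathcal{S}_e\cap X=\emptyset$ and $D\cap\mathcal{S}_o\cap Y=\emptyset$ (the other case is symmetric by swapping the roles of $X$ and $Y$ or, equivalently, shifting the time index by one). I want to bound $|E(G[D])|$ from above by $\sum_{u\in D}(d(u)-\tau(u))$. The natural idea is to charge each edge of $G[D]$ to one of its endpoints and show that no vertex $u\in D$ receives more than $d(u)-\tau(u)$ charges, i.e. that at least $\tau(u)$ of the edges at $u$ inside $G[D]$ are \emph{not} charged to $u$.

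The key observation driving the charging is that since every node eventually becomes positive ($Q_k=V$), each $u\in D$ first enters some $Q_{i}$ (it cannot be the case that $u$ is only ever put in via some $\mathcal{S}_j$ with $u\in D$, because of the parity restriction on $D$: if $u\in X$ then $u\notin\mathcal{S}_e$, and if $u\in Y$ then $u\notin\mathcal{S}_o$, and conversely the time steps where $u$ could be forced into $Q$ as the "last new vertex" of its part are exactly those of the forbidden parity — this is where I need to think carefully). Concretely, consider the first time step $i$ at which $u\in\mathcal{S}_{i-1}\cup Q_{i-1}\cup Q_i$; using the parity structure I will argue $u\in Q_i$ for some $i$ with the parity such that the edge-partner vertices of $u$ that certified membership lie in $\mathcal{S}_{i-1}\cup Q_{i-1}$, and for $u\in D$ at least $\tau(u)$ of its neighbors were in $\mathcal{S}_{i-1}\cup Q_{i-1}$. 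Now I orient each edge $uv\in E(G[D])$ from $u$ to $v$ when $v$ "helped" $u$ turn positive at its critical step — more precisely, when $v\in\mathcal{S}_{i_u-1}\cup Q_{i_u-1}$ where $i_u$ is $u$'s critical step. Each vertex $u\in D$ then has at least $\tau(u)$ outgoing... wait — I must be careful about double counting; the cleaner route is: assign edge $uv$ to the endpoint that became positive \emph{later} (with ties broken arbitrarily), so that $uv$ is assigned to $u$ only if $u$'s critical step $i_u \ge i_v$; then the edges assigned to $u$ all go to neighbors that were already positive at step $i_u-1$, hence among them only those in $\mathcal{S}_{i_u-1}\cup Q_{i_u-1}$, and since $u$ needed $\ge \tau(u)$ such neighbors, at most $d(u)-\tau(u)$ of $u$'s edges in $G[D]$ can fail to be "certifying" — but I want the opposite direction, so let me instead assign $uv$ to the endpoint that turned positive \emph{earlier}. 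Then each edge $uv$ assigned to $u$ has $v$ turning positive no earlier than $u$, i.e. $i_v\ge i_u$, so $v$ was not yet positive at step $i_u - 1$, meaning $v$ is \emph{not} among the $\ge\tau(u)$ neighbors certifying $u$; hence $u$ receives at most $d(u)-\tau(u)$ edges. Summing over $u\in D$ and noting every edge of $G[D]$ is assigned to some endpoint gives $|E(G[D])|\le\sum_{u\in D}(d(u)-\tau(u))$, as desired.

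The steps, in order: (1) set up notation, fix the TTS and the $Q_i$'s, and reduce to the first hypothesis case; (2) for each $u\in D$ define its \emph{activation step} $i_u$ as the least $i$ with $u\in\mathcal{S}_i\cup Q_i$, and prove the \emph{parity lemma}: under the hypothesis on $D$, every $u\in D$ satisfies $u\in Q_{i_u}$ (not merely $u\in\mathcal{S}_{i_u}$) — this is the crux and uses that $u\in X\Rightarrow i_u$ odd-type, $u\in Y\Rightarrow i_u$ even-type, combined with $u\notin\mathcal{S}_e$ or $u\notin\mathcal{S}_o$ accordingly, so the only way $u$ can first be positive is via the $Q$-rule; (3) conclude that at least $\tau(u)$ neighbors of $u$ lie in $\mathcal{S}_{i_u-1}\cup Q_{i_u-1}\subseteq \mathcal{A}_{i_u-1}$, i.e. were positive strictly before step $i_u$; (4) orient/assign each edge $uv$ of $G[D]$ to the endpoint with smaller activation step (ties arbitrary), and argue an edge assigned to $u$ has its other endpoint activated at step $\ge i_u$, hence \emph{not} positive at step $i_u-1$, hence not one of the $\ge\tau(u)$ certifying neighbors; (5) deduce $\deg_{\text{assigned}}(u)\le d(u)-\tau(u)$ and sum.

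The main obstacle I anticipate is step (2), the parity lemma: one has to be genuinely careful that the bipartite structure forces a clean alternation between which part can be "freshly activated by the $Q$-rule" at even versus odd steps, and to check the edge cases — e.g. vertices with $\tau(u)=0$ (activated at step $1$ regardless), leaves, and vertices targeted repeatedly — fit the scheme. A secondary subtlety is making the assignment in step (4) well-defined and verifying that \emph{every} edge of $G[D]$ indeed gets assigned (it does, since both endpoints are in $D\subseteq V=Q_k$, so both have finite activation steps). Once the parity lemma is in hand, the charging argument is routine counting.
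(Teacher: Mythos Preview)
Your charging strategy is a different route from the paper's: there, one first proves that any nonempty $D$ satisfying the hypothesis contains some vertex $u$ with $d_D(u)\le d(u)-\tau(u)$ (by assuming the contrary and showing, via a parity alternation, that $D\cap X$ is never positive at even steps and $D\cap Y$ never at odd steps, contradicting $Q_k=V$), and then inducts on $|D|$ by peeling off such a $u$. Your orientation argument would yield the same bound in one shot---if the parity lemma in step~(2) were true. It is not. The hypothesis bars $u\in X\cap D$ from $\mathcal{S}_i$ only at \emph{even} $i$; nothing forbids $u\in\mathcal{S}_{i_u}$ when $i_u$ is odd, and then $u\notin Q_{i_u}$ is possible, so there need be no certifying neighbours at step $i_u-1$. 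Concretely: take $G=K_{2,2}$ on $X=\{u,x_2\}$, $Y=\{y_1,y_2\}$ with $\tau(u)=2$ and the other thresholds equal to~$1$. Then $\mathcal{S}_0=\emptyset$, $\mathcal{S}_1=\{u,y_2\}$, $\mathcal{S}_2=\mathcal{S}_3=\emptyset$ is a TTS (one checks $Q_3=V$), so $\mathcal{S}_e=\emptyset$ and $\mathcal{S}_o=\{u,y_2\}$. The set $D=\{u,y_1\}$ satisfies the hypothesis, yet $i_u=1$ with $u\notin Q_1=\emptyset$, while $i_{y_1}=2$; your rule therefore assigns the sole edge $uy_1$ of $G[D]$ to $u$ although $d(u)-\tau(u)=0$.

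The repair is to build the parity into the definition: for $u\in X\cap D$ let $i_u$ be the least \emph{even} $i$ with $u\in\mathcal{A}_i$, and for $v\in Y\cap D$ the least \emph{odd} such $i$ (both exist, padding with $\mathcal{S}_{k+1}=\emptyset$ if necessary). Then $u\notin\mathcal{S}_{i_u}$ is forced by the hypothesis, so $u\in Q_{i_u}$ holds automatically; for any edge $uv$ of $G[D]$ the indices $i_u,i_v$ have opposite parity (hence no ties), and $i_u-1$ is odd, precisely the parity at which $v\in Y\cap D$ is barred from $\mathcal{S}_{i_u-1}$, so $i_v>i_u$ indeed gives $v\notin\mathcal{A}_{i_u-1}$. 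With this adjustment your steps (3)--(5) go through verbatim.
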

\textit{Proof Sketch.}
Let us first prove Claim 1, which is the main ingredient of this proof.

\textbf{Claim 1.} \textit{If $D$ is nonempty, then there is a node $u\in D$ such that $d_D(u)\leq d(u)-\tau(u)$.}

Assume that $D\cap \mathcal{S}_e\cap X=\emptyset$ and $D\cap \mathcal{S}_o\cap Y=\emptyset$ (the proof is analogous for the other case). For the sake of contradiction, assume that $d_D(u)> d(u)-\tau(u)$ for all $u\in D$. One can show that for every odd $i$, $(D\cap Y)\cap \mathcal{A}_i=\emptyset$ and for every even $i$, $(D\cap X)\cap \mathcal{A}_i=\emptyset$. Since $D\neq\emptyset$, this contradicts the fact that $\mathcal{S}_0,\mathcal{S}_1,\ldots,\mathcal{S}_k$ is a TTS. This finishes the proof of Claim 1.

The statement of the lemma is trivial for $D=\emptyset$. For $|D|\geq1$ the proof is by induction on $|D|$. The base case of $|D|=1$ is straightforward. Assume that the inequality holds for $|D|<k$. Using Claim 1, there exists a node $v\in D$ such that $d_D(v)\leq d(v)-\tau(v)$. Applying the induction hypothesis for $D':=D\setminus \{v\}$ and some small calculations finish the proof. $\square$

\begin{thm}\label{main bound}
$\overleftrightarrow{MTT}(G,\tau)\geq\frac{2n}{\Delta+1}$ if $G$ is bipartite and $\tau$ is the strict majority.
\end{thm}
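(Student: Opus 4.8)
The plan is to deduce Theorem~\ref{main bound} from Lemma~\ref{Lem2} by choosing $D$ cleverly and then summing a degree inequality. Suppose $\mathcal{S}_0,\ldots,\mathcal{S}_k$ is a minimum TTS, so $\sum_{i=0}^k|\mathcal{S}_i|=\overleftrightarrow{MTT}(G,\tau)$, and split it into the ``even'' and ``odd'' targeted sets $\mathcal{S}_e$ and $\mathcal{S}_o$ as in the lemma. The key observation is that the lemma applies to two complementary choices of $D$: namely $D_1:=V\setminus\bigl((\mathcal{S}_e\cap X)\cup(\mathcal{S}_o\cap Y)\bigr)$ and $D_2:=V\setminus\bigl((\mathcal{S}_o\cap X)\cup(\mathcal{S}_e\cap Y)\bigr)$. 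For each of these, Lemma~\ref{Lem2} gives $|E(G[D_j])|\le\sum_{u\in D_j}(d(u)-\tau(u))$.

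Next I would exploit the strict majority threshold $\tau(u)=\lceil(d(u)+1)/2\rceil$, which forces $d(u)-\tau(u)=\lfloor(d(u)-1)/2\rfloor\le(\Delta-1)/2$ for every node $u$. Since the total number of nodes removed to form $D_1$ and $D_2$ is at most $|\mathcal{S}_e\cap X|+|\mathcal{S}_o\cap Y|+|\mathcal{S}_o\cap X|+|\mathcal{S}_e\cap Y| = |\mathcal{S}_e|+|\mathcal{S}_o|\le\overleftrightarrow{MTT}(G,\tau)$ (here I use that $\mathcal{S}_e\subseteq X\cup Y$ disjointly, and similarly for $\mathcal{S}_o$), at least one of $|D_1|,|D_2|$ is at least $n-\tfrac12\overleftrightarrow{MTT}(G,\tau)$. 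Say it is $D_1$. The real content is a lower bound on $|E(G[D_1])|$: the complement $V\setminus D_1$ has at most $\tfrac12\overleftrightarrow{MTT}(G,\tau)$ nodes, each of degree at most $\Delta$, so deleting it removes at most $\tfrac{\Delta}{2}\overleftrightarrow{MTT}(G,\tau)$ edges, hence $|E(G[D_1])|\ge m-\tfrac{\Delta}{2}\overleftrightarrow{MTT}(G,\tau)$. Combining with the Lemma bound and $d(u)-\tau(u)\le(\Delta-1)/2\le(\Delta+1)/2 - 1$ on the roughly $n$ vertices of $D_1$, one gets an inequality of the shape $m - \tfrac{\Delta}{2}\,\overleftrightarrow{MTT} \le |D_1|\cdot\tfrac{\Delta-1}{2}$, and then using $m\ge\ $ (something) and $|D_1|\le n$ should collapse to $\overleftrightarrow{MTT}(G,\tau)\ge 2n/(\Delta+1)$ after rearranging.

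I expect the main obstacle to be the bookkeeping needed to make the edge-count inequality tight enough, since a naive bound $|E(G[D_1])|\ge m-\tfrac{\Delta}{2}|V\setminus D_1|$ loses too much and one probably needs to instead sum the per-vertex inequality of Lemma~\ref{Lem2} over $D_1$ together with a \emph{handshake}-type identity $\sum_{u\in D_1}d_{D_1}(u)=2|E(G[D_1])|$ and relate $\sum_{u\in D_1}d(u)$ to $2m$ minus the contribution of deleted vertices. Concretely: apply Lemma~\ref{Lem2} to get $2|E(G[D_1])|=\sum_{u\in D_1}d_{D_1}(u)\le 2\sum_{u\in D_1}(d(u)-\tau(u))\le\sum_{u\in D_1}(d(u)-1)$ using strict majority ($2(d(u)-\tau(u))=2\lfloor(d(u)-1)/2\rfloor\le d(u)-1$); meanwhile $2|E(G[D_1])|\ge 2m-2\Delta|V\setminus D_1|$ is too weak, so instead I would count edges incident to $V\setminus D_1$ more carefully, or — cleaner — bound $\sum_{u\in D_1}d_{D_1}(u)\ge\sum_{u\in D_1}(d(u)-|V\setminus D_1|)$ is also too weak; the right move is likely to combine the two Lemma applications simultaneously. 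That is, add the inequalities for $D_1$ and $D_2$: $2|E(G[D_1])|+2|E(G[D_2])|\le\sum_{u\in D_1}(d(u)-1)+\sum_{u\in D_2}(d(u)-1)$, and note every vertex lies in at least one of $D_1,D_2$ while every edge lies in at least one of $G[D_1],G[D_2]$ unless one endpoint was deleted from both — but a vertex deleted from both would need to be in $(\mathcal{S}_e\cap X)\cap(\mathcal{S}_o\cap X)$ etc., which forces it to be targeted at least twice. Pushing this dichotomy through, with $|V\setminus D_1|+|V\setminus D_2|\le\overleftrightarrow{MTT}(G,\tau)$, should give $2\cdot 2m - (\text{edges lost}) \le 2\sum_{v\in V}(d(v)-1) - (\text{correction}) = 2(2m-n) - \cdots$, i.e. roughly $2m\le 2\Delta\cdot\overleftrightarrow{MTT} - 2n + \cdots$, wait — I would instead isolate $n$ directly: since each deleted vertex kills at most $\Delta$ edges from each graph, $4|E| \le 2(2m-n) + 2\Delta\,\overleftrightarrow{MTT}$ rearranges (using $|E|=m$) to $n \le \Delta\,\overleftrightarrow{MTT}$, which is weaker than desired by a factor of $2$, signalling that the correct accounting must use that each vertex of $G$ appears in \emph{both} $D_1$ and $D_2$ unless doubly targeted, so its full degree $d(v)$ (not $d(v)-1$ twice, but the structure) is counted with multiplicity two on the right while edges are counted with multiplicity two on the left only when both endpoints survive in the same $D_j$. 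Getting this multiplicity argument exactly right, and handling the leftover $\mathcal{S}_k=\emptyset$ and parity edge cases, is where the care goes; I would write it as: $\sum_j\sum_{u\in D_j}d_{D_j}(u) \ge \sum_j\bigl(2m - \Delta|V\setminus D_j| - (\text{edges inside } V\setminus D_j)\bigr)$ and bound the right side using $|V\setminus D_1|+|V\setminus D_2|\le \overleftrightarrow{MTT}$, then compare to $\sum_j\sum_{u\in D_j}(d(u)-1) \le 2\cdot 2m - \sum_j|D_j| \le 4m - (2n - \overleftrightarrow{MTT})$, and solve the resulting linear inequality in $\overleftrightarrow{MTT}$ and $n$ to extract $\overleftrightarrow{MTT}(G,\tau)\ge 2n/(\Delta+1)$.
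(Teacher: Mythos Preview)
Your sets $D_1,D_2$ are exactly the paper's $F_1,F_2$, and applying Lemma~\ref{Lem2} to both and adding is the right move. But the proposal never completes the argument, and the piece you keep circling around is where the actual content lies.

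The gap is your lower bound on $|E(G[D_1])|+|E(G[D_2])|$. Bounding via the global edge count $m$ (``each deleted vertex kills at most $\Delta$ edges'') is too lossy and is what produces your weak $n\le\Delta\cdot\overleftrightarrow{MTT}$. You also misidentify $D_1\cap D_2$: a vertex targeted even once (in $\mathcal{S}_e\triangle\mathcal{S}_o$) already drops out of one $D_j$, so $D_1\cap D_2=V\setminus(\mathcal{S}_e\cup\mathcal{S}_o)=:I$, not ``all vertices except doubly targeted ones''. With that corrected, the clean accounting is to focus on $I$ rather than on $m$: writing $S'''=\mathcal{S}_e\cap\mathcal{S}_o$, every edge incident to $I$ lies in $G[D_1]$ or $G[D_2]$ \emph{except} those going from $I$ to $S'''$, so
\[
|E(G[D_1])|+|E(G[D_2])|\;\ge\;2e_I+e_{I,(S\setminus S''')}\;=\;\sum_{u\in I}d(u)-e_{I,S'''}.
\]
Combining with your upper bound $|E(G[D_1])|+|E(G[D_2])|\le\frac12\bigl(\sum_{u\in D_1}+\sum_{u\in D_2}\bigr)(d(u)-1)=\sum_{u\in I}(d(u)-1)+\frac12\sum_{u\in S\setminus S'''}(d(u)-1)$ and $e_{I,S'''}\le\sum_{u\in S'''}d(u)\le|S'''|\Delta$ gives
\[
|I|\le|S\setminus S'''|\tfrac{\Delta-1}{2}+|S'''|\Delta,
\]
and then $n=|I|+|S\setminus S'''|+|S'''|$ rearranges to $\tfrac{2n}{\Delta+1}\le|S\setminus S'''|+2|S'''|\le\overleftrightarrow{MTT}(G,\tau)$. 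This is precisely the paper's route; your sketch had all the pieces except this edge-count centered on $I$.
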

\textit{Proof Sketch.}
Let $X$ and $Y$ be the partite sets of $G$. Assume that $\mathcal{S}_0,\mathcal{S}_1,\ldots,\mathcal{S}_k$ is a TTS of $(G,\tau)$. Let $\mathcal{S}_o$ and $\mathcal{S}_e$ be as defined in Lemma~\ref{Lem2}. Furthermore, define $S:=\mathcal{S}_o\cup \mathcal{S}_e,~~S':=\mathcal{S}_o\setminus \mathcal{S}_e,~~S'':=\mathcal{S}_e\setminus \mathcal{S}_o,~~S''':=\mathcal{S}_o\cap \mathcal{S}_e$. Let $\mathcal{S}_X:=S\cap X,~~\mathcal{S}^{\prime}_X:=S'\cap X,~~S''_X:=S''\cap X,~~S'''_X:=S'''\cap X$ (and 
similarly, define $\mathcal{S}_Y$, $\mathcal{S}^{\prime}_Y$, $\mathcal{S}^{\prime\prime}_Y$, and $\mathcal{S}^{\prime\prime\prime}_Y$).
Let $F_1:=I\cup \mathcal{S}^{\prime}_X\cup S''_Y$ and $F_2:=I\cup S''_X\cup \mathcal{S}^{\prime}_Y$, where $I:=V\setminus S$. Both $F_1$ and $F_2$ clearly satisfy the conditions of Lemma~\ref{Lem2}. Combining the two inequalities obtained from applying Lemma~\ref{Lem2} and some calculations, we get:

\begin{equation}\label{general-eq}
\begin{split}
&\sum_{u\in I}\left(2\tau(u)-d(u)\right)\leq \sum_{u\in S}\left(d(u)-\tau(u)\right)+\sum_{u\in S'''} \tau(u).
\end{split}
\end{equation}

Using the fact that for the strict majority threshold assignment, $\tau(u)\geq\frac{d(u)+1}{2}$ and some further calculations, we can conclude that $\frac{2n}{\Delta+1} \leq \left|S'\cup S''\right| +2\left|S'''\right|\leq \overleftrightarrow{MTT}(G,\tau)$. $\square$

\begin{thm}\label{bipartite}
Let $G$ be a graph with node set $V(G):=\{v_1,v_2,\ldots,v_n\}$ and $\tau$ be a threshold assignment of its nodes. Construct the bipartite graph $H$ with partite sets $X:=\{x_1,x_2,\ldots,x_n\}$ and $Y:=\{y_1,y_2,\ldots,y_n\}$ whose edge set is $E(H):=\{x_iy_j|v_iv_j\in E(G)\}$. Consider threshold assignment $\tau'$ for $H$ with $\tau'(x_i)=\tau'(y_i)=\tau(v_i)$ for $1\leq i\leq n$. Then, 
$\overleftrightarrow{MTT}(H,\tau^{\prime})= 2\overleftrightarrow{MTT}(G,\tau)$.
\end{thm}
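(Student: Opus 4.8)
\textbf{Proof proposal for Theorem~\ref{bipartite}.}

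The plan is to establish the two inequalities $\overleftrightarrow{MTT}(H,\tau') \le 2\overleftrightarrow{MTT}(G,\tau)$ and $\overleftrightarrow{MTT}(H,\tau') \ge 2\overleftrightarrow{MTT}(G,\tau)$ separately, exploiting the natural ``doubling'' correspondence between $G$ and $H$: each node $v_i$ of $G$ splits into the pair $x_i,y_i$ in $H$, and since $x_iy_j \in E(H)$ iff $v_iv_j \in E(G)$, the neighborhood of $x_i$ in $H$ is exactly $\{y_j : v_j \in N_G(v_i)\}$ (and symmetrically for $y_i$). In particular $d_H(x_i) = d_H(y_i) = d_G(v_i)$, so the constraint $0 \le \tau'(\cdot) \le d_H(\cdot)$ is respected, and the threshold conditions in the TTS definition translate verbatim between the two graphs under the index correspondence. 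The key structural observation I would isolate first is a parity/alternation phenomenon: if in $H$ we run the process from a configuration supported on $X$ (resp.\ on $Y$), then after one step it is supported on $Y$ (resp.\ on $X$), because every neighbor of an $X$-node lies in $Y$. So an ``$X$-copy'' of a $G$-process stays on $X$ at even times and moves to $Y$ at odd times.

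For the upper bound $\overleftrightarrow{MTT}(H,\tau') \le 2\overleftrightarrow{MTT}(G,\tau)$, I would take an optimal TTS $\mathcal{S}_0,\ldots,\mathcal{S}_k$ of $(G,\tau)$ with associated sequence $Q_0,\ldots,Q_k$ (so $Q_0 = \emptyset$, $Q_k = V$, $\mathcal{S}_k = \emptyset$, and membership in $Q_i$ governed by $\tau$). I then build a TTS of $H$ by running two ``shifted copies'' of this process in tandem on the two sides: define, for each $i$, the $H$-targeting set to consist of $\{x_j : v_j \in \mathcal{S}_i\}$ placed at a step with the right parity and $\{y_j : v_j \in \mathcal{S}_i\}$ placed at a step of the opposite parity. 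Concretely, one copy lives on the even-indexed $H$-steps using $X$ at even times and $Y$ at odd times, and the second copy is shifted by one step. One must check that the two interleaved copies do not interfere: because a node $x_i$ activated ``by the $X$-copy at an even step'' only ever looks at $Y$-neighbors, and the $Y$-copy occupies $Y$ at exactly the complementary parity within the doubled timeline, the activations compose additively. After $O(k)$ steps both copies have driven all of $X$ and all of $Y$ positive simultaneously, and the total targeting cost is $2\sum_i|\mathcal{S}_i| = 2\overleftrightarrow{MTT}(G,\tau)$. I would also verify we can arrange a final all-empty set $\mathcal{S}_{k'} = \emptyset$ with $Q_{k'} = V(H)$, which is immediate once both copies have converged and one notes that $Q = V(H)$ is a fixed point (every node has all its neighbors positive, so the threshold is met).

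For the lower bound $\overleftrightarrow{MTT}(H,\tau') \ge 2\overleftrightarrow{MTT}(G,\tau)$ — which I expect to be the main obstacle — I would take an optimal TTS $\mathcal{T}_0,\ldots,\mathcal{T}_k$ of $(H,\tau')$ and \emph{project} it down to a TTS of $G$ by mapping both $x_i$ and $y_i$ to $v_i$. The subtlety is that a single $H$-TTS carries, in effect, two intertwined pieces of information (the $X$-side trace and the $Y$-side trace), and projecting naively collapses them, so the projected sequence need not itself be a valid $G$-TTS of half the size. The clean way around this is to argue that the $H$-process decouples into two independent subprocesses — one on $X \cup Y$ evolving from even-time $X$-targets, one from odd-time $X$-targets — by the parity observation above, so that the $Q_i$ in $H$ is (at even $i$) a disjoint union of an $X$-part and a $Y$-part each of which is the image of a legitimate $G$-configuration; then $Q_k = V(H)$ forces \emph{both} the $X$-trace and the $Y$-trace, viewed in $G$, to reach $V(G)$, each of which requires a $G$-TTS, hence total cost at least $2\overleftrightarrow{MTT}(G,\tau)$. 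Making ``each of the two traces is a genuine $G$-TTS'' precise is the crux: I would define the $X$-trace $G$-sequence by $\mathcal{S}_i^{(X)} := \{v_j : x_j \in \mathcal{T}_{2i}\} \cup \{v_j : y_j \in \mathcal{T}_{2i+1}\}$ with the $Q$-sequence read off from the $H$-run restricted to the appropriate side at each parity, check conditions (i)--(iv) of Definition~\ref{DTS} hold in $(G,\tau)$ using $d_H(x_j) = d_H(y_j) = d_G(v_j)$ and $\tau'(x_j) = \tau'(y_j) = \tau(v_j)$, do the analogous thing for the $Y$-trace with the parities swapped, observe the two traces partition the multiset $\bigcup_i \mathcal{T}_i$, and conclude $\sum_i |\mathcal{T}_i| = (\text{cost of } X\text{-trace}) + (\text{cost of } Y\text{-trace}) \ge 2\overleftrightarrow{MTT}(G,\tau)$. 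The only remaining bookkeeping is to handle the boundary steps (the very first step where $Q_0 = \emptyset$ and the last where $Q_k = V(H)$) and to confirm that convergence of the $H$-run on all of $V(H)$ indeed forces \emph{both} side-traces to converge rather than, say, one side coasting on help from the other — which again follows from the parity/bipartiteness observation that $X$-nodes only ever receive influence from $Y$-nodes and vice versa, so the two traces are truly independent runs of the $G$-dynamics.
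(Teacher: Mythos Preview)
Your overall strategy---prove both inequalities via the parity decoupling of the bipartite double cover---is exactly the paper's. For the upper bound you are making life harder than necessary: the paper simply sets $\mathcal{S}'_j:=\{x_i:v_i\in\mathcal{S}_j\}\cup\{y_i:v_i\in\mathcal{S}_j\}$ for every $j$ and checks by a one-line induction that the associated $Q'_j$ is $\{x_i,y_i:v_i\in Q_j\}$; no shifting or interleaving is needed, since the two ``copies'' run in lockstep and never interfere. Your more elaborate description with shifted copies can be made to work, but it is not required.

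The genuine gap is in your lower-bound construction. Your $X$-trace $\mathcal{S}_i^{(X)}=\{v_j:x_j\in\mathcal{T}_{2i}\}\cup\{v_j:y_j\in\mathcal{T}_{2i+1}\}$ \emph{halves} the time axis, and that collapsed sequence is in general not a TTS for $(G,\tau)$. Concretely, take $G=P_3$ with $\tau\equiv 1$; then $H$ is two disjoint copies of $P_3$, and $\mathcal{T}_0=\{x_2,y_2\}$, $\mathcal{T}_1=\{x_2,y_2\}$, $\mathcal{T}_2=\emptyset$ is an optimal TTS of $H$. Your formula gives $\mathcal{S}_0^{(X)}=\{v_2\}$ and $\mathcal{S}_1^{(X)}=\emptyset$, which is not a TTS of $G$ (the process oscillates). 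The decoupling you identified is correct, but it operates at the \emph{same} time scale: the sequence that actually works is $\mathcal{S}_j:=\{v_i:x_i\in\mathcal{T}_j\}$ for even $j$ and $\mathcal{S}_j:=\{v_i:y_i\in\mathcal{T}_j\}$ for odd $j$, of length $k+1$. This is precisely what the paper does: it splits $\mathcal{T}_0,\ldots,\mathcal{T}_k$ into the two alternating sequences $(\mathcal{T}_0\cap X,\mathcal{T}_1\cap Y,\mathcal{T}_2\cap X,\ldots)$ and $(\mathcal{T}_0\cap Y,\mathcal{T}_1\cap X,\mathcal{T}_2\cap Y,\ldots)$, observes (via the parity argument you stated) that each, mapped index-by-index to $G$, is a bona fide TTS of $(G,\tau)$, and then takes the cheaper of the two to conclude $\overleftrightarrow{MTT}(G,\tau)\le\tfrac12\overleftrightarrow{MTT}(H,\tau')$. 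Once you drop the time-halving, your argument and the paper's coincide.
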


Combining Theorems~\ref{main bound} and~\ref{bipartite} gives us Theorem~\ref{bound}.
\begin{thm}\label{bound}
$\overleftrightarrow{MTT}(G, \tau)\geq\frac{2n}{\Delta(G)+1}$ if $\tau$ is the strict majority.
\end{thm}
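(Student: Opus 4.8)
The plan is to reduce the general case to the bipartite case already handled in Theorem~\ref{main bound}, using the bipartite double cover construction of Theorem~\ref{bipartite}. Given an arbitrary graph $G$ on $n$ nodes with strict majority threshold $\tau$, I would form the bipartite graph $H$ with partite sets $X,Y$ of size $n$ each and edges $x_iy_j$ whenever $v_iv_j\in E(G)$, together with the threshold assignment $\tau'$ defined by $\tau'(x_i)=\tau'(y_i)=\tau(v_i)$. The crucial preliminary observation is that $\tau'$ is again the strict majority threshold on $H$: by construction $d_H(x_i)=d_H(y_i)=d_G(v_i)$, so $\tau'(x_i)=\tau(v_i)=\lceil(d_G(v_i)+1)/2\rceil=\lceil(d_H(x_i)+1)/2\rceil$, and likewise for $y_i$. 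Hence $H$ together with $\tau'$ meets the hypotheses of Theorem~\ref{main bound}.

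Next I would simply chain the two results. Theorem~\ref{main bound} applied to $(H,\tau')$ gives $\overleftrightarrow{MTT}(H,\tau')\geq \frac{2|V(H)|}{\Delta(H)+1}$. Here $|V(H)|=2n$, and since every degree in $H$ equals the corresponding degree in $G$, we have $\Delta(H)=\Delta(G)$. Therefore $\overleftrightarrow{MTT}(H,\tau')\geq \frac{4n}{\Delta(G)+1}$. On the other hand, Theorem~\ref{bipartite} tells us $\overleftrightarrow{MTT}(H,\tau')=2\overleftrightarrow{MTT}(G,\tau)$. Combining, $2\overleftrightarrow{MTT}(G,\tau)\geq \frac{4n}{\Delta(G)+1}$, which upon dividing by $2$ yields exactly $\overleftrightarrow{MTT}(G,\tau)\geq \frac{2n}{\Delta(G)+1}$, as desired.

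I anticipate no real obstacle in this argument, since both ingredients are already established; the only point requiring a moment of care is verifying that the transported threshold $\tau'$ is still strict majority on $H$ (so that Theorem~\ref{main bound} genuinely applies) and that the degree identity $\Delta(H)=\Delta(G)$ holds — both of which follow immediately from $d_H(x_i)=d_H(y_i)=d_G(v_i)$. One should also double-check the edge case where $G$ has an isolated vertex or $\Delta(G)=0$, but in that regime the bound is either trivial or $n$ itself, which is consistent. Thus the proof is essentially a two-line deduction from the earlier theorems once the strict-majority property is checked for the double cover.
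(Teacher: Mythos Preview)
Your proposal is correct and follows essentially the same approach as the paper: construct the bipartite double cover $(H,\tau')$ of Theorem~\ref{bipartite}, observe that $\tau'$ is again strict majority since degrees are preserved, apply Theorem~\ref{main bound} to $(H,\tau')$, and divide by two using $\overleftrightarrow{MTT}(H,\tau')=2\overleftrightarrow{MTT}(G,\tau)$. The only difference is cosmetic --- you make explicit the degree and threshold verifications that the paper leaves as ``straightforward to observe.''
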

\textbf{Tightness.} According to Example~\ref{example}, $\overleftrightarrow{MTT}(K_{1,n-1},\tau)=2$ when $\tau$ is the strict majority. This implies that the bound of $\overleftrightarrow{MTT}(K_{1,n-1},\tau)\geq\frac{2n}{\Delta(K_{1,n-1})+1}=\frac{2n}{(n-1)+1}=2$ is tight.

\subsection{Even Graphs}\label{even_graphs}
 
\begin{thm}\label{bound for general even}
$\overleftrightarrow{MTT}(G,\tau)\geq\frac{4n}{\Delta+2}$ if $G$ is even and $\tau$ is the strict majority, and this bound is tight.
\end{thm}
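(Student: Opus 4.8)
The plan is to mimic the strategy used for the bipartite bound (Theorem~\ref{main bound}) but exploit the evenness of the graph to sharpen every inequality where the ceiling in the strict‐majority threshold was previously ``wasted.'' Concretely, I first reduce to the bipartite case: given an even graph $G$, the double‐cover construction $H$ of Theorem~\ref{bipartite} is again even (each $x_i$ and $y_i$ has degree $d(v_i)$, which is even), and $\overleftrightarrow{MTT}(H,\tau')=2\overleftrightarrow{MTT}(G,\tau)$. Also note $\Delta(H)=\Delta(G)$. So it suffices to prove $\overleftrightarrow{MTT}(H,\tau')\ge \frac{8n}{\Delta+2}$ for the even bipartite graph $H$ on $2n$ vertices — equivalently, to prove the bound $\overleftrightarrow{MTT}(G,\tau)\ge \frac{4n}{\Delta+2}$ directly for even bipartite graphs and then transfer it. Either way the real work is a strengthened version of Theorem~\ref{main bound} for even bipartite graphs.

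For that strengthened version I would reuse Lemma~\ref{Lem2} verbatim (it makes no use of strict majority; it only needs $\tau(u)\le d(u)$), and then redo the calculation of Theorem~\ref{main bound} keeping track of the fact that for an even graph with strict majority we have $\tau(u)=\lceil (d(u)+1)/2\rceil = d(u)/2 + 1$, so that $2\tau(u)-d(u)=2$ exactly (rather than just $\ge 1$) and $d(u)-\tau(u)=d(u)/2-1$. Plugging these exact identities into inequality~\eqref{general-eq} gives, for the sets $I=V\setminus S$ and $S',S'',S'''$ of that proof,
\begin{equation*}
2|I| \;\le\; \sum_{u\in S}\Bigl(\tfrac{d(u)}{2}-1\Bigr) + \sum_{u\in S'''}\Bigl(\tfrac{d(u)}{2}+1\Bigr)\;\le\;\tfrac{\Delta}{2}\bigl(|S'|+|S''|\bigr) + \Bigl(\tfrac{\Delta}{2}+1\Bigr)\cdot 2|S'''| - |S'\cup S''|,
\end{equation*}
where I used $|S|=|S'\cup S''|+|S'''|$ and $S=S'\sqcup S''\sqcup S'''$ to split the first sum. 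Rearranging with $n = |I| + |S'\cup S''| + |S'''|$ and bookkeeping the coefficients, the gain over the old proof is exactly the replacement of $\Delta+1$ by $\tfrac{\Delta+2}{2}$ in the denominator, yielding $\overleftrightarrow{MTT}(G,\tau)\ge \frac{4n}{\Delta+2}$ for even bipartite $G$; then Theorem~\ref{bipartite} lifts it to all even graphs.

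The main obstacle, I expect, is not the algebra but making sure the reduction of Theorem~\ref{bipartite} is compatible with evenness and with the strict‐majority assignment \emph{simultaneously} — one must check that $\tau'$ on $H$ is still the strict majority (true, since $d_H(x_i)=d_G(v_i)$ and $\tau'(x_i)=\tau(v_i)$) and that $H$ is even (true by the same degree identity), so that the strengthened bipartite bound actually applies to $H$. A secondary subtlety is the handling of the $|S'''|$ term: nodes targeted in both an even and an odd step contribute twice to the size of the TTS, and the coefficient $\tfrac{\Delta}{2}+1$ in front of $2|S'''|$ must be checked to be no larger than the per‐unit-size contribution $\tfrac{\Delta+2}{2}$ we are charging — which it is, with equality, so the bound is clean.

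For tightness I would exhibit the cycle $C_n$ (even, $\Delta=2$, strict majority $\tau(v)=2$): here $\frac{4n}{\Delta+2}=n$, and indeed no vertex can ever be made positive ``for free'' since it would need both neighbors positive, so $\overleftrightarrow{MTT}(C_n,\tau)=n$ — this is precisely the improvement advertised in the introduction over the weaker bound $2n/3$. (More generally the $d$-dimensional torus gives tightness for larger $\Delta$, following the same observation that with strict majority on an even graph every vertex needs strictly more than half its neighbors, i.e. half plus one, positive, leaving no slack.) A short verification that the all-vertices-targeted-once sequence $\mathcal{S}_0=V,\ \mathcal{S}_1=\emptyset$ is a valid TTS of size $n$ completes the tightness claim.
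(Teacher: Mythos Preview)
Your approach is essentially identical to the paper's: reduce to the bipartite case via Theorem~\ref{bipartite} (checking, as you do, that the double cover stays even with strict majority), then rerun the calculation behind Theorem~\ref{main bound} using the exact identity $2\tau(u)-d(u)=2$ in place of the inequality $\ge 1$.

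There is, however, a small algebraic slip in your displayed bound. The two sums over $S'''$ combine to $\sum_{u\in S'''}d(u)\le \Delta|S'''|$, not $(\Delta+2)|S'''|$; your expression $\bigl(\tfrac{\Delta}{2}+1\bigr)\cdot 2|S'''|$ overcounts by $2|S'''|$. With your version, after substituting $|I|=n-|S'\cup S''|-|S'''|$ you get $2n\le \tfrac{\Delta+2}{2}|S'\cup S''|+(\Delta+4)|S'''|$, which is too weak to conclude $\tfrac{4n}{\Delta+2}\le |S'\cup S''|+2|S'''|$. With the correct coefficient $\Delta$ on $|S'''|$ (exactly as the paper does), the rearrangement goes through cleanly. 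This is a one-line fix and does not affect the strategy.

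On tightness: your cycle example $C_n$ is correct and is indeed the motivating case mentioned in the introduction, but it only witnesses equality at $\Delta=2$. The paper instead uses $K_{2,2\ell}$ with $\mathcal{S}_0=\mathcal{S}_1=\{v_1,v_2\}$, $\mathcal{S}_2=\emptyset$, which gives $\overleftrightarrow{MTT}=4=\tfrac{4n}{\Delta+2}$ for every even $\Delta=2\ell$, so tightness holds across the whole range of $\Delta$. Your aside about the $d$-dimensional torus achieving equality for larger $\Delta$ is not justified and is in fact not obviously true; you should either verify it or drop it in favor of the $K_{2,2\ell}$ family.
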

\textit{Proof Sketch.}
It suffices to prove the bound for bipartite graphs. Then, we can apply Theorem~\ref{bipartite} to extend to general graphs. Thus, let $G$ be bipartite.

Since $d(u)$ is an even number for any node $u$, we have $\tau(u)=\left\lceil\frac{d(u)+1}{2}\right\rceil=\frac{d(u)+2}{2}$ which implies $2\tau(u)-d(u)= 2$. Plugging this into Equation~\eqref{general-eq} in the proof of Theorem~\ref{main bound} yields $2|I|\leq \sum_{u\in S}\left(d(u)-\tau(u)\right)+\sum_{u\in S'''} \tau(u)$. Executing some calculations similar to the proof of Theorem~\ref{main bound}, we can show that $\frac{4n}{\Delta+2} \leq \left|S'\cup S''\right| +2\left|S'''\right|\leq \overleftrightarrow{MTT}(G,\tau)$. For a full proof of the theorem (including its tightness), see Appendix~\ref{appendix-even}. $\square$

\section{Algorithms to Find Minimum TTS}\label{algorithms}
We first prove that the \textsc{Timed Target Set Selection} problem is hard to approximate. In Subsections~\ref{Integer programming approach} and~\ref{hardness_greedy}, we provide an ILP formulation and propose a greedy algorithm for the problem, respectively. Then, we provide our experimental findings for real-world and synthetic graph data in Subsection~\ref{sec:experiments}. Finally, we present our linear time algorithm for trees in Subsection~\ref{Timed Target Set in Trees}.\\

\noindent\textsc{Timed Target Set Selection}
\\
\textbf{Input}: A graph $G$ and a threshold assignment $\tau$.
\\
\textbf{Output}: The minimum size of a TTS, i.e., $\overleftrightarrow{MTT}(G,\tau)$. \\

\subsection{Hardness Result}
\label{sec:hardness}

\begin{thm}
\label{hardness-thm}
The \textsc{Timed Target Set Selection } problem cannot be approximated within a ratio of $\mathcal{O}(2^{\log^{1-\epsilon}n})$ for any fixed $\epsilon>0$, unless $NP\subseteq DTIME(n^{polylog(n)})$.
\end{thm}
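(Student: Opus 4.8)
The plan is to prove Theorem~\ref{hardness-thm} by a reduction from the analogous inapproximability result for the progressive variant, \textsc{Target Set Selection}, which (as recalled in the Related Work and attributed to~\cite{C}) cannot be approximated within $\mathcal{O}(2^{\log^{1-\epsilon}n})$ unless $NP\subseteq DTIME(n^{polylog(n)})$, even for regular graphs with the simple majority threshold. Since the excerpt already observes that timing gives the manipulator no extra power in the progressive model, the crux is to build, from a progressive instance, a non-progressive instance in which any timed target set is forced to behave essentially like a progressive target set --- i.e., in which re-targeting and re-visiting nodes buys nothing, so that $\overleftrightarrow{MTT}$ of the new instance equals (or is within a constant factor of, plus a lower-order additive term) $\overrightarrow{MT}$ of the original.

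First I would take an instance $(G,\tau)$ of progressive \textsc{Target Set Selection} and construct a gadget-augmented pair $(G',\tau')$ as follows. The idea is to attach to each node $v$ a private "locking" gadget --- for instance a sufficiently large clique or a set of pendant structures joined to $v$ --- and to tune $\tau'$ so that (a) once a node in the original copy of $G$ turns positive, its gadget fills up and permanently pins it positive, making its state effectively monotone, and (b) the gadget nodes themselves are cheap to activate from their anchor but expensive to activate directly, so an optimal $\overleftrightarrow{MTT}$ never wastes budget targeting them. Concretely I would aim for the clean identity $\overleftrightarrow{MTT}(G',\tau') = \overrightarrow{MT}(G,\tau) + c$ for an explicitly known constant or lower-order quantity $c$ (often $c=0$ after a careful gadget), and with $|V(G')| = \mathrm{poly}(|V(G)|)$, so that the $2^{\log^{1-\epsilon} n}$ bound transfers up to the change of variable $n \mapsto \mathrm{poly}(n)$, which only affects $\epsilon$ and is absorbed by quantifying over all fixed $\epsilon > 0$.

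The two directions of correctness are then: (i) given a progressive target set $\mathcal{S}$ for $(G,\tau)$, the timed sequence $\mathcal{S}_0 = \mathcal{S}$, $\mathcal{S}_i = \emptyset$ for $i\ge 1$ (suitably extended so that $\mathcal{S}_k=\emptyset$, $Q_k = V(G')$) is a valid TTS for $(G',\tau')$ of the same size, because the gadgets guarantee that every node that ever turns positive stays positive, so the non-progressive dynamics on $G'$ simulate the progressive dynamics on $G$; (ii) conversely, from any TTS for $(G',\tau')$ one extracts a progressive target set for $(G,\tau)$ of no larger size --- here I would argue that targeting a gadget node is never better than targeting (or eventually not needing) its anchor, so we may assume the TTS only targets nodes of $G$, and that because states in $G'$ are pinned monotone, all the targeting may be pushed to step $0$ without loss, recovering a progressive target set.

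The main obstacle is engineering the gadget so that both properties hold \emph{simultaneously under the strict (or simple) majority rule}: making a node "sticky" once positive requires its gadget to supply more than half its neighbors, but enlarging the gadget also inflates the node's degree and hence its majority threshold, which can in turn make the original edges of $G$ insufficient to ever turn the node positive in the first place --- a balancing act familiar from threshold-model reductions. I expect to resolve this by using gadgets that are internally self-reinforcing (e.g. each gadget is a clique whose internal majority is met as soon as the anchor plus one seed are positive) and by padding degrees uniformly, possibly splitting each original edge through a small gadget so that the thresholds stay controllable; verifying that the resulting non-progressive trajectory is genuinely monotone on the relevant nodes, and that no clever re-targeting schedule can undercut $\overrightarrow{MT}(G,\tau)$, is the step that will require the most care. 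A secondary, milder point is checking that the reduction is polynomial-time computable and that the blow-up in $n$ is polynomial, so the inapproximability exponent survives; this is routine once the gadget sizes are fixed.
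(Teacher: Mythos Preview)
Your high-level plan---reduce from progressive \textsc{Target Set Selection} by attaching to each node a ``locking'' gadget so that positivity becomes monotone, then argue $\overleftrightarrow{MTT}$ of the new instance equals $\overrightarrow{MT}$ of the old---is exactly the paper's approach. The difference is that you are creating an obstacle that is not there: you insist on preserving (strict or simple) majority thresholds in the constructed instance, and this is what makes your gadget design feel delicate. But the \textsc{Timed Target Set Selection} problem, as stated, takes an \emph{arbitrary} threshold assignment as input; the hardness result is not claimed for majority thresholds specifically. Once you drop that self-imposed constraint, the gadget becomes trivial.

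The paper simply attaches to each $v\in V(H)$ exactly $\lceil d_H(v)/2\rceil$ disjoint copies of $K_2$, joining both endpoints of each copy to $v$; every gadget node gets threshold $1$, and every original node keeps its original threshold $\tau(v)=\tau'(v)$. As soon as $v$ is positive, all its attached $K_2$ nodes become positive in the next step; thereafter each $K_2$ pair sustains itself, and together they supply $2\lceil d_H(v)/2\rceil\ge d_H(v)\ge \tau'(v)$ positive neighbours to $v$, pinning $v$ positive permanently. For the converse direction, any TTS that targets a gadget node $u$ attached to $v$ at time $t$ can be replaced by targeting $v$ at time $t+1$, so w.l.o.g.\ only original nodes are targeted; their union over all steps is then a progressive target set for $(H,\tau')$. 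This gives the exact identity $\overleftrightarrow{MTT}(G,\tau)=\overrightarrow{MT}(H,\tau')$ with $|V(G)|=\mathcal{O}(|V(H)|^2)$, and the $2^{\log^{1-\epsilon}n}$ bound transfers as you already noted. Your whole ``balancing act'' paragraph is addressing a non-issue.
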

\textit{Proof Sketch.}	
For a given pair $(H,\tau')$, construct $(G,\tau)$ as follows. For each node $v\in V(H)$, add $\left\lceil\frac{d(v)}{2}\right\rceil$ copies of the complete graph $K_2$ and connect both nodes of each copy to $v$. Set $\tau(v)=\tau^{\prime}(v)$ if $v\in V(H)$ and $\tau(v)=1$ otherwise. We claim that $\overrightarrow{MT}(H,\tau')=\overleftrightarrow{MTT}(G,\tau)$.

Assume that there is a polynomial time algorithm for finding $\overleftrightarrow{MTT}(G,\tau)$ with the approximation ratio $C2^{\log^{1-\epsilon}|V(G)|}$ for some constants $C,\epsilon>0$. Since $|V(G)|=\mathcal{O}(|V(H)|^2)$, the above polynomial time reduction gives us a polynomial time algorithm with approximation ratio $C'2^{\log^{1-\epsilon'}|V(H)|}$, for some constants $\epsilon',C'>0$, for the problem of finding the minimum size of a TS in the progressive model. However, this is not possible according to the results from~\cite{C}, unless $NP\subseteq DTIME(n^{polylog(n)})$. (Please refer to Appendix~\ref{appendix-hardness} for a full proof.)
$\square$

\subsection{Integer Linear Programming Formulation}\label{Integer programming approach}
Here, we provide an Integer Linear Program (ILP) formulation for the \textsc{Timed Target Set Selection} problem. The binary variables $x_{vi}$ and $y_{vi}$ stand for the state of node $v$ in time step $i$. We have $x_{vi}=1$ if and only if $v\in \mathcal{S}_i$ and $y_{vi}=1$ if and only if $v\in Q_i$. So $x_{vi}+y_{vi}\geq 1$ if and only if $v$ is positive in time step $i$, i.e., $v\in\mathcal{A}_i$. Let $\mathbb{K}:=\{1,\cdots, k\}$ and $\mathbb{K}_0:=\mathbb{K}\cup\{0\}$. Furthermore, let us define the constraints
\begin{itemize}
    \item $C:=(d(v)+1-\tau(v))y_{vi}+\tau(v)-1 \ge \sum_{u\in N(v)} (x_{u(i-1)}+ y_{u(i-1)})$
    \item $C':=\sum_{u\in N(v)}(x_{u(i-1)}+y_{u(i-1)})- \tau(v)y_{vi}\geq 0$.
\end{itemize}
\begin{equation}\label{IP}
	\begin{array}{lll}
		\min & \sum_{i=0}^{k}\sum_{v\in V}x_{vi} \\
		%{\rm s.t.}\\
		{\rm s.t.} & C & \forall v\in V~~\forall i\in \mathbb{K} \\
		& C' & \forall v\in V~~\forall i\in \mathbb{K} \\
		& x_{vk}+y_{vk}=1 & \forall v\in V \\
            & x_{vi}+y_{vi}\leq 1 & \forall v\in V, i\in\mathbb{K}_0 \\
		& y_{v0}=0 & \forall v\in V \\
		& x_{vi}\in\{0,1\} & \forall v\in V~~\forall i\in \mathbb{K}_0\\
		& y_{vi}\in\{0,1\} & \forall v\in V~~\forall i\in \mathbb{K}_0
	\end{array}
\end{equation}

If the number of positive neighbors of a node $v$ in time step $i-1$ is greater than or equal to $\tau(v)$, then $v$ becomes positive in time step $i$. This is expressed as constraint $C$ in the ILP. Note that for $y_{vi}=0$, the constraint $C$ is equal to $\tau(v)>\sum_{u\in N(v)} (x_{u(i-1)}+ y_{u(i-1)})$. Furthermore, if the number of positive neighbors of a node $v$ in time step $i-1$ is strictly less than $\tau(v)$, then $v$ becomes negative in time step $i$ (unless we force it to be positive i.e., $x_{vi}=1$). This is expressed as the constraint $C'$. We observe that if $y_{vi}=1$, then the constraint $C'$ is equal to $\sum_{u\in N(v)}(x_{u(i-1)}+y_{u(i-1)})\ge \tau(v)$. The other constraints and the objective function are self-explanatory. 

The ILP~\eqref{IP} formulates the problem of finding the minimum size of a TTS which reaches the fully positive configuration in $k$ steps. To prove this, we need to show that a solution to the ILP corresponds to a TTS of the same size and vice versa. A formal proof of this is given in Appendix~\ref{appendix-ilp}, which builds on the observations from above on the connections between the constraints of the ILP and the updating rules.

It is known~\cite{poljak1986pre} that the non-progressive model stabilizes in $\mathcal{O}(n^2)$ steps. Putting this in parallel with the fact that a minimum TTS is of size at most $n$, we conclude that there is always a minimum TTS which reaches the fully positive configuration in $\mathcal{O}(n^3)$ steps. (Some details are left out.) Thus, the ILP~\eqref{IP} can be used to solve the 
 \textsc{Timed Target Set Selection} problem by ranging over different values of $k$.

\subsection{Greedy Algorithm}\label{hardness_greedy}
We provide a greedy algorithm which finds a TTS $\mathcal{S}_0$, $\mathcal{S}_1, \emptyset$ for a given pair $(G,\tau)$. The algorithm first sorts the nodes in ascending order of their degrees as $v_1,\cdots, v_n$ and set $\mathcal{S}_0=\emptyset, \mathcal{S}_1=\emptyset$. Then, nodes are processed one by one, and they are decided to be in $\mathcal{S}_0$, in $\mathcal{S}_1$, or in neither of them. The processed nodes which are in neither $\mathcal{S}_0$ nor $\mathcal{S}_1$ are called \textit{unselected} nodes. When processing a node $v_i$, a neighbor $u\in N(v_i)$ is said to be \textit{blocked} if the number of unselected nodes in $N(u)$ is equal to $d(u)-\tau(u)$ (i.e., if $v_i$ is set to be unselected, $u$ cannot become positive in the first step when only the nodes in $\mathcal{S}_0$ are positive). Let $\texttt{blocked}[v_i]$ be the set of blocked nodes in $N(v_i)$. If $|\texttt{blocked}[v_i]|=0$, we can safely set $v_i$ to be unselected. If $|\texttt{blocked}[v_i]|>1$, then we set $v_i$ to be in $\mathcal{S}_0$. If $|\texttt{blocked}[v_i]|=1$, we could either add $v_i$ to $\mathcal{S}_0$ or set $v_i$ to be unselected. For the latter, we add the only node in $\texttt{blocked}[v_i]$, say $w$, to $\mathcal{S}_1$. If $d(w)>d(v_i)$, we do the latter (since $w$ is more ``influential''), otherwise we execute the former. A precise description of the algorithm is given in Algorithm~\ref{modified greedy}. It is straightforward to prove that Algorithm~\ref{modified greedy} returns a TTS, using an inductive argument. For the sake of completeness, a formal proof is given in Appendix~\ref{appendix-greedy-correctness}.

\begin{algorithm}[t]
	\caption{Greedy Algorithm for TTS}
        \label{modified greedy}
	Sort the nodes of $G$ in ascending order of their degrees as the sequence $v_1,\ldots,v_n$.
	
	Set $\mathcal{S}_0=\emptyset$, $\mathcal{S}_1=\emptyset$, and ${\normalfont\texttt{unselected}}[v_i]=0$, ${\normalfont\texttt{blocked}}[v_i]=\emptyset$ for all $1\le i\le n$.

	\For{$i=1$ {\normalfont to} $n$}{
		\For{ $u\in N(v_i)$}{
			\uIf{${\normalfont\texttt{unselected}}[u]=d(u)-\tau(u)$}{add $u$ to ${\normalfont\texttt{blocked}}[v_i]$
			}
		}

		\uIf{$|{\normalfont\texttt{blocked}}[v_i]|=0$}{\For{$u\in N(v_i)$}{${\normalfont\texttt{unselected}}[u]+=1$}}
  
		\uIf{${|\normalfont\texttt{blocked}}[v_i]|>1$}{Add $v_i$ to $\mathcal{S}_0$}
		\uIf{${|\normalfont\texttt{blocked}}[v_i]|=1$}{\For{$w\in {\normalfont\texttt{blocked}}[v_i]$} {\uIf{$d(w)>d(v_i)$} {\For{$u\in N(v_i)$} {${\normalfont\texttt{unselected}}[u]+=1$} Add $w$ to $\mathcal{S}_1$, set $\tau(w)=0$}
				
				\Else{Add $v_i$ to $\mathcal{S}_0$}}}
	}
\end{algorithm}

Our algorithm is inspired by the well-known greedy algorithm for finding a TS (cf.~\cite{FG}) where there is no $\mathcal{S}_1$ and node $v_i$ is assigned to $\mathcal{S}_0$ if $|\texttt{blocked}[v_i]|\ge 1$ and set to be unselected otherwise. The final $\mathcal{S}_0$ is a TS which make all nodes positive in one step of the non-progressive model.

We did not provide any approximation guarantee for our algorithm, but according to Theorem~\ref{hardness-thm} we cannot hope for an approximation ratio better than $\mathcal{O}(2^{\log^{1-\epsilon}n})$ for any fixed $\epsilon>0$. Furthermore, our algorithm takes advantage of timing only for one extra step, but as we observe in the next subsection this would already allow us to produce solutions close to optimal. Devising algorithms which fully take advance of timing is left to future work.

\subsection{Experiments}
\label{sec:experiments}

Our experiments were carried out on an Intel Xeon E3 CPU, with 32 GB RAM, and a Linux OS and the code is in C++ and Python.

\textbf{Synthetic Networks.} We conducted experiments on BA (Barab\'{a}si–Albert) model~\cite{barabasi1999emergence} and ER (Erd\H{o}s–R\'{e}nyi) model~\cite{erdHos2013spectral}. The number of nodes were set to $n=10,15,\cdots, 40$ and the edge parameter, in both models, was set to obtain average degree $8$ (which is comparable to the average degree of the real-world networks of similar size, for example, see Karate club network~\cite{kunegis2013konect}). For each value of $n$, 10 instances of the random graph (BA and ER) were generated. Then, the optimal size of a TS and TTS and the outcome of the original greedy algorithm for TS and our algorithm for TTS were computed. The optimal solutions were found using the ILP formulations and a CBC solver~\cite{forrest2005cbc}. We considered the strict majority threshold assignment in all these experiments. The average outcome of each approach over all 10 graph instances are reported in Figure~\ref{opt:fig}. (The standard deviations were very small, namely smaller than $0.7$ in all cases.) Let us make the following two observations from Figure~\ref{opt:fig}:
\begin{itemize}
    \item \textbf{Observation 1.} Our proposed greedy algorithm performs very well and returns a solution very close to the optimal one in most cases.
    \item \textbf{Observation 2.} In most cases, the minimum size of a TTS (retuned by TTS-OPT) is strictly smaller than the minimum size of a TS (returned by TS-OPT). This confirms that the effect of timing is not limited to theoretically tailored graphs (such as the ones given in Example~\ref{example} and Theorem~\ref{disjoint_timed}).
\end{itemize}
\begin{figure}
\centerline{\includegraphics[height=3.2in]{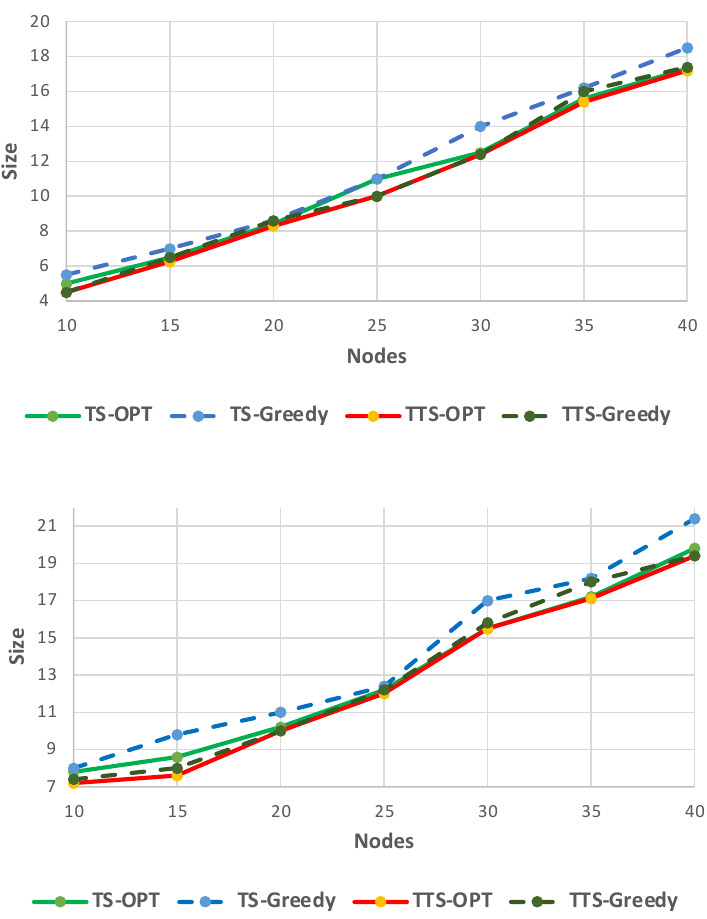}}
\caption{The average size of a TS/TTS retuned by ILP (TS-OPT and TTS-OPT) and greedy approach (TS-Greedy and TTS-Greedy) on BA graphs (top) and ER graphs (bottom). \label{opt:fig}}
\end{figure}

\textbf{Real-world Networks.}
We also have run our algorithm and the original greedy algorithm on different real-world networks, namely Twitter, Facebook, and Twitch Games from~\cite{snapnets} where we removed edge directions for Twitter. We again used the strict majority threshold assignment. (Note that computing the optimal solutions are not possible due to the large size of networks. However, based on diagrams in Figure~\ref{opt:fig}, the greedy algorithms seem to return solutions close to the optimal ones.) The outcomes, presented in Table~\ref{table}, demonstrate that allowing a manipulator who conducts a greedy approach (which is the most commonly proposed mechanism in the literature for various target set selection problems, cf.~\cite{kempe,FG}) to target nodes at their desired time gives them more than $13\%$ advantage for the graphs under experiment, indicating another time that timing matters, even on graph which emerge in the real world.

\begin{table}
\centering
\begin{tabular}{ |l|c|c|c|c|c| } 
\hline
Network & Nodes & TTS-Greedy & TS-Greedy & Imp. \\
\hline
Facebook & 4039 & 1727 & 1985 & 13\%   \\ 
Twitter & 81306 & 25022 & 28991 & 13.7\% \\ 
Twitch Games & 168114 & 44795 & 53726 & 16.6\%  \\ 
% Arxiv & 9875 & 5670 & 5749 & 1.3\%  \\ 
\hline
\end{tabular}
\caption{The size of a TS obtained by the greedy algorithm versus the size of a TTS from our greedy approach (Algorithm~\ref{modified greedy}), for the strict majority, and the improvement (Imp.) in the form of (TS-TTS)/TS.}
\label{table}
\end{table}

% We should emphasize that Algorithm~\ref{modified greedy} can be extended to build a TTS $\mathcal{S}_0,\cdots, \mathcal{S}_k$ for larger values of $k$, by some small changes including defining a node to be blocked if it cannot be activated in $\ell> 1$ steps instead of $\ell=1$, but the algorithm is presented in its current simple form to make the results comparable to the original greedy algorithm for TS.

\subsection{Exact Linear-time Algorithm for Trees}\label{Timed Target Set in Trees}

Our goal in this section is to provide an exact algorithm for the \textsc{Timed Target Set Selection } problem for a given tree $T$ and threshold assignment $\tau$.

Let $\mathcal{L}(v)$ and $\bar{\mathcal{L}}(v)$ be the set of leaf and non-leaf neighbors of $v$, and define $l(v):=|\mathcal{L}(v)|$ and $\bar{l}(v):=|\bar{\mathcal{L}}(v)|$. Furthermore, we define $\mathcal{L}[v]:=\mathcal{L}(v)\cup\{v\}$ and $\bar{\mathcal{L}}[v]:=\bar{\mathcal{L}}(v)\cup\{v\}$. We partition the non-leaf nodes of $T$ into three sets: 
\begin{itemize}
\item $A:=\left\{v\in V(T):d(v)>1, \tau(v)>\bar{l}(v)\right\}$ 
\item $B:=\left\{v\in V(T):d(v)>1, \tau(v)<\bar{l}(v)\right\}$
\item $C:=\left\{v\in V(T):d(v)>1, \tau(v)=\bar{l}(v)\right\}$.
\end{itemize}
Furthermore, we partition $A$ into $A':=\{v\in A:\tau(v)<d(v)\}$ and $A'':=\{v\in A:\tau(v)=d(v)\}$.

Let us set a root $v$ for $T$. Then, we can partition $V(T)$ into $L_0\cup L_1\cup\cdots\cup L_d$, where the $i$-th level $L_i$ consists of nodes whose distance from $v$ is equal to $i$ and $d$ is the depth of the tree. Set $L_{-1}=\emptyset$.

Let us first consider the special case of $|A^{\prime\prime}\cup C|\le 1$. Set the only node in $A^{\prime\prime}\cup C$ (or an arbitrary non-leaf node if $A^{\prime\prime}\cup C=\emptyset$) as the root. Consider the sequence $\mathcal{S}_i=(L_{d-i}\cup L_{d-i-1})\cap A$ for $0\le i\le d$. We can prove that this is a TTS by induction on $d$, where the base case of $d=1$ corresponds to a star graph (similar to Example~\ref{example}). Since each node in $A$ appears exactly twice in this TTS, its size is equal to $2|A|$. On the other hand, for any TTS $\mathcal{S}_0,\mathcal{S}_1,\ldots, \mathcal{S}_k$ and any node $v\in A$, we have $\sum_{i=0}^k \left|\mathcal{S}_i\cap \mathcal{L}[v]\right|\geq2$. This is true because if $\sum_{i=0}^{k}\left|\mathcal{S}_i\cap \mathcal{L}[v]\right|\leq1$, then $v$ and its leaf neighbors cannot become positive simultaneously in any step and this contradicts the assumption that $\mathcal{S}_0,\mathcal{S}_1,\ldots, \mathcal{S}_k$ is a TTS. (This uses that $\tau(v)>0$ for any node $v$, which can be assumed as we will explain.) Since for any two distinct nodes $u$ and $v$, we have $\mathcal{L}[u]\cap \mathcal{L}[v]=\emptyset$, we can conclude that $\overleftrightarrow{MTT}(T,\tau)\geq 2|A|$. Furthermore, we argued that there is a TTS of size $2|A|$. Thus, $\overleftrightarrow{MTT}(T,\tau)= 2|A|$, which explains the lines 3-5 in Algorithm~\ref{Algorithm3}.

Now, let's consider the case of $|A^{\prime\prime}\cup C|\ge 2$. Again, we set a node $v\in A^{\prime\prime}\cup C$ as the root. Then, we iterate over the nodes respectively from $L_d$ to $L_1$. Let $T'$ be the subtree induced by the node $u$ which is being processed and its descendants. In the next two paragraphs, we explain that if $T'$ has a certain structure, then we know how to efficiently compute the minimum size of a TTS for $T'$ and how to connect it to the minimum size of a TTS in the original tree. This permits us to keep reducing the size of the tree while guaranteeing that the subtree induced by the next node that is processed will have one of our desired structures.

Suppose that $u$ (not equal to the root $v$) is a node in $A^{\prime\prime}$ and none of its descendants belongs to $A^{\prime\prime}\cup C$. Denote the parent of $u$ by $z$. Let $T'$ be the induced subtree on $u$ and its descendants and $T^{\prime\prime}$ be the induced subtree on $V(T)\setminus V(T')$. For $T'$, define $\tau'(w)=\tau(w)-1$ if $w=u$ and $\tau'(w)=\tau(w)$ otherwise. For $T^{\prime\prime}$, let $\tau^{\prime\prime}(w)=\tau(w)-1$ if $w=z$ and $\tau^{\prime\prime}(w)=\tau(w)$ otherwise. We can show that $\overleftrightarrow{MTT}(T, \tau)= \overleftrightarrow{MTT}(T',\tau^{\prime})+\overleftrightarrow{MTT}(T'',\tau^{\prime\prime})$. Using a proof similar to the case of $|A^{\prime\prime}\cup C|\le 1$ (which was handled before), we can prove that $\overleftrightarrow{MTT}(T',\tau')=2|V(T')\cap A|$. (There is a similar argument for $\tau(u)=0$.) This should explain the \textit{if} statement in line 11.

Suppose that $u$ is a non-root node in $C$ and none of its descendants belongs to $A^{\prime\prime}\cup C$. Furthermore, let $T'$ be the induced subtree on $u$ and its descendants and $T^{\prime\prime}$ be the induced subtree on $(V(T)\setminus V(T'))\cup \{u\}$. For $T'$, we define $\tau'(w)=\tau(w)-1$ if $w=u$ and $\tau'(w)=\tau(w)$ otherwise. For $T^{\prime\prime}$, we set $\tau^{\prime\prime}(w)=1$ if $w=u$ and $\tau^{\prime\prime}(w)=\tau(w)$ otherwise. We can prove that $\overleftrightarrow{MTT}(T,\tau)=\overleftrightarrow{MTT}(T',\tau')+\overleftrightarrow{MTT}(T^{\prime\prime},\tau^{\prime\prime})$. Again, using an argument similar to the case of $|A^{\prime\prime}\cup C|\le 1$, we can prove that $\overleftrightarrow{MTT}(T',\tau')=2|V(T')\cap A|$. This justifies the \textit{if} statement in line 14.

In Appendix~\ref{appendix-tree}, we provide a series of lemmas which explain Algorithm~\ref{Algorithm3} in a step by step and constructive fashion and prove its correctness.

% \begin{algorithm}[t]
% 	\caption{Minimum Size of TTS in Trees}
%         \label{Algorithm3}
% 	Determine sets $A,A',A'',B,C$.
	
% 	Set $x=0$.
	
% 	\If{$|A''\cup C|\leq1$}{
%          Set $x=2|A|$.
%     }
% 	\Else{Choose some $v\in A''\cup C$ as the root. 
	
% 	\For{$i=d$ to $1$}
% 	{
% 		\For{$u$ in $L_i$}
% 		{
%             Let $T'$ be the induced subtree on $u$ and its descendants, and $z$ be the parent of $u$.

% 			\If{$\tau(u)=0$}{Set $x+=2|V(T')\cap A|$. Remove $V(T')$ and set $\tau(z)=\tau(z)-1$.}
% 			\ElseIf{$u\in C$}
% 			{Set $x+=2|V(T')\cap A|$. Remove $V(T')\setminus\{u\}$ and set $\tau(u)=1$. Update sets $A,A',A'',B,C$.}
% 			\ElseIf{$u\in A''$}
% 			{Set $x+=2|V(T')\cap A|$. Set $\tau(z)=\tau(z)-1$ and remove $V(T')$. Update sets $A,A',A'',B,C$.}
% 		}
% 	}
% 	Set $x+=2|V(T)\cap A|$.
% 	}
% \end{algorithm}

\begin{algorithm}[t]
	\caption{Minimum Size of a TTS in Trees}
        \label{Algorithm3}
	Determine sets $A,A',A'',B,C$.
	
	Set $x=0$.
	
	\If{$|A''\cup C|\leq1$}{
         Set $x=2|A|$.
    }
	\Else{Choose some $v\in A''\cup C$ as the root. 
	
	\For{$i=d$ to $1$}
	{
		\For{$u$ in $L_i$}
		{
            Let $T'$ be the induced subtree on $u$ and its descendants, and $z$ be the parent of $u$.

			\If{$u\in A''$ or $\tau(u)=0$}{Set $x+=2|V(T')\cap A|$. Remove $V(T')$ and set $\tau(z)=\tau(z)-1$. Update sets $A,A',A'',B,C$.}
			\ElseIf{$u\in C$}
			{Set $x+=2|V(T')\cap A|$. Remove $V(T')\setminus\{u\}$ and set $\tau(u)=1$. Update sets $A,A',A'',B,C$.}
		}
	}
	Set $x+=2|V(T)\cap A|$.
	}
\end{algorithm}
\section{Future Work}
% We studied the problem of finding the minimum size of a TTS in the non-progressive threshold model, i.e., $\overleftrightarrow{MTT}(G,\tau)$. We provided tight bounds in terms of some graph parameters, proved the NP-hardness of the problem, gave an ILP formulation, presented a greedy algorithm for general graphs, delivered our experimental findings, and proposed an exact linear-time algorithm for trees.

% One prospective direction for future research is to obtain an algorithm to approximate $\overleftrightarrow{MTT}(G,\tau)$ by developing a rounding technique to convert the fractional solutions of the relaxed variant of our ILP formulation to integral solutions.

In~\cite{Ben-Zwi}, a polynomial time algorithm for finding $\overrightarrow{MT}(G,\tau)$ on trees was provided, and it was left open to determine whether finding $\overleftrightarrow{MT}(G,\tau)$ is tractable on trees. Recently, a polynomial time algorithm was given~\cite{BER} when $\tau(v)\in \{0,1,d(v)\}$ for any node $v$, but the problem remains open for general threshold assignment. We provided a linear-time algorithm to compute $\overleftrightarrow{MTT}(G,\tau)$ on trees for any threshold assignment. Can our techniques be leveraged to settle the problem for $\overleftrightarrow{MT}(G,\tau)$?

% To our knowledge, all prior work assume that the targeting (i.e., activating) cost is the same for all nodes. However,
In practice, targeting some nodes, such as ``influencers'', might be more costly than others. A potential avenue for future research is to study the set-up where each node has a cost assigned to it and the manipulator aims to minimize the cost.

Finally, it would be interesting to devise new algorithms which take advantage of timing fully (unlike our greedy approach) and evaluate their performance on different real-world data.
% \item Can the results be adapted to the problem of finding a target set such that every agent is positive at least once? (Not necessarily all positive at the same time.)
% \item Devising algorithm which take advance of timing fully is left to future work.
\newpage
\bibliography{ref}

% \newpage~\newpage
\appendix

\newpage 
\section{Proof of Theorem~\ref{disjoint_timed}}
\label{appendix-disjoint-TS}

For an arbitrary integer $\kappa$, let set $L_i$, for $1\le i \le \kappa$, contain $i$ nodes. Then, add an edge between every node in $L_i$ and every node in $L_{i+1}$, for $1\le i\le \kappa-1$. Finally, attach two leaves to the node in $L_1$ to complete the construction of graph $G$. Note that the number of nodes in $G$ is equal to
\[
n=2+\sum_{i=1}^{\kappa}i=2+\frac{\kappa(\kappa+1)}{2}=\Theta(\kappa^2).
\]

Let us first provide a disjoint TTS in the non-progressive model with the strict majority threshold assignment. Suppose that $\mathcal{S}_0=L_{\kappa}\cup L'_{\kappa-1}$ where $L'_{\kappa-1}$ is any subset of $L_{\kappa-1}$ of cardinality $\left\lceil\tfrac{\kappa}{2}\right\rceil$. Also assume that $\mathcal{S}_{\kappa-2}$ consists of the node of $L_1$ and one of its leaf neighbors. For $1\le i\neq \kappa-2\le \kappa-1$ set $\mathcal{S}_i=\emptyset$. It is straightforward to check that $Q_1=L_{\kappa}\cup L_{\kappa-1}$, $Q_2=L_{\kappa}\cup L_{\kappa-1}\cup L_{\kappa-2}$, and finally $Q_{\kappa-1}=V$. Therefore, the sequence $\mathcal{S}_0,\mathcal{S}_1,\ldots,\mathcal{S}_{\kappa-1}$ is a disjoint TTS of size $\kappa+\left\lceil\tfrac{\kappa}{2}\right\rceil+2$. This implies that $\overleftrightarrow{MDTT}(G,\tau)=\mathcal{O}(\sqrt{n})$ since $\kappa=\Theta(\sqrt{n})$.

Let $\mathcal{S}$ be a minimum size TS in the non-progressive model and the threshold assignment $\tau$ be the strict majority. We claim that for each $2\leq i\leq \kappa-1$ we have $|(L_{i-1}\cup L_{i+1})\cap\mathcal{S}|\geq i+1$. For the sake of contradiction, assume that $|(L_{i-1}\cup L_{i+1})\cap\mathcal{S}| < i+1$ for some $2\le i\le \kappa-1$. Since the threshold of nodes belonging to $L_i$ is equal to $i+1$, in time step $1$ all nodes in $L_i$ become negative. The threshold of nodes belonging to $L_{i-1}$ is $i$ and they have $i-2$ neighbors outside $L_i$ (i.e.\,in $L_{i-2}$). Since $L_i$ is completely negative in time step $1$, so in time step $2$ all nodes in $L_{i-1}$ become negative. By repeating this argument, we conclude that in time step $i$, the only node in $L_1$, let's call it $v$, becomes negative. Once $v$ is negative, in all the upcoming steps $v$ or its two leaf neighbors will be negative. This is in contradiction with $\mathcal{S}$ being a TS. Therefore, we conclude that

\begin{eqnarray*}
		\overleftrightarrow{MT}(G,\tau) & =  & |\mathcal{S}| \\
		& \geq & \sum_{i=1}^{\kappa}|L_{i}\cap\mathcal{S}| \\
		& \geq & \frac{1}{2}\left(\sum_{i=2}^{\kappa-1}|L_{i-1}\cap\mathcal{S}| + \sum_{i=2}^{\kappa-1}|L_{i+1}\cap\mathcal{S}|\right) \\
		& = & \frac{1}{2} \sum_{i=2}^{\kappa-1}\left(|L_{i-1}\cap\mathcal{S}|+|L_{i+1}\cap\mathcal{S}|\right)\\
		& = & \frac{1}{2} \sum_{i=2}^{\kappa-1}|(L_{i-1}\cup L_{i+1})\cap\mathcal{S}| \\
		& \geq & \frac{1}{2} \sum_{i=2}^{\kappa-1} (i+1) \\
		& = & \frac{1}{2} \times \frac{(\kappa+3)(\kappa-2)}{2} \\
		& = & \Omega(\kappa^2) \\
  & = & \Omega(n).
\end{eqnarray*}

We proved that $\overleftrightarrow{MDTT}(G,\tau)=\mathcal{O}(\sqrt{n})$ and $\overleftrightarrow{MT}(G,\tau)=\Omega(n)$. Thus, we conclude that $\overleftrightarrow{MT}(G,\tau)=\omega(\overleftrightarrow{MDTT}(G,\tau))$.

\section{Full Proofs of Section~\ref{general_graphs}}\label{appendix-general}

\subsection{Proof of Lemma~\ref{Lem2}}

Let us first prove Claim 1, which is the main ingredient of this proof.

\textbf{Claim 1.} \textit{If $D$ is nonempty, then there is a node $u\in D$ such that $d_D(u)\leq d(u)-\tau(u)$.}

We assume that $D\cap \mathcal{S}_e\cap X=\emptyset$ and $D\cap \mathcal{S}_o\cap Y=\emptyset$. The proof is analogous for the case of $D\cap \mathcal{S}_o\cap X=\emptyset$ and $D\cap \mathcal{S}_e\cap Y=\emptyset$.

For the sake of contradiction, assume that for all $u\in D$ we have $d_D(u)> d(u)-\tau(u)$. We claim that any node $y\in D\cap Y$ does not become positive in step one, i.e., $y\notin \mathcal{S}_1\cup Q_1$. To prove the claim, first note that since $D\cap \mathcal{S}_o\cap Y=\emptyset$, we have $y\notin \mathcal{S}_o$. So the definition of $\mathcal{S}_o$ implies that $y\notin \mathcal{S}_1$. Furthermore, $N(y)\cap \mathcal{S}_0\subseteq X\cap \mathcal{S}_e$. So the assumption $D\cap \mathcal{S}_e\cap X=\emptyset$ implies $N(y)\cap \mathcal{S}_0\subseteq N(y)\setminus D$. So, we have $\left|N(y)\cap \mathcal{S}_0\right|  \leq \left|N(y)\setminus D\right| =  \left|N(y)\right| - \left|N(y)\cap D\right|
=  d(y)-d_D(y) < \tau(y)$. Then, according to the rule $y\notin Q_1$. This proves the claim and so we have $(D\cap Y)\cap (\mathcal{S}_1\cup Q_1)=\emptyset$.

Now, we claim that for all $x\in D\cap X$ we have $x\notin \mathcal{S}_2\cup Q_2$. To prove the claim first note that since $D\cap \mathcal{S}_e\cap X=\emptyset$, we have $x\notin \mathcal{S}_e$ and so $x\notin \mathcal{S}_2$. Furthermore, $N(x)\cap (\mathcal{S}_1\cup Q_1)\subseteq Y\cap (\mathcal{S}_1\cup Q_1)$. Since $(D\cap Y)\cap (\mathcal{S}_1\cup Q_1)=\emptyset$, we conclude that $N(x)\cap (\mathcal{S}_1\cup Q_1)\subseteq N(x)\setminus D$. So, we get $\left|N(x)\cap (\mathcal{S}_1\cup Q_1)\right|  \leq \left|N(x)\setminus D\right| = d(x)-d_D(x) <\tau(x)$. Then, according to the rule $x\notin Q_2$. This proves the claim and so we have $(D\cap X)\cap (\mathcal{S}_2\cup Q_2)=\emptyset$. This means that the nodes of $D\cap X$ do not become positive in time step $2$. 

By repeating this argument, we can conclude that for every odd $i$, $(D\cap Y)\cap (\mathcal{S}_i\cup Q_i)=\emptyset$ and for every even $i$, $(D\cap X)\cap (\mathcal{S}_i\cup Q_i)=\emptyset$. Since $D\neq\emptyset$, so $D\cap X\neq\emptyset$ or $D\cap Y\neq\emptyset$. W.l.o.g., suppose that $D\cap X\neq\emptyset$ and there exists a node $x\in D\cap X$. Moreover, the assumption that $\mathcal{S}_0,\mathcal{S}_1,\ldots,\mathcal{S}_k$ is a TTS of $G$ implies $Q_i=V$ for all time steps $i\geq k$. So $x\in D\cap X\subseteq V=Q_i$ for all $i\geq k$. Thus, $x\in (D\cap X)\cap (\mathcal{S}_i\cup Q_i)$ for all $i\geq k$. This contradicts the fact that $(D\cap X)\cap (\mathcal{S}_i\cup Q_i)=\emptyset$ for all even $i$. This finishes the proof of Claim 1.

Now, let us prove the statement of the lemma using Claim 1. The case of $D=\emptyset$ is trivial. For $|D|\geq1$ the proof is by induction on $|D|$. For $|D|=1$ the inequality is true since the left-hand side is zero and for all nodes the right-hand side is non-negative using the assumption $\tau(u)\leq d(u)$. 

Assume that the inequality holds for $|D|<k$, then we prove that it also holds for $|D|=k$. Using Claim 1, there exists a node $v\in D$ such that $d_D(v)\leq d(v)-\tau(v)$. The node set $D':=D\setminus \{v\}$ has the conditions of the lemma and its cardinality is $k-1$. So the induction hypothesis yields 
$|E(G[D'])|\leq\sum_{u\in D'} \left(d(u)-\tau(u)\right)$. So we conclude that
\begin{eqnarray*}
		|E(G[D])| & = & d_D(v)+|E(G[D'])|\\
		& \leq & d(v)-\tau(v)+\sum_{u\in D'} \left(d(u)-\tau(u)\right)\\
		& = & \sum_{u\in D} \left(d(u)-\tau(u)\right).
\end{eqnarray*}

\subsection{Proof of Theorem~\ref{main bound}}

Suppose that $X$ and $Y$ are the partite sets of $G$. Let $\mathcal{S}_0,\mathcal{S}_1,\ldots,\mathcal{S}_k$ be a TTS of $G$. Consider $\mathcal{S}_o$ and $\mathcal{S}_e$ as defined in Lemma~\ref{Lem2}. Define $S$, $S'$, $S''$ and $S'''$ as follows $S:= \mathcal{S}_e\cup \mathcal{S}_o$, $S':=\mathcal{S}_o\setminus \mathcal{S}_e,~~S'':=\mathcal{S}_e\setminus \mathcal{S}_o,~~S''':=\mathcal{S}_o\cap \mathcal{S}_e$. Furthermore, for partite sets $X$ and $Y$ we use the following notations for brevity: $\mathcal{S}_X:=S\cap X,~~\mathcal{S}^{\prime}_X:=S'\cap X,~~S''_X:=S''\cap X,~~S'''_X:=S'''\cap X$ and 
$\mathcal{S}_Y:=S\cap Y,~~\mathcal{S}^{\prime}_Y:=S'\cap Y,~~S''_Y:=S''\cap Y,~~S'''_Y:=S'''\cap Y$.
Let $I:=V\setminus S$. Now, consider $F_1:=I\cup \mathcal{S}^{\prime}_X\cup S''_Y$ and $F_2:=I\cup S''_X\cup \mathcal{S}^{\prime}_Y$ which both clearly satisfy the conditions of Lemma~\ref{Lem2}. So by applying Lemma~\ref{Lem2} for $F_1$, we have 
$$\left|E(G[I\cup \mathcal{S}^{\prime}_X\cup S''_Y])\right|\leq \sum_{u\in I\cup \mathcal{S}^{\prime}_X\cup S''_Y}\left(d(u)-\tau(u)\right)$$
and for $F_2$ we have 
$$\left|E(G[I\cup S''_X\cup \mathcal{S}^{\prime}_Y])\right|\leq \sum_{u\in I\cup S''_X\cup \mathcal{S}^{\prime}_Y}\left(d(u)-\tau(u)\right).$$
Let $A$ and $B$ be two disjoint subsets of $V$. Denote by $e_A$ and $e_{AB}$ the number of edges of $G[A]$ and the number of edges between $A$ and $B$, respectively. Incorporating these notations into the above two inequalities we obtain

\begin{multline*}
e_I+e_{I\mathcal{S}^{\prime}_X}+e_{IS''_Y}+e_{\mathcal{S}^{\prime}_XS''_Y}\leq \sum_{u\in I\cup \mathcal{S}^{\prime}_X\cup S''_Y}\left(d(u)-\tau(u)\right)
\end{multline*}
and 
\begin{multline*}
e_I+e_{IS''_X}+e_{I\mathcal{S}^{\prime}_Y}+e_{S''_X\mathcal{S}^{\prime}_Y}\leq \sum_{u\in I\cup S''_X\cup \mathcal{S}^{\prime}_Y}\left(d(u)-\tau(u)\right).
\end{multline*}

Combining these two inequalities and using the fact that $V=I\cup S'\cup S''\cup S'''$ is a partition for $V$ yield
\begin{multline*}
2e_I+(e_{I\mathcal{S}_X}-e_{IS'''_X})+(e_{I\mathcal{S}_Y}-e_{IS'''_Y})+e_{S'S''} \leq \\ \sum_{u\in I} \left(d(u)-\tau(u)\right)+\sum_{u\in V}\left(d(u)-\tau(u)\right)- \\ \sum_{u\in S'''} \left(d(u)-\tau(u)\right).
\end{multline*}

Clearly, we have $2e_I+e_{I\mathcal{S}_X}+e_{I\mathcal{S}_Y}=\sum_{u\in I}d(u)$. So, we conclude that 
\begin{multline*}
    \sum_{u\in I}d(u)-e_{IS'''}+e_{S'S''}\leq\sum_{u\in I} \left(d(u)-\tau(u)\right)+ \\ \sum_{u\in V}\left(d(u)-\tau(u)\right)-\sum_{u\in S'''} \left(d(u)-\tau(u)\right).
\end{multline*}
Since $e_{S'S''}$ is non-negative and $e_{IS'''}\leq \sum_{u\in S'''} d(u)$, we have
\begin{align*}
\sum_{u\in I}d(u)\leq\sum_{u\in I} \left(d(u)-\tau(u)\right)+ \sum_{u\in V}\left(d(u)-\tau(u)\right)\\  -\sum_{u\in S'''} \left(d(u)-\tau(u)\right)+\sum_{u\in S'''} d(u).
\end{align*}
Since $V=I\cup S$, we conclude that 
\begin{multline*}
\sum_{u\in I}d(u)\leq 2\sum_{u\in I}\left(d(u)-\tau(u)\right) \\ +\sum_{u\in S}\left(d(u)-\tau(u)\right)+ \sum_{u\in S'''} \tau(u)
\end{multline*}
and so 
\begin{equation}\label{general}
\begin{split}
\sum_{u\in I}\left(2\tau(u)-d(u)\right)\leq \sum_{u\in S}\left(d(u)-\tau(u)\right)+\sum_{u\in S'''} \tau(u).
\end{split}
\end{equation}
	
For the strict majority threshold assignment, we have $\tau(u)\geq\frac{d(u)+1}{2}$ which implies that $2\tau(u)-d(u)\geq 1$. Plugging this in Equation~\eqref{general} gives 
\begin{align*}
|I| \leq &\sum_{u\in I}\left(2\tau(u)-d(u)\right) \leq  \sum_{u\in S}\left(d(u)-\tau(u)\right) +\sum_{u\in S'''} \tau(u).
\end{align*}

Since $S=S'\cup S''\cup S'''$, we conclude that 
$$|I|\leq \sum_{u\in S'\cup S''}\left(d(u)-\tau(u)\right)+\sum_{u\in S'''} d(u).$$
Thus, we get
$$|I|\leq \sum_{u\in S'\cup S''}\left(\frac{d(u)-1}{2}\right)+\sum_{u\in S'''} d(u).$$
Therefore, we have

\begin{multline*}
 |I|\leq \left|S'\cup S''\right|\frac{\Delta-1}{2}+\left|S'''\right|\Delta \Rightarrow \\
 n-\left|S'\cup S''\right|-\left|S'''\right|\leq \left|S'\cup S''\right|\frac{\Delta-1}{2}+ \left|S'''\right|\Delta.
\end{multline*}
This implies that
\begin{multline*}
 n \leq \left|S'\cup S''\right|\frac{\Delta+1}{2}+\left|S'''\right|\left(\Delta+1\right) \Rightarrow \\
 \frac{2n}{\Delta+1} \leq \left|S'\cup S''\right| +2\left|S'''\right|\leq \overleftrightarrow{MTT}(G,\tau).
\end{multline*}

In the last inequality, we used the fact that every node in $S'\cup S''$ is enumerated at least once and every node in $S'''$ is enumerated at least twice (in an odd and also in an even time step) in the size of the TTS. 

\subsection{Proof of Theorem~\ref{bipartite}}

First, we show that $\overleftrightarrow{MTT}(H,\tau^{\prime})\leq 2\overleftrightarrow{MTT}(G,\tau)$. Suppose that $\mathcal{S}_0,\mathcal{S}_1,\ldots,\mathcal{S}_k$ is a minimum size TTS for $(G,\tau)$ and $Q_0,Q_1,\ldots,Q_k$ is the corresponding sequence of positive nodes. For every $0\leq j\leq k$ set $\mathcal{S}^{\prime}_j:=\{x_i|v_i\in \mathcal{S}_j\}\cup \{y_i|v_i\in \mathcal{S}_j\}$ and $Q^{\prime}_j:=\{x_i|v_i\in Q_j\}\cup \{y_i|v_i\in Q_j\}$. We prove by induction that the sequence $\mathcal{S}^{\prime}_0,\mathcal{S}^{\prime}_1,\ldots,\mathcal{S}^{\prime}_k$ is a TTS for $(H,\tau')$ with corresponding positive set sequence $Q'_0,Q'_1,\ldots,Q'_k$. The base of induction is trivial since $Q'_0=\emptyset$. As the induction hypothesis, assume that the statement holds for some $j\ge 1$. We show that $Q'_{j+1}=\{u\in V(H):|N(u)\cap(\mathcal{S}'_{j}\cup Q'_{j})|\geq\tau'(u)\}$. Note that $x_i\in Q'_{j+1}$ and $y_i\in Q'_{j+1}$ if and only if $v_i\in Q_{j+1}$ and since $\mathcal{S}_0,\mathcal{S}_1,\ldots,\mathcal{S}_k$ is a TTS of $G$ this is equivalent to $|N(v_i)\cap(\mathcal{S}_j\cup Q_j)|\geq \tau(v_i)$. By the construction of $(H,\tau')$ from $(G,\tau)$ this is equivalent to $|N(x_i)\cap(\mathcal{S}'_j\cup Q'_j)|\geq \tau(x_i)$ and $|N(y_i)\cap(\mathcal{S}'_j\cup Q'_j)|\geq \tau(y_i)$. Thus, we conclude that $\mathcal{S}^{\prime}_0,\mathcal{S}^{\prime}_1,\ldots,\mathcal{S}^{\prime}_k$ is a TTS of $(H,\tau')$ with size $2\overleftrightarrow{MTT}(G,\tau)$. This implies that $\overleftrightarrow{MTT}(H,\tau^{\prime})\leq 2\overleftrightarrow{MTT}(G,\tau)$.

Now, we show that $\overleftrightarrow{MTT}(H,\tau^{\prime})\geq 2\overleftrightarrow{MTT}(G,\tau)$. Let $\mathcal{S}^{\prime}_0,\mathcal{S}^{\prime}_1,\ldots,\mathcal{S}^{\prime}_k$ be a minimum size TTS of $(H,\tau')$. Suppose that $k$ is even because otherwise we can consider TTS $\mathcal{S}^{\prime}_0,\mathcal{S}^{\prime}_1,\ldots,\mathcal{S}^{\prime}_k,\mathcal{S}^{\prime}_{k+1}$ with $\mathcal{S}^{\prime}_{k+1}=\emptyset$. Let $\mathcal{S}^X_j=\mathcal{S}^{\prime}_j\cap X$ and $\mathcal{S}^Y_j=\mathcal{S}^{\prime}_j\cap Y$ for $0\leq j\leq k$. Consider the partitioning of $\mathcal{S}^{\prime}_0,\mathcal{S}^{\prime}_1,\ldots,\mathcal{S}^{\prime}_k$ into the sequences $\mathcal{S}^X_0,\mathcal{S}^Y_1,\mathcal{S}^X_2,\mathcal{S}^Y_3,\ldots,\mathcal{S}^Y_{k-1},\mathcal{S}^X_{k}$ and $\mathcal{S}^Y_0,\mathcal{S}^X_1,\mathcal{S}^Y_2,\mathcal{S}^X_3,\ldots,\mathcal{S}^X_{k-1},\mathcal{S}^Y_{k}$. W.l.o.g. suppose that

\begin{align*}
|\mathcal{S}^X_0|+|\mathcal{S}^Y_1|+|\mathcal{S}^X_2|+|\mathcal{S}^Y_3|+\ldots+|\mathcal{S}^Y_{k-1}|+|\mathcal{S}^X_{k}|\leq \\|\mathcal{S}^Y_0|+|\mathcal{S}^X_1|+|\mathcal{S}^Y_2|+|\mathcal{S}^X_3|+\ldots+|\mathcal{S}^X_{k-1}|+|\mathcal{S}^Y_{k}|.
\end{align*}
Since in time step $k$ all nodes of $H$ are made positive by the TTS, $X$ will be fully positive by $\mathcal{S}^X_0,\mathcal{S}^Y_1,\mathcal{S}^X_2,\mathcal{S}^Y_3,\ldots,\mathcal{S}^Y_{k-1},\mathcal{S}^X_{k}$ in time step $k$. For $0\leq j\leq k$, if $j$ is even, set $\mathcal{S}_j=\{v_i|x_i\in \mathcal{S}^X_j\}$ and if $j$ is odd, set $\mathcal{S}_j=\{v_i|y_i\in \mathcal{S}^Y_j\}$. It is straightforward to see that $\mathcal{S}_0,\mathcal{S}_1,\ldots,\mathcal{S}_k$ is a TTS of $(G,\tau)$. So, we conclude that $\overleftrightarrow{MTT}(G,\tau)$ is at most
\begin{eqnarray*}
&&|\mathcal{S}_0|+|\mathcal{S}_1|+\cdots+|\mathcal{S}_k|= \\
		&&|\mathcal{S}^X_0|+|\mathcal{S}^Y_1|+\cdots+|\mathcal{S}^Y_{k-1}|+|\mathcal{S}^X_{k}| \leq\\
		&&\frac{1}{2}\Big(|\mathcal{S}^X_0|+|\mathcal{S}^Y_1|+\cdots+|\mathcal{S}^Y_{k-1}|+|\mathcal{S}^X_{k}| \\
		& &+|\mathcal{S}^Y_0|+|\mathcal{S}^X_1|+\cdots+|\mathcal{S}^X_{k-1}|+|\mathcal{S}^Y_{k}| \Big) =\\
		&&\frac{1}{2}\Big(|\mathcal{S}^{\prime}_0|+|\mathcal{S}^{\prime}_1|+\cdots+|\mathcal{S}^{\prime}_k|\Big) =\\
		&&\frac{1}{2}\overleftrightarrow{MTT}(H,\tau^{\prime}).
\end{eqnarray*}

\subsection{Proof of Theorem~\ref{bound}}

Construct bipartite graph $H$ with threshold $\tau'$ from $(G,\tau)$ as Theorem~\ref{bipartite}. According to the construction of $H$, it is straightforward to observe that the threshold $\tau'$ is strict majority threshold with respect to $H$. Now, by applying Theorem~\ref{main bound}, we get $\overleftrightarrow{MTT}(H,\tau^{\prime})\geq\frac{2|V(H)|}{\Delta(H)+1}$. Theorem~\ref{bipartite} implies that $\overleftrightarrow{MTT}(G,\tau)=\overleftrightarrow{MTT}(H,\tau^{\prime})/2$. Since $|V(H)|=2|V(G)|=2n$ and $\Delta(H)=\Delta(G)$, we conclude that
$\overleftrightarrow{MTT}(G,\tau)\geq\frac{2n}{\Delta(G)+1}$.

\section{Proof of Theorem~\ref{bound for general even}}
\label{appendix-even}

It suffices to prove the bound for bipartite graphs. Then, we can apply Theorem~\ref{bipartite} to extend it to general graphs, similar to the proof of Theorem~\ref{bound}. Thus, assume that $G$ is bipartite, which allows us to use the proof of Theorem~\ref{main bound} and adjust it to get a stronger bound when $G$ is even.

Since $d(u)$ is an even number for all $u\in V(G)$, we have $\tau(u)=\left\lceil\frac{d(u)+1}{2}\right\rceil=\frac{d(u)+2}{2}$ and therefore $2\tau(u)-d(u)= 2$. By plugging this into Equation~\eqref{general} in the proof of Theorem~\ref{main bound} we get:
\begin{equation*}
		2|I|\leq \sum_{u\in S}\left(d(u)-\tau(u)\right)+\sum_{u\in S'''} \tau(u).
	\end{equation*}
	Since $S=S'\cup S''\cup S'''$, we conclude that 
	\begin{eqnarray*}
		2|I| & \leq & \sum_{u\in S'\cup S''}\left(d(u)-\tau(u)\right)+\sum_{u\in S'''} d(u) \\  
		& \leq & \sum_{u\in S'\cup S''}\left(\frac{d(u)-2}{2}\right)+\sum_{u\in S'''} d(u) \\
		& \leq & \left|S'\cup S''\right|\frac{\Delta(G)-2}{2}+\left|S'''\right|\Delta(G).
	\end{eqnarray*}
Since $V(G)$ consists of a disjoint union of $I$, $S'$, $S''$ and $S'''$, we have
\begin{align*}
& 2n-2\left|S'\cup S''\right|-2\left|S'''\right|\leq \\ & \left|S'\cup S''\right|\frac{\Delta(G)-2}{2}+\left|S'''\right|\Delta(G).
\end{align*}
This implies that
\begin{align*}
    2n \leq \left|S'\cup S''\right|\frac{\Delta(G)+2}{2} +\left|S'''\right|\left(\Delta(G)+2\right).
\end{align*}
By rearranging the terms, we get
\begin{align*}
\frac{4n}{\Delta(G)+2} \leq \left|S'\cup S''\right| +2\left|S'''\right|\leq \overleftrightarrow{MTT}(G,\tau).
\end{align*}

\paragraph{Tightness.} Consider the complete bipartite graph $K_{2,2\ell}$ with partite sets $\{v_1,v_2\}$ and $\{u_1,u_2,\ldots,u_{2\ell}\}$. We observe that $\mathcal{S}_0,\mathcal{S}_1,\mathcal{S}_2$ with $\mathcal{S}_0=\mathcal{S}_1=\{v_1,v_2\}$ and $\mathcal{S}_2=\emptyset$ is a TTS of $K_{2,2\ell}$ of size $4$ for the strict majority threshold assignment. Furthermore, since $\Delta(K_{2,2\ell})=2\ell=n-2$, we have the bound of $\overleftrightarrow{MTT}(K_{2,2\ell},\tau)\geq\frac{4n}{\Delta(K_{2,2\ell})+2}=\frac{4n}{(n-2)+2}=4$. Therefore, the bound is tight.

\section{Proof of Theorem~\ref{hardness-thm}}\label{appendix-hardness}

We show that there is a polynomial time reduction to the \textsc{Timed Target Set Selection} problem from the problem of finding the minimum size of a TS in the progressive model.
	
For a given pair $(H,\tau')$, construct $(G,\tau)$ as follows. For each node $v\in V(H)$ add $\left\lceil\frac{d(v)}{2}\right\rceil$ copies of the complete graph $K_2$ and connect both nodes of each copy to $v$. Set
\begin{equation*}
\tau(v)=\begin{cases}
\tau'(v) & \text{ if } v\in V(H); \\
1 & \text{ if } v\in V(G)\setminus V(H).
\end{cases}
\end{equation*}

We claim that $\overrightarrow{MT}(H,\tau')=\overleftrightarrow{MTT}(G,\tau)$. First, we show that $\overrightarrow{MT}(H,\tau')\leq\overleftrightarrow{MTT}(G,\tau)$. Let the sequence $\mathcal{S}_0,\mathcal{S}_1,\ldots,\mathcal{S}_k$ be a minimum size TTS of $(G,\tau)$ in the non-progressive model. Furthermore, assume that this TTS does not include any node from the added $K_2$ sub-graphs. We can make such an assumption because instead of making a node $u$ positive in a copy $K_2$, which is attached to a node $v$, in time step $t$, we can make node $v$ positive in step $t+1$. It is straightforward to check that $\mathcal{S}=\bigcup_{i=0}^k \mathcal{S}_i$ is a TS of $(H,\tau')$ in the progressive model. So $\overrightarrow{MT}(H,\tau')\leq |\mathcal{S}| \leq \sum_{i=0}^k |\mathcal{S}_i| = \overleftrightarrow{MTT}(G,\tau)$. 

Conversely, we prove that $\overrightarrow{MT}(H,\tau')\geq\overleftrightarrow{MTT}(G,\tau)$. Assume that $\mathcal{S}$ is a minimum size TS of $(H,\tau')$ in the progressive model and makes all nodes of $H$ positive after $k$ time steps for some integer $k$. Consider $\mathcal{S}_0,\mathcal{S}_1,\ldots,\mathcal{S}_{k+3}$ with $\mathcal{S}_0=\mathcal{S}$ and $\mathcal{S}_i=\emptyset$ for $1\leq i \leq k+3$. We claim that this sequence is a TTS of $(G,\tau)$ in the non-progressive model. In time step $1$, all added copies of $K_2$ with a neighbor in $\mathcal{S}$ become positive and stay positive forever (possibly some nodes of $\mathcal{S}$ become negative in this time step). In time step $2$, all nodes of $\mathcal{S}$ become positive and stay positive until the end of the process. According to the construction of $(G,\tau)$ from $(H,\tau')$ after $k$ time steps, i.e., time step $k+2$, all nodes of $G$ except possibly some in the attached $K_2$ subgraphs become positive. In time step $k+3$, all nodes of $G$ are positive. So $\overleftrightarrow{MTT}(G,\tau) \leq \sum_{i=0}^{k+3} |\mathcal{S}_i| = |\mathcal{S}| = \overrightarrow{MT}(H,\tau')$.

Assume that there is a polynomial time algorithm for finding $\overleftrightarrow{MTT}(G,\tau)$ with the approximation ratio $C2^{\log^{1-\epsilon}|V(G)|}$ for some constants $C,\epsilon>0$. Since $|V(G)|=\mathcal{O}(|V(H)|^2)$, the above polynomial time reduction gives us a polynomial time algorithm with approximation ratio $C'2^{\log^{1-\epsilon'}|V(H)|}$, for some constants $\epsilon',C'>0$, for the problem of finding the minimum size of a TS in the progressive model. However, this is not possible according to the results from~\cite{C}, unless $NP\subseteq DTIME(n^{polylog(n)})$.

\section{Correctness of ILP}
\label{appendix-ilp}
We need to prove that a minimum size TTS which makes all nodes positive in $k$ steps corresponds to a solution of the ILP and vice versa. For the sake of readability, we provide a constructive argument which can be used to prove this statement.

Since the goal is to find a TTS of minimum size, the objective function is set to $\sum_{v\in V}\sum_{i=0}^{k}x_{vi}$ which represents the size of a TTS. We have the constraint $x_{vk}+y_{vk}=1$ for each node $v\in V$, which ensures that all nodes become positive in $k$ steps. Furthermore, note that for all $v\in V$ and all $i\in \mathbb{K}_0$ the constraint $x_{vi}+y_{vi}\leq 1$ guarantees that no positive node $v$ in time $i$ will be in $\mathcal{S}_i$. (This is not required for its own sake since it is imposed implicitly by minimizing the objective function, but it is necessary as a combination with the next constraints, as we discuss.) 
	
The process in the non-progressive model has two rules: 
\begin{enumerate}
		\item If the number of positive neighbors of node $v$ in time step $i-1$ is greater than or equal to $\tau(v)$, then $v$ becomes positive in time step $i$. This is equivalent to: 
		\begin{equation}\label{rule1}
			\sum_{u\in N(v)}(x_{u(i-1)}+y_{u(i-1)})\geq \tau(v)\implies y_{vi}=1.
		\end{equation}
		\item If the number of positive neighbors of node $v$ in time step $i-1$ is strictly less than $\tau(v)$, then $v$ becomes negative in time step $i$ (unless we force it to be positive, i.e., $x_{vi}=1$). This is equivalent to: 
		\begin{equation}\label{rule2}
			\sum_{u\in N(v)}(x_{u(i-1)}+y_{u(i-1)})< \tau(v)\implies y_{vi}=0.
		\end{equation}
\end{enumerate}
For the first rule, we recall the following constraint: 
\begin{multline}\label{constraint1}
(d(v)+1-\tau(v))y_{vi}+\tau(v)-1\ge \\  \sum_{u\in N(v)} (x_{u(i-1)}+y_{u(i-1)}).
\end{multline}
Note that in case $y_{vi}=1$, the constraint \eqref{constraint1} reduces to $d(v)\geq\sum_{u\in N(v)} (x_{u(i-1)}+y_{u(i-1)})$ which is trivial due to the constraint $x_{vi}+y_{vi}\leq 1$ for all $v\in V$ and $i\in\mathbb{K}_0$. For $y_{vi}=0$, the constraint \eqref{constraint1} reduces to $\sum_{u\in N(v)} (x_{u(i-1)}+y_{u(i-1)})\leq \tau(v)-1$. This means that
$$y_{vi}=0\implies\sum_{u\in N(v)} (x_{u(i-1)}+y_{u(i-1)})\leq \tau(v)-1.$$
This is the contrapositive of \eqref{rule1} and so equivalent to it. So constraint \eqref{constraint1} guarantees the first rule of the process. 
	
For the second rule consider the following constraint: 
\begin{equation}\label{constraint2}
		\sum_{u\in N(v)}(x_{u(i-1)}+y_{u(i-1)})-\tau(v)y_{vi}\geq 0
\end{equation}
Note that in case $y_{vi}=0$, the constraint \eqref{constraint2} reduces to $\sum_{u\in N(v)} (x_{u(i-1)}+y_{u(i-1)})\geq0$ which is trivial. In case $y_{vi}=1$, the constraint \eqref{constraint2} reduces to $\sum_{u\in N(v)} (x_{u(i-1)}+y_{u(i-1)})\geq \tau(v)$. This means that
$$y_{vi}=1\implies\sum_{u\in N(v)} (x_{u(i-1)}+y_{u(i-1)})\geq \tau(v).$$
This is the contrapositive of \eqref{rule2} and so equivalent to it. Hence, constraint \eqref{constraint2} guarantees the second rule of the process. 

\section{Correctness of Algorithm~\ref{modified greedy}}
\label{appendix-greedy-correctness}

\begin{thm}
Algorithm~\ref{modified greedy} finds a TTS in the non-progressive model for a given pair $(G,\tau)$.
\end{thm}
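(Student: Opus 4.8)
The plan is to maintain a single numerical invariant during the execution of Algorithm~\ref{modified greedy} and then read the TTS property off its terminal value. Write $\tau_0$ for the input threshold assignment and $\tau$ for the assignment the algorithm keeps; the only change ever made to $\tau$ is the assignment ``$\tau(w):=0$'', performed exactly when $w$ is inserted into $\mathcal{S}_1$, so $\tau(u)=\tau_0(u)$ for every $u\notin\mathcal{S}_1$. Let $U$ be the final set of \emph{unselected} nodes. Every node is processed once and then put into $\mathcal{S}_0$ or marked unselected, so $V=\mathcal{S}_0\cup U$ is a partition (while $\mathcal{S}_1\subseteq V$ may meet both parts), and the counter \texttt{unselected}$[u]$ equals $|N(u)\cap U|$ at the end, since it is incremented precisely when a neighbour of $u$ is marked unselected.

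First I would prove, by induction on the iteration index, the invariant: after $v_i$ has been processed, \texttt{unselected}$[u]\le d(u)-\tau(u)$ holds for every $u$, with the \emph{current} value of $\tau$. The base case is $0\le d(u)-\tau_0(u)$. For the inductive step I would examine the branches of the loop body. When $v_i$ is added to $\mathcal{S}_0$, nothing changes and the invariant is preserved. When \texttt{blocked}$[v_i]$ is empty, every $u\in N(v_i)$ had \texttt{unselected}$[u]<d(u)-\tau(u)$ (strict, as $u$ is not blocked) before the increment, hence $\le d(u)-\tau(u)$ afterwards, and vertices outside $N(v_i)$ are untouched.

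The delicate branch — and what I expect to be the crux — is when \texttt{blocked}$[v_i]=\{w\}$ with $d(w)>d(v_i)$, where $w$ satisfies \texttt{unselected}$[w]=d(w)-\tau(w)$, then has its counter incremented and its threshold set to $0$. The point is that at this moment $\tau(w)\ge1$: the vertex $v_i\in N(w)$ has not yet been marked unselected, so at most $d(w)-1$ neighbours of $w$ lie in $U$, giving \texttt{unselected}$[w]\le d(w)-1$ and hence $\tau(w)\ge1$. Therefore after the increment \texttt{unselected}$[w]=d(w)-\tau(w)+1\le d(w)$, which is exactly $d(w)$ minus the new threshold $0$, so the invariant is restored for $w$; for the remaining $u\in N(v_i)$ the argument is as before. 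In particular the genuinely troublesome case $\tau(w)=0$ never arises.

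Finally I would draw the conclusion. At termination the invariant gives $|N(u)\cap U|\le d(u)-\tau(u)$, hence $|N(u)\cap\mathcal{S}_0|=d(u)-|N(u)\cap U|\ge\tau(u)$ for every $u$. If $u\notin\mathcal{S}_1$ then $\tau(u)=\tau_0(u)$, so $|N(u)\cap(\mathcal{S}_0\cup Q_0)|\ge\tau_0(u)$ and $u\in Q_1$; if $u\in\mathcal{S}_1$ then $u$ is positive at step $1$ by construction. In either case $\mathcal{S}_1\cup Q_1=V$, so for every $v$ we get $|N(v)\cap(\mathcal{S}_1\cup Q_1)|=d(v)\ge\tau_0(v)$ and therefore $Q_2=V$. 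Thus the output $\mathcal{S}_0,\mathcal{S}_1,\emptyset$ together with $Q_0=\emptyset$, $Q_1$, $Q_2=V$ satisfies conditions (i)--(iv) of Definition~\ref{DTS} with $k=2$, i.e.\ it is a TTS; the degenerate case $\mathcal{S}_1=\emptyset$ (where already $Q_1=V$) is covered by the same argument.
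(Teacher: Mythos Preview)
Your proof is correct and follows essentially the same inductive strategy as the paper. The paper phrases the loop invariant semantically---after iteration $i$, the triple $\mathcal{S}_0\cup\{v_{i+1},\ldots,v_n\},\ \mathcal{S}_1,\ \emptyset$ is already a TTS---whereas you state it as the numerical bound $\texttt{unselected}[u]\le d(u)-\tau(u)$ with the current $\tau$; once unpacked, these two invariants coincide (the paper's condition says exactly that every $u\notin\mathcal{S}_1$ has at least $\tau_0(u)$ neighbours outside the unselected set, which is your inequality). Your handling of the branch $\texttt{blocked}[v_i]=\{w\}$, where you verify $\tau(w)\ge 1$ before the reset $\tau(w):=0$ so that the incremented counter still satisfies $\texttt{unselected}[w]\le d(w)$, is slightly more explicit than the paper's account of the same step.
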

\begin{proof}
The \textit{for} loop in line 3 of the algorithm determines which nodes belong to $\mathcal{S}_0$ or $\mathcal{S}_1$ or none of them. For the $i$-th iteration of this loop, let $\mathcal{S}_0^i$ and $\mathcal{S}_1^i$ denote $\mathcal{S}_0\cup\{v_{i+1},\ldots,v_n\}$ and $\mathcal{S}_1$ respectively. By induction, we show that for each $0\leq i\leq n$ the sequence $\mathcal{S}_0^i, \mathcal{S}_1^i, \emptyset$ is a TTS. For $i=0$ we have $\mathcal{S}_0^i=V$ and clearly the claim is true. Now, we show that if $\mathcal{S}_0^i,\mathcal{S}_1^i,\emptyset$ is a TTS, then $\mathcal{S}_0^{i+1},\mathcal{S}_1^{i+1},\emptyset$ is also a TTS. If $v_{i+1}\in \mathcal{S}_0^{i+1}$, since $\mathcal{S}_1^i\subseteq \mathcal{S}_1^{i+1}$ and $\mathcal{S}_0^i\subseteq \mathcal{S}_0^{i+1}$ clearly $\mathcal{S}_0^{i+1},\mathcal{S}_1^{i+1},\emptyset$ is a TTS. So let $v_{i+1}\notin \mathcal{S}_0^{i+1}$. This only happens in the two following cases:
\begin{itemize}
\item \textbf{Case 1:} $|\texttt{blocked}[v_{i+1}]|=0$. In this case, there is no node $u$ adjacent to $v_{i+1}$ such that $u$ has $d(u)-\tau(u)$ neighbors in $V(G)\setminus \mathcal{S}_0^i$. This implies that $\mathcal{S}_0^i\setminus \{v_{i+1}\},\mathcal{S}_1^i,\emptyset$ is a TTS. So $\mathcal{S}_0^{i+1},\mathcal{S}_1^{i+1},\emptyset$ is a TTS.
\item \textbf{Case 2:} $\texttt{blocked}[v_{i+1}]$ includes only $w$, for some node $w\in N(v_i)$ and $d(w)>d(v_{i+1})$. This implies that $\mathcal{S}_0^i\setminus \{v_{i+1}\},\mathcal{S}_1^i\cup\{w\},\emptyset$ is a TTS. So $\mathcal{S}_0^{i+1},\mathcal{S}_1^{i+1},\emptyset$ is a TTS. Note that since $w$ does not need the nodes of $\mathcal{S}_0$ to become positive, we can set $\tau(w)=0$.
\end{itemize} 
\end{proof}

\section{Correctness of Algorithm~\ref{Algorithm3}}
\label{appendix-tree}

We prove the correctness of Algorithm~\ref{Algorithm3}, in a constructive and step by step fashion, by providing a series of Lemmas.

\begin{lemma}\label{threshold0}
Let $G=(V,E)$ be a graph with threshold assignment $\tau$, where $\tau(v)=0$ for some node $v\in V$. Consider $G\setminus v$ with threshold assignment $\tau'$ where 
\begin{equation*}
\tau'(u)=
\begin{cases}
\tau(u)-1& u\in N(v), \tau(u)\geq 1;\\
\tau(u)&\text{otherwise}.
\end{cases}
\end{equation*}
Then, we have $\overleftrightarrow{MTT}(G,\tau)=\overleftrightarrow{MTT}(G\setminus v,\tau^{\prime})$.
% and also $\overrightarrow{MT}(G,\tau)=\overrightarrow{MT}(G\setminus v,\tau')$.
\end{lemma}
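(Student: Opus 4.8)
The plan is to exploit the fact that a node $v$ with $\tau(v)=0$ is essentially ``for free'': since $|N(v)\cap W|\ge 0=\tau(v)$ for every $W\subseteq V$, in any TTS of $(G,\tau)$ we have $v\in Q_i$ for all $i\ge 1$, so $v$ is positive from time step $1$ onward no matter what the manipulator does. Thus $v$ behaves like a permanently-on vertex (from time $1$ on), supplying each neighbour $u$ with one guaranteed positive neighbour — which is exactly the $-1$ built into $\tau'(u)$ for $u\in N(v)$. I would prove the two inequalities $\overleftrightarrow{MTT}(G\setminus v,\tau')\le\overleftrightarrow{MTT}(G,\tau)$ and $\overleftrightarrow{MTT}(G,\tau)\le\overleftrightarrow{MTT}(G\setminus v,\tau')$ separately, using throughout the elementary monotonicity of the non-progressive update (a larger positive set at step $i$ yields a larger positive set at step $i+1$, since the targeted sets are fixed).

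For $\overleftrightarrow{MTT}(G\setminus v,\tau')\le\overleftrightarrow{MTT}(G,\tau)$: take an optimal TTS $\mathcal S_0,\dots,\mathcal S_k$ of $(G,\tau)$ with positive sets $\mathcal A_0,\dots,\mathcal A_k$, set $\mathcal S_i':=\mathcal S_i\setminus\{v\}$, and let $\mathcal A_i'$ denote the positive sets this target sequence induces in $(G\setminus v,\tau')$. I would show by induction on $i$ that $\mathcal A_i\setminus\{v\}\subseteq\mathcal A_i'$. The base case is immediate from $\mathcal A_0=\mathcal S_0$. In the step, a node $u\in\mathcal A_{i+1}\setminus\{v\}$ lying in $\mathcal S_{i+1}$ is trivial; if instead $u$ was infected by the rule, i.e.\ $|N_G(u)\cap\mathcal A_i|\ge\tau(u)$, then for $u\notin N(v)$ neighbourhoods and thresholds agree and the induction hypothesis closes the case, for $u\in N(v)$ with $\tau(u)\ge 1$ we lose at most the single vertex $v$ from the count, giving $|N_{G\setminus v}(u)\cap(\mathcal A_i\setminus\{v\})|\ge\tau(u)-1=\tau'(u)$, and for $u\in N(v)$ with $\tau(u)=0$ we have $\tau'(u)=0$ so $u$ is infected outright. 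Since $\mathcal A_k=V$ forces $\mathcal A_k'\supseteq V\setminus\{v\}$ and $\mathcal S_k'=\emptyset$, the sequence $\mathcal S_0',\dots,\mathcal S_k'$ is a valid TTS of $(G\setminus v,\tau')$ of size at most $\sum_i|\mathcal S_i|$.

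For the reverse inequality, take an optimal TTS $\mathcal T_0,\dots,\mathcal T_k$ of $(G\setminus v,\tau')$ with positive sets $\mathcal B_0,\dots,\mathcal B_k$, and build the \emph{one-step-delayed} sequence $\mathcal S_0:=\emptyset$, $\mathcal S_{i+1}:=\mathcal T_i$ for $0\le i\le k$ in $(G,\tau)$; the leading empty step is there precisely to let $v$ switch on. Writing $\mathcal A_j$ for its positive sets in $G$, I would prove by induction that $v\in\mathcal A_{i+1}$ and $\mathcal B_i\subseteq\mathcal A_{i+1}$ for all $i\ge 0$. Membership of $v$ is automatic from $\tau(v)=0$; for $u\in\mathcal B_{i+1}$ either $u\in\mathcal T_{i+1}=\mathcal S_{i+2}\subseteq\mathcal A_{i+2}$, or $u$ was infected in $G\setminus v$, in which case, using $\mathcal B_i\subseteq\mathcal A_{i+1}$ and, when $u\in N(v)$ with $\tau(u)\ge 1$, the extra positive neighbour $v\in\mathcal A_{i+1}$ that supplies exactly the unit by which $\tau(u)$ exceeds $\tau'(u)$, we get $|N_G(u)\cap\mathcal A_{i+1}|\ge\tau(u)$, so $u\in\mathcal A_{i+2}$. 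At the end $\mathcal B_k=V\setminus\{v\}$, hence $\mathcal A_{k+1}\supseteq(V\setminus\{v\})\cup\{v\}=V$, so the delayed sequence is a TTS of $(G,\tau)$ of size $\sum_i|\mathcal T_i|$. Combining the two bounds yields the stated equality.

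The main obstacle — and the reason the statement is not a one-liner — is that one cannot simply assert ``without loss of generality $v\notin\mathcal S_i$ for all $i$'': deleting $v$ from $\mathcal S_0$ genuinely removes the help $v$ would give its neighbours at step $1$, so a direct restriction argument needs the careful per-neighbour accounting above, monotonicity in the forward direction, and the one-step shift in the converse direction. I would also dispose of the trivial boundary cases ($v$ isolated, where $\tau'$ is just the restriction of $\tau$ and both sides drop the irrelevant component, and $V=\{v\}$, where both sides equal $0$), which the general argument already subsumes.
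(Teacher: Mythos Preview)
Your proof is correct and rests on the same two ingredients as the paper's: the observation that $v$ with $\tau(v)=0$ is automatically in $Q_i$ for every $i\ge 1$, and the one-step shift (prepending $\emptyset$) that lets $v$ switch on before the real schedule starts. The reverse inequality is handled identically in both proofs.

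The forward direction is organised slightly differently. You remove $v$ from every $\mathcal S_i$ and carry out a per-neighbour induction showing $\mathcal A_i\setminus\{v\}\subseteq\mathcal A_i'$. The paper instead first argues that in a \emph{minimum} TTS one may assume $v\notin\bigcup_i\mathcal S_i$: if $v$ appeared somewhere, then $\emptyset,\mathcal S_0\setminus\{v\},\dots,\mathcal S_k\setminus\{v\}$ would be a strictly smaller TTS --- exactly the shift trick you use for the other direction --- contradicting minimality. After that reduction the restriction to $G\setminus v$ is declared ``obvious''. So your closing remark that ``one cannot simply assert without loss of generality $v\notin\mathcal S_i$'' is not quite right: one can, provided one invokes minimality together with the shift, and this is precisely what the paper does. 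Your route avoids the minimality appeal and is a bit more explicit; the paper's route is shorter but leans on the same mechanism you already deployed elsewhere.
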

\begin{proof}
Let $\mathcal{S}_0,\mathcal{S}_1,\ldots,\mathcal{S}_k$ be a TTS of minimum size for $(G,\tau)$. We claim that $v\notin \bigcup_{i=0}^{k}\mathcal{S}_i$ because otherwise $\emptyset,\mathcal{S}_0\setminus v,\mathcal{S}_1\setminus v,\ldots,\mathcal{S}_k\setminus v$ is a TTS with a smaller size which is a contradiction. Obviously, $\mathcal{S}_0,\mathcal{S}_1,\ldots,\mathcal{S}_k$ is a TTS for $(G\setminus v,\tau')$ which implies that $\overleftrightarrow{MTT}(G\setminus v,\tau^{\prime})\leq\overleftrightarrow{MTT}(G,\tau)$. On the other hand, if $\mathcal{S}_0,\mathcal{S}_1,\ldots,\mathcal{S}_k$ is a TTS for $(G\setminus v,\tau')$, then $\emptyset,\mathcal{S}_0,\mathcal{S}_1,\ldots,\mathcal{S}_k$ is a TTS for $(G,\tau)$ and hence $\overleftrightarrow{MTT}(G,\tau)\leq\overleftrightarrow{MTT}(G\setminus v,\tau^{\prime})$. (We set $\emptyset$ at the beginning of the sequence to let node $v$ become positive.) 
\end{proof}
According to Lemma~\ref{threshold0}, we can assume that there is no node whose threshold is zero. Furthermore, as mentioned before, we assume that $\tau(v)\le d(v)$ because otherwise node $v$ can not become and remain positive. This implies that for any leaf node $v$ of a tree we have $\tau(v)=1$.

\begin{lemma}\label{lemma2}
Let $T$ be a tree with threshold assignment $\tau$ and $v$ be a node belonging to set $A$. If $\mathcal{S}_0,\mathcal{S}_1,\ldots, \mathcal{S}_k$ is a TTS of $T$, then $\sum_{i=0}^k \left|\mathcal{S}_i\cap \mathcal{L}[v]\right|\geq2$.
	% \begin{enumerate}
	% 	\item If $\mathcal{S}_0,\mathcal{S}_1,\ldots, \mathcal{S}_k$ is a TTS of $T$ under non-progressive model, then 
	% 	$$\sum_{i=0}^k \left|\mathcal{S}_i\cap \mathcal{L}[v]\right|\geq2.$$
	% 	\item If $\mathcal{S}$ is a TS of $T$ under progressive model, then 
	% 	$$|\mathcal{S}\cap \mathcal{L}[v]|\geq1.$$
	% \end{enumerate}
\end{lemma}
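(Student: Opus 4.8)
The plan is to argue by contradiction. Write $\mathcal{A}_i=\mathcal{S}_i\cup Q_i$ for the set of positive nodes at step $i$, where $Q_0,\dots,Q_k$ is the sequence witnessing that $\mathcal{S}_0,\dots,\mathcal{S}_k$ is a TTS, so $\mathcal{A}_0=\mathcal{S}_0$ and $\mathcal{A}_k=V$. Assume $\sum_{i=0}^{k}|\mathcal{S}_i\cap\mathcal{L}[v]|\le 1$; the aim is to show that $v$ and all of its leaf neighbours are never simultaneously positive, which contradicts $\mathcal{L}[v]\subseteq V=\mathcal{A}_k$. First I would record two local update facts that use $v\in A$. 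Since every leaf $w$ has $N(w)=\{v\}$ and $\tau(w)=1$ (leaves have threshold one, after the normalization of Lemma~\ref{threshold0}, which also justifies the implicit assumption $\tau(v)>0$), for $i\ge 1$ a leaf neighbour $w$ of $v$ is positive at step $i$ if and only if $w\in\mathcal{S}_i$ or $v$ is positive at step $i-1$; and since $v\in A$ gives $\tau(v)>\bar{l}(v)=|\bar{\mathcal{L}}(v)|$, whenever $v$ is positive at step $i$ with $v\notin\mathcal{S}_i$ at least one leaf neighbour of $v$ must have been positive at step $i-1$ (the $\bar{l}(v)$ non-leaf neighbours alone cannot meet $\tau(v)$). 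Note also that $v\in A$ forces $l(v)\ge 1$, hence $|\mathcal{L}[v]|\ge 2$.

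Second, I would pin down where the single allowed targeting occurs. Let $t$ be the first step at which some node of $\mathcal{L}[v]$ is positive (it exists since $\mathcal{L}[v]\subseteq\mathcal{A}_k$). By minimality of $t$, neither $v$ nor any leaf neighbour of $v$ is positive before step $t$, so the two update facts (together with $\mathcal{A}_0=\mathcal{S}_0$ when $t=0$) force every node of $\mathcal{A}_t\cap\mathcal{L}[v]$ to lie in $\mathcal{S}_t$. Combined with $\sum_i|\mathcal{S}_i\cap\mathcal{L}[v]|\le 1$ this yields $|\mathcal{A}_t\cap\mathcal{L}[v]|=|\mathcal{S}_t\cap\mathcal{L}[v]|=1$ and $\mathcal{S}_j\cap\mathcal{L}[v]=\emptyset$ for every $j\ne t$. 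Consequently, after step $t$ no node of $\mathcal{L}[v]$ is ever targeted, so by the leaf update fact the leaf neighbours of $v$ move in lockstep thereafter: at any step $i>t$ they are either all positive or all negative, according to whether $v$ was positive at step $i-1$.

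Third, I would close the argument. Let $s$ be the first step with $\mathcal{L}[v]\subseteq\mathcal{A}_s$; it exists because $\mathcal{A}_k=V$. Since $|\mathcal{L}[v]|\ge 2$ while $\mathcal{A}_j\cap\mathcal{L}[v]$ is empty for $j<t$ and a singleton for $j=t$, we get $s>t$, so no node of $\mathcal{L}[v]$ lies in $\mathcal{S}_s$. Applying the leaf update fact to any leaf neighbour of $v$ (all of which are positive at $s$) gives $v\in\mathcal{A}_{s-1}$, and applying the $v$ update fact at step $s$ gives a leaf neighbour $w'\in\mathcal{A}_{s-1}$. Hence $\mathcal{A}_{s-1}$ contains the two distinct nodes $v$ and $w'$ of $\mathcal{L}[v]$, so $|\mathcal{A}_{s-1}\cap\mathcal{L}[v]|\ge 2$; this rules out $s-1<t$ and $s-1=t$, so $s-1>t$. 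But then the lockstep property at step $s-1$ upgrades $w'\in\mathcal{A}_{s-1}$ to ``all leaf neighbours of $v$ are in $\mathcal{A}_{s-1}$'', and combined with $v\in\mathcal{A}_{s-1}$ this gives $\mathcal{L}[v]\subseteq\mathcal{A}_{s-1}$, contradicting the minimality of $s$.

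The main obstacle I anticipate is the bookkeeping in this last step: one must ensure the witness step is strictly after $t$ before invoking the lockstep behaviour of the leaves, and it is precisely the cardinality comparison $|\mathcal{A}_{s-1}\cap\mathcal{L}[v]|\ge 2$ against $|\mathcal{A}_t\cap\mathcal{L}[v]|=1$ that forces $s-1>t$. Everything else — the two update facts, the handling of step $0$ via $\mathcal{A}_0=\mathcal{S}_0$, and $\mathcal{S}_k=\emptyset$ implying $t<k$ — is a direct reading of the TTS update rule together with the definition of $A$.
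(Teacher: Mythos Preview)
Your proof is correct. The paper's argument reaches the same contradiction (``$v$ and its leaf neighbours are never all positive simultaneously'') but via a shortcut: it first replaces the given TTS by one in which every node outside $\mathcal{L}[v]$ is targeted at every step, noting that this does not change $\sum_i|\mathcal{S}_i\cap\mathcal{L}[v]|$ and can only enlarge the positive sets by monotonicity. After this reduction all non-leaf neighbours of $v$ are permanently positive, so the local analysis around $v$ becomes that of a star, and the impossibility of making $\mathcal{L}[v]$ entirely positive with at most one targeting is asserted in one line. You instead work directly with the original TTS, extracting the two local update facts and the lockstep behaviour of the leaves to push the first time $s$ with $\mathcal{L}[v]\subseteq\mathcal{A}_s$ back to $s-1$. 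The paper's route is shorter because the monotonicity reduction absorbs the bookkeeping; yours is self-contained and makes explicit exactly where $\tau(v)>\bar l(v)$ and $\tau(w)=1$ for leaves are used.
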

\begin{proof}
For a TTS $\mathcal{S}_0,\mathcal{S}_1,\ldots, \mathcal{S}_k$ with minimum value of $\sum_{i=0}^k \left|\mathcal{S}_i\cap \mathcal{L}[v]\right|$, we may assume that $V(T)\setminus \mathcal{L}[v]\subseteq \mathcal{S}_i$ for all $0\leq i\leq k$. If $\sum_{i=0}^{k}\left|\mathcal{S}_i\cap \mathcal{L}[v]\right|\leq1$, then $v$ and its leaf neighbors can not get positive simultaneously in any step and this contradicts the assumption that $\mathcal{S}_0,\mathcal{S}_1,\ldots, \mathcal{S}_k$ is a TTS. 	
% For the second part suppose on the contrary that $\mathcal{S}$ is a TS of $T$ with $|\mathcal{S}\cap \mathcal{L}[v]|=0$. So $v$ does not get positive according to the definition of $A$ which is a contradiction. 
\end{proof}
For any two distinct nodes $u$ and $v$, we have $\mathcal{L}[u]\cap \mathcal{L}[v]=\emptyset$. Thus, it is immediate to derive Corollary~\ref{corollary1} from Lemma~\ref{lemma2}.
\begin{corollary}\label{corollary1}
For a pair $(T,\tau)$, we have $\overleftrightarrow{MTT}(T,\tau)\geq 2|A|$.
% and also $\overrightarrow{MT}(T,\tau)\geq |A|$.
\end{corollary}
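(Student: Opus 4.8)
The plan is to derive the corollary from Lemma~\ref{lemma2} by a disjointness and double-counting argument. Fix a minimum size TTS $\mathcal{S}_0,\mathcal{S}_1,\ldots,\mathcal{S}_k$ of $(T,\tau)$, so that $\overleftrightarrow{MTT}(T,\tau)=\sum_{i=0}^k|\mathcal{S}_i|$. Lemma~\ref{lemma2} already supplies, for every $v\in A$, the local bound $\sum_{i=0}^k|\mathcal{S}_i\cap\mathcal{L}[v]|\ge 2$; the corollary is then just the bookkeeping that assembles these local bounds into a global one.

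First I would establish that the closed leaf-neighborhoods $\{\mathcal{L}[v]:v\in A\}$ are pairwise disjoint. This follows from a short case analysis: if $w\in\mathcal{L}[u]\cap\mathcal{L}[v]$ for distinct $u,v\in A$, then $w$ equals $u$ or $v$ or is a common leaf neighbor. A member of $A$ satisfies $d(\cdot)>1$, hence is not a leaf, which rules out $w\in\{u,v\}$ lying in the other node's leaf set; and a leaf has a unique neighbor, so it cannot be a leaf neighbor of both $u$ and $v$. This is the only step that needs any care, and it is precisely why the statement is phrased for nodes of $A$ rather than arbitrary nodes; it also just records the observation stated immediately before the corollary.

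Next, for each fixed time step $i$, since the sets $\mathcal{L}[v]$ with $v\in A$ are disjoint subsets of $V(T)$,
\[
\sum_{v\in A}\bigl|\mathcal{S}_i\cap\mathcal{L}[v]\bigr|=\Bigl|\mathcal{S}_i\cap\bigcup_{v\in A}\mathcal{L}[v]\Bigr|\le|\mathcal{S}_i|.
\]
Summing over $i$, exchanging the order of summation, and applying Lemma~\ref{lemma2} to each $v\in A$ gives
\[
\overleftrightarrow{MTT}(T,\tau)=\sum_{i=0}^k|\mathcal{S}_i|\ \ge\ \sum_{v\in A}\sum_{i=0}^k\bigl|\mathcal{S}_i\cap\mathcal{L}[v]\bigr|\ \ge\ 2|A|,
\]
which is the claim.

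I do not expect a genuine obstacle here, since all the real content sits in Lemma~\ref{lemma2}. The only place to be mildly careful is the disjointness verification and its degenerate cases (e.g.\ a $K_2$ component, or a leaf whose unique neighbor is itself a leaf), and these are all excluded once we restrict attention to $A\subseteq\{v:d(v)>1\}$.
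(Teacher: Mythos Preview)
Your proposal is correct and follows exactly the approach indicated in the paper: use the pairwise disjointness of the sets $\mathcal{L}[v]$ for $v\in A$ (stated just before the corollary) and sum the local bounds from Lemma~\ref{lemma2}. Your write-up simply makes the disjointness verification and the double-counting explicit, but the argument is the same.
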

\begin{lemma}\label{A}
For a pair $(T,\tau)$, suppose that all non-leaf nodes of $T$ (except possibly one) belong to $A'\cup B$ (i.e., $|A^{\prime\prime}\cup C|\le 1$). Then, we have $\overleftrightarrow{MTT}(T,\tau)=2|A|$. (There is in fact a minimum size TTS for which every node that becomes positive will remain positive until the end of the process.)
% \begin{enumerate}
% \item $\overleftrightarrow{MTT}(T,\tau)=2|A|$. Also there is a minimum size TTS for which every activated node will stay activated until the end of the activation process. 
% \item $\overrightarrow{MT}(T,\tau)=|A|$.
% \end{enumerate}
\end{lemma}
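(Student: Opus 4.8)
The plan is to combine the lower bound $\overleftrightarrow{MTT}(T,\tau)\ge 2|A|$, which is already available as Corollary~\ref{corollary1}, with a matching construction of a TTS of size exactly $2|A|$. So the only work is to exhibit a valid TTS of size $2|A|$ under the hypothesis $|A''\cup C|\le 1$, and to verify the parenthetical claim that it can be taken to be ``monotone'' (every node that turns positive stays positive). First I would normalize the setup: by Lemma~\ref{threshold0} we may assume every threshold is at least $1$, and since $\tau(v)\le d(v)$ every leaf has threshold $1$. Then I would root $T$ at the unique node of $A''\cup C$ if it exists, and at an arbitrary non-leaf node otherwise. Write $V(T)=L_0\cup L_1\cup\cdots\cup L_d$ for the level decomposition from the root, with $L_{-1}=\emptyset$. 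The proposed TTS is $\mathcal{S}_i:=(L_{d-i}\cup L_{d-i-1})\cap A$ for $0\le i\le d$ (and then $\mathcal{S}_{d+1}=\emptyset$ as a closing empty set if one wants $\mathcal{S}_k=\emptyset$). Since each node of $A$ lies in exactly two consecutive levels' worth of these sets (its own level $L_j$ appears in $\mathcal{S}_{d-j}$ and in $\mathcal{S}_{d-j-1}$), the total size is $2|A|$, matching the lower bound; hence equality would follow once we show it is a genuine TTS.

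The core of the argument is an induction on the depth $d$ of the tree showing that $Q_k=V$, i.e., that every node is eventually positive. The base case $d=1$ is a star: the center is the root; if the center is in $A$ it is targeted in $\mathcal{S}_0$ and $\mathcal{S}_1$, the leaves (all threshold $1$) become positive in step $1$ from the center, and the center, being in $A$ with $\bar l(v)=0<\tau(v)$ but all leaf-neighbors positive, stays positive — this is exactly the situation of Example~\ref{example}. If the center is not in $A$, one checks directly that the leaves-and-center still synchronize. For the inductive step, the key structural observation is what the hypothesis $|A''\cup C|\le1$ buys us: every non-leaf node $v\ne\text{root}$ satisfies $\tau(v)<\bar l(v)$ (if $v\in B$) or $\tau(v)>\bar l(v)$ with $\tau(v)<d(v)$ (if $v\in A'$), so in particular $v$ has strictly fewer non-leaf neighbors than $d(v)$ ($v\notin A''$) and is not in $C$. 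I would process levels from the bottom up, tracking the invariant that once a subtree ``below'' a node has been activated, the node can be kept positive using the combination of its (targeted) leaf neighbors and its already-positive children, precisely because its threshold does not force it to need its parent. The targeting pattern $\mathcal{S}_i=(L_{d-i}\cup L_{d-i-1})\cap A$ is designed so that at the step when level-$j$ nodes need to turn positive, their level-$(j+1)$ children are already positive and their level-$j$ selves (if in $A$) are being targeted, giving enough positive neighbors to meet the threshold, and one step later the children of level $j-1$ (namely the level-$j$ nodes) are positive so level $j-1$ can fire.

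I expect the main obstacle to be the bookkeeping in the inductive step: one must check that for a node $v\in B$ (threshold small, many non-leaf neighbors) it really does accumulate $\tau(v)$ positive neighbors at the right moment from the bottom-up wave \emph{without} relying on its parent, and simultaneously that for $v\in A'$ the targeting of $v$ itself together with its children suffices. The cleanest way is probably to prove a slightly stronger inductive statement: for the subtree $T'$ rooted at any non-leaf node $u\notin A''\cup C$, with $\tau$ restricted, the sequence $\mathcal{S}'_i=(L'_{d'-i}\cup L'_{d'-i-1})\cap A$ is a \emph{monotone} TTS for $(T',\tau)$ of size $2|V(T')\cap A|$ that additionally makes the root $u$ positive by its last step using only vertices of $T'$. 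Monotonicity is what makes the gluing trivial — once a child's subtree is done it stays done — and it also directly yields the parenthetical remark in the lemma. The one case needing care is the root itself (which may be in $A''$ or $C$): but that node is targeted in $\mathcal{S}_0,\mathcal{S}_1$ if it is in $A$, and if it is the lone element of $C$ or $A''\setminus A$ one checks it becomes and stays positive once all its neighbors are positive, which the bottom-up wave guarantees at the final step. Assembling these pieces gives $\overleftrightarrow{MTT}(T,\tau)\le 2|A|$, and with Corollary~\ref{corollary1} the equality, along with the existence of a monotone optimal TTS.
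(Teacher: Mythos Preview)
Your approach is correct and essentially the same as the paper's: root at the unique node of $A''\cup C$ (if any), exhibit a TTS of size $2|A|$ by a bottom-up construction, verify it by induction on depth using that every non-root non-leaf node lies in $A'\cup B$ and hence can be sustained by its children without its parent, and combine with Corollary~\ref{corollary1} for the matching lower bound. The paper's appendix uses a recursive union-of-subtree-TTSes construction (run the children's subtrees in parallel from time $0$, then append $\{v\},\{v\},\emptyset$ if the root is in $A$) rather than your explicit level formula $\mathcal{S}_i=(L_{d-i}\cup L_{d-i-1})\cap A$; the two differ only in timing when subtrees have unequal depths, and the inductive verification is the same in spirit. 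Two small slips in your write-up: with your indexing the root is targeted in $\mathcal{S}_{d-1}$ and $\mathcal{S}_d$, not $\mathcal{S}_0,\mathcal{S}_1$; and $A''\setminus A=\emptyset$ since $A''\subseteq A$.
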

\begin{proof}
% We only prove the first part. The proof of the second part is analogous.
Let $v$ be a non-leaf node, which belongs to $A''\cup C$ in case $A''\cup C\neq \emptyset$. Consider the rooted version of $T$ with root $v$. We prove by induction on the depth of tree $d$. 
	
For $d=1$, $T$ is a star with $v$ in the center. So $\bar{l}(v)=0$. Since we assumed there is no node with threshold $0$, $\tau(v)>\bar{l}(v)$ and hence $v\in A$. Clearly, $\mathcal{S}_0=\{v\},\mathcal{S}_1=\{v\}$ is a TTS of size $2$. Using Lemma~\ref{lemma2}, we conclude that $\overleftrightarrow{MTT}(T,\tau)=2$. So we have $\overleftrightarrow{MTT}(T,\tau)=2|A|$. Clearly, in this TTS every positive node does not become negative in any step. 
	
Suppose that the statement is true for all trees with $d<r$. Let $T$ be a tree with a non-leaf node $v$ as its root such that $d=r$ and for every $u\in V(T)\setminus \{v\}$ we have $u\notin A''\cup C$. Let $u_1,\ldots,u_{\bar{l}(v)}$ be non-leaf children of $v$. For each $1\leq i \leq \bar{l}(v)$, let $T_i$ denote the induced subtree on $u_i$ and its descendants. Denote by $\tau_i$ the restriction of $\tau$ to $V(T_i)$ as the threshold assignment of $T_i$. We claim that $(T_i,\tau_i)$ rooted at $u_i$ has the conditions of the induction hypothesis.
	
First, note that $\tau_i(u_i)\leq d_{T_i}(u_i)$ because otherwise $\tau(u_i) = \tau_i(u_i) > d_{T_i}(u_i) = d(u_i)-1$ which implies $\tau(u_i)=d(u_i)$. If $u_i$ does not have leaf neighbors, then $\tau(u_i)=d(u_i)=\bar{l}(u_i)$ and so $u_i\in C$. In case $u_i$ has a leaf neighbor, we have $\tau(u_i)=d(u_i)>\bar{l}(u_i)$ and so $u_i\in A''$. Both cases contradict the fact that the only node of $T$ belonging to $A''\cup C$ can be $v$. 
	
The nodes of $T_i$ other than $u_i$, have the same threshold and degree as theirs in $T$ and hence they are not in $A_i''\cup C_i$. So $T_i$ has our desired conditions and by the induction hypothesis $\overleftrightarrow{MTT}(T_i,\tau_i)=2|A_i|$ and it has a minimum size TTS for which any positive node never becomes negative. Note that $A_i,B_i,C_i,A'_i$ and $A''_i$ are defined similar to $A,B,C,A'$ and $A''$ respectively for subtree $T_i$.
	
Since $u_i\notin C$, we consider the following two cases for $u_i$, where $\bar{l}_{T_i}(u_i)= |\bar{\mathcal{L}}(u_i) \cap T_i|$:
	\begin{enumerate}
		\item $u_i\in A$, then $\bar{l}_{T_i}(u_i)=\bar{l}(u_i)-1<\tau(u_i)-1<\tau(u_i) = \tau_i(u_i)$ and so we have $u_i\in A_i$. 
		\item $u_i\in B$, then $\tau(u_i)\leq \bar{l}(u_i)-1=\bar{l}_{T_i}(u_i)$ and so in the case $\tau(u_i)<\bar{l}_{T_i}(u_i)$, $u_i\in B_i$ and in the case $\tau(u_i)=\bar{l}_{T_i}(u_i)$, $u_i\in C_i$. 
	\end{enumerate}
According to the above cases for $u_i$ and what mentioned about the degree and threshold of the other nodes of $T_i$ we have: 
	\begin{equation}\label{key1}
		\left|A\right|=\sum_{i=1}^{\bar{l}(v)}\left|A_i\right|+f(v)
	\end{equation}
where $f(v)=1$ in case $v\in A$ and $f(v)=0$ otherwise. 
	
For each $1\leq i\leq\bar{l}(v)$, suppose that $\mathcal{S}_{i,0},\mathcal{S}_{i,1},\ldots,\mathcal{S}_{i,k_i}$ is a TTS of $T_i$ of minimum size where every positive node remains positive until the end of the process. Set $k:=\max \{k_i:1\leq i \leq \bar{l}(v)\}$. For each $k_i < j \leq k$ set $\mathcal{S}_{i,j}:=\emptyset$. For each $1\leq j\leq k$, set $\mathcal{S}_j:=\bigcup_{i=1}^{\bar{l}(v)}\mathcal{S}_{i,j}$. 
	
If $v\in B\cup C$, then $\mathcal{S}_0,\mathcal{S}_1,\ldots,\mathcal{S}_k,\mathcal{S}_{k+1}=\emptyset,\mathcal{S}_{k+2}=\emptyset$ will be a TTS of $T$. This is because all nodes in $T_i$'s become positive until the $k$-th step and will stay positive. In the $(k+1)$-th step, $v$ and in the $(k+2)$-th step the leaves of $v$ become positive and they will remain positive. 
	
If $v\in A$, then $\mathcal{S}_0,\mathcal{S}_1,\ldots,\mathcal{S}_k,\mathcal{S}_{k+1}=\{v\},\mathcal{S}_{k+2}=\{v\}, \mathcal{S}_{k+3}=\emptyset$ will be a TTS. This is because all nodes in $T_i$'s become positive until the $k$-th step and will remain positive. In the $(k+1)$-th step, $v$ and in the $(k+2)$-th step, the leaves of $v$ become positive and remain positive.
	
According to Equation~\eqref{key1}, the timed target sets presented in both cases give $\overleftrightarrow{MTT}(T,\tau)\leq2|A|$. This inequality together with Lemma~\ref{lemma2} completes the proof. 
\end{proof}
\begin{remark}\label{remA}
Note that the proof of the previous lemma is constructive, and it gives an algorithm for finding such a minimum size TTS of $(T,\tau)$. 
\end{remark}

% \begin{algorithm}
% \caption{Minimum TTS of trees with $|A''\cup C|\leq 1$}
%  \label{Algorithm1}
% 	\KwIn{A tree $T$ with threshold assignment $\tau$ for which $|A''\cup C|\leq1$.}
% 	\KwOut{A TTS of minimum size such that every activated node will remain positive.}
% 	Determine sets $A,A',A'',B,C$.
	
% 	\If{$|A''\cup C|=1$}{Choose $v\in A''\cup C$ as the root.}
% 	\ElseIf{$|A''\cup C|=0$}{Choose an arbitrary node $v$ as the root.}
% 	Partition $V(T)$ into $V(T)=L_0\cup L_1\cup\cdots\cup L_d$, where the $i$-th level $L_i$ consists of nodes with distance $i$ from $v$. Set $L_{-1}=\emptyset$. 
	
% 	\For{$i=0$ to $d$}
% 	{
% 		Set $\mathcal{S}_i=(L_{d-i}\cup L_{d-i-1})\cap A$
% 	}
% \end{algorithm}

\begin{lemma}\label{C}
Consider a pair $(T,\tau)$, where $T$ is rooted at a node $v\in A''\cup C$. Suppose that $u\in V(T)\setminus \{v\}$ is a node belonging to $C$ such that none of its descendants belongs to $A''\cup C$. Let $T'$ be the induced subtree on $u$ and its descendants and $T''$ be the induced subtree on $\left(V(T)\setminus V(T')\right)\cup\{u\}$. Let also $\tau'$ and $\tau''$ be the threshold assignment of $T'$ and $T''$ respectively as follows: 
\begin{equation*}
\tau'(w)=
\begin{cases}
\tau(w) & w\in V(T')\setminus\{u\} \\
\tau(w)-1 & w=u
\end{cases}
\end{equation*}
\begin{equation*}
\tau''(w)=
\begin{cases}
\tau(w) & w\in V(T'')\setminus\{u\} \\
1 & w=u.
\end{cases}
\end{equation*}
Then, we have $\overleftrightarrow{MTT}(T,\tau)=\overleftrightarrow{MTT}(T',\tau^{\prime})+\overleftrightarrow{MTT}(T'',\tau^{\prime\prime})$.
% \begin{enumerate}
% \item $\overleftrightarrow{MTT}(T,\tau)=\overleftrightarrow{MTT}(T',\tau^{\prime})+\overleftrightarrow{MTT}(T'',\tau^{\prime\prime}).$
% \item $\overrightarrow{MT}(T,\tau)=\overrightarrow{MT}(T',\tau')+\overrightarrow{MT}(T'',\tau^{\prime\prime}).$
% \end{enumerate}
\end{lemma}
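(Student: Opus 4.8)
The plan is to establish the identity $\overleftrightarrow{MTT}(T,\tau)=\overleftrightarrow{MTT}(T',\tau')+\overleftrightarrow{MTT}(T'',\tau'')$ by proving the two inequalities separately, using the fact that $u\in C$ means $\tau(u)=\bar l(u)$, so $u$ needs exactly all of its non-leaf neighbors to be positive (and one leaf for $T'$, none of the leaves for $T''$ after the threshold change). The key structural observation is that $u$ is an articulation point separating $T'$ from $T''$: the only edge connecting $V(T')$ to $V(T)\setminus V(T')$ runs from $u$ to its parent $z$, and since $u$ was chosen so that none of its descendants lies in $A''\cup C$, by Lemma~\ref{A} the subtree $T'$ has a minimum-size TTS of size $2|V(T')\cap A|$ in which every positive node stays positive forever. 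I will exploit this ``monotone'' TTS on the $T'$ side repeatedly.

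For the upper bound $\overleftrightarrow{MTT}(T,\tau)\le\overleftrightarrow{MTT}(T',\tau')+\overleftrightarrow{MTT}(T'',\tau'')$: take a minimum-size monotone TTS for $(T',\tau')$ — note $\tau'(u)=\tau(u)-1$ accounts exactly for the one leaf neighbour of $u$ that must be targeted (or, if $u$ has no leaf, for the extra $+1$ that lets $u$ fire from its non-leaf descendants alone) — and a minimum-size TTS for $(T'',\tau'')$, where $\tau''(u)=1$ means $u$ needs just one positive neighbour in $T''$. Run them ``in parallel'': once the monotone $T'$-process has made $u$ permanently positive, $u$ can serve as the single positive neighbour that drives the $T''$-process (since $\tau''(u')=\tau(u')$ for the parent $z$ and all other $T''$-nodes, their dynamics inside $T''$ are unaffected, and $u$'s permanent positivity feeds $z$). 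Conversely, once $T''$ has been solved and $u$ is permanently positive in $T''$, $u$'s actual threshold $\tau(u)=\bar l(u)$ in $T$ is met because all its non-leaf neighbours (its children in $T'$, which are permanently positive by monotonicity of the $T'$-solution, plus its parent $z$, permanently positive from the $T''$-solution) are positive, so no extra targeting of $u$ in $T$ is needed beyond what the $T'$-solution already spent on it. Concatenating / interleaving the two sequences with a constant number of padding steps $\mathcal S_i=\emptyset$ yields a TTS of $T$ of the claimed total size; one checks the $Q_i$-sequence reaches $V(T)$ by combining the two convergence guarantees.

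For the lower bound $\overleftrightarrow{MTT}(T,\tau)\ge\overleftrightarrow{MTT}(T',\tau')+\overleftrightarrow{MTT}(T'',\tau'')$: given a minimum-size TTS $\mathcal S_0,\dots,\mathcal S_k$ of $(T,\tau)$, restrict it to $V(T')$ and to $V(T'')$. The restriction to $V(T')\setminus\{u\}$ together with what happens at $u$ must form a valid TTS of $(T',\tau')$: the only way nodes of $T'\setminus\{u\}$ interact with the rest of $T$ is through $u$, and $u$'s behaviour is recorded; the threshold drop at $u$ is justified because in $T$, node $u$ could use its parent $z$ as one positive neighbour, which in $T'$ is replaced by the ``$-1$'' — here we need the fact (from Lemma~\ref{A} applied to $T'$, whose non-leaf nodes except $u$ avoid $A''\cup C$) that $\overleftrightarrow{MTT}(T',\tau')=2|V(T')\cap A|$, so the $T$-solution must spend at least $2|V(T')\cap A|$ on $\mathcal L[w]$ for $w\in V(T')\cap A$, disjointly (by $\mathcal L[w]\cap\mathcal L[w']=\emptyset$ and Lemma~\ref{lemma2}). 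The remaining targeting budget, restricted to $V(T'')\setminus\{u\}$ plus a single ``$u$ becomes positive'' event (cost $\le1$, matching $\tau''(u)=1$), forms a valid TTS of $(T'',\tau'')$. Adding the two counts and checking no target node is double-counted (the only shared vertex is $u$, and we carefully assign its targeting cost to the $T'$-side) gives the inequality.

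The main obstacle I expect is the bookkeeping at the shared vertex $u$: making the parallel-composition argument watertight requires showing that a single TTS of $T$ can always be ``normalised'' so that (a) on the $T'$-side it looks like a monotone solution — this is where Lemma~\ref{A}'s monotonicity clause for $T'$ is essential, since otherwise $u$ might oscillate and mislead the $T''$-process — and (b) the targeting spent on $u$ itself is counted exactly once. One must also verify that $u\in C$ (not $A''$, not $B$) is precisely what makes $\tau'(u)=\tau(u)-1$ and $\tau''(u)=1$ consistent: $\tau(u)=\bar l(u)\ge1$ guarantees $\tau'(u)\ge0$ and that in $T''$, where $u$ has lost all its $T'$-side non-leaf neighbours, demanding just one positive neighbour is the right relaxation. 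I would handle this by first proving a clean ``cut lemma'': if $u$ separates $T$ into $T'$ (with a monotone optimal solution) and $T''$, and the thresholds are adjusted as above to encode the cross-edge at $u$, then optimal TTS sizes add — then specialise it to the $C$-node case.
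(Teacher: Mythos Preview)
Your proposal follows the paper's two-inequality structure and correctly points to Lemma~\ref{A} (for the monotone optimum on $T'$) and Lemma~\ref{lemma2} (for the $2|A|$ count) as the key tools. The upper bound is essentially right once you commit to concatenation rather than ``parallel'' execution; note, though, that your phrase ``its parent $z$, permanently positive from the $T''$-solution'' is wrong --- the $T''$-solution is not assumed monotone. The paper instead runs the monotone $T'$-part first (leaving $T'\setminus\mathcal L[u]$ permanently positive) and then appends the $T''$-solution; since the $\bar l(u)-1$ non-leaf children of $u$ are already permanently positive and $\tau(u)=\bar l(u)$, node $u$ turns positive exactly when $z$ does, which suffices at the final step.

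The real gap is in your lower bound. You want to split a minimum TTS of $T$ into a $T'$-part and a $T''$-part with no loss. The inequality $\sum_i|\mathcal S_i\cap V(T')|\ge\overleftrightarrow{MTT}(T',\tau')$ is fine, but your claim that the restriction to $V(T'')\setminus\{u\}$ is a TTS for $(T'',\tau'')$ is exactly the hard point and is not justified: in the $T$-process, $u$ can be positive thanks to its children in $T'$ (without $z$), and that positivity of $u$ then helps $z$ meet its threshold; in the $(T'',\tau'')$-process $u$ only mirrors $z$, so $u$ may be positive strictly less often and $z$ need not ever fire. Your fixes --- ``add a single `$u$ becomes positive' event'' or ``assign $u$'s targeting to the $T'$-side'' --- do not close this: note that $u\in C_{T'}$, so $u\notin A_{T'}$, hence any targeting of $u$ or its leaves is \emph{not} absorbed by the $2|A_{T'}|$ bound on the $T'$-side. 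The paper resolves this with a concrete normalization you only gesture at: whenever some $\mathcal S_i$ contains $u$ (respectively a leaf of $u$), delete it and instead insert the parent $z$ into $\mathcal S_{i-1}$ (respectively $\mathcal S_{i-2}$). After this, no $\mathcal S_i$ meets $\mathcal L[u]$; together with the equality $\sum_i|\mathcal S_i\cap(T'\setminus\mathcal L[u])|=\overleftrightarrow{MTT}(T',\tau')$ (forced by minimality once one substitutes the monotone $T'$-optimum for the $T'\setminus\mathcal L[u]$ portion), the remaining targeting lies entirely in $V(T'')\setminus\{u\}$ and is now a valid TTS for $(T'',\tau'')$. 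That normalization-plus-equality step is the technical core you are missing.
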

\begin{proof}
% We just prove the first part and the second part is analogous. 
Assume that $\mathcal{S}_0',\mathcal{S}_1',\ldots,\mathcal{S}_{k'}'$ is a minimum size TTS for $(T',\tau')$, following the construction from Lemma~\ref{A}. Note that due to the construction method, we may suppose that $\mathcal{S}_{k'-1}'=\mathcal{S}_{k'}'=\emptyset$, $u\in Q_{k'-1}'$ and $\mathcal{L}(u)\subseteq Q_{k'}'$. Furthermore, let $\mathcal{S}_0'',\mathcal{S}_1'',\ldots,\mathcal{S}_{k''}''$ be a TTS of minimum size for $(T'',\tau'')$. Then clearly $\mathcal{S}_0',\mathcal{S}_1',\ldots,\mathcal{S}_{k'-2}',\mathcal{S}_0'',\mathcal{S}_1'',\ldots,\mathcal{S}_{k''}''$ is a TTS for $(T,\tau)$. Note that in this TTS of $T$ the node $u$ will become positive after its parent becomes positive and one step later its leaf neighbors become positive. Hence 
\begin{align*}
\overleftrightarrow{MTT}(T,\tau) \leq & \sum_{i=0}^{k'}|\mathcal{S}_i'|+\sum_{i=0}^{k''}|\mathcal{S}_i''|\\ = &\overleftrightarrow{MTT}(T',\tau^{\prime})+\overleftrightarrow{MTT}(T'', \tau^{\prime\prime}).
\end{align*}
	
Now, we will show that $\overleftrightarrow{MTT}(T,\tau)\geq\overleftrightarrow{MTT}(T',\tau^{\prime})+\overleftrightarrow{MTT}(T'',\tau^{\prime\prime})$. Let $\mathcal{S}_0,\mathcal{S}_1,\ldots,\mathcal{S}_k$ be a minimum size TTS of $T$. Let the sets $A_{T'},B_{T'}$ and $C_{T'}$ be defined for $T'$ similar to $A,B$ and $C$ for tree $T$. Since $u\in C$ for tree $T$, it is easy to see that $u\in C_{T'}$ for $T'$. Hence, using Lemma~\ref{lemma2} yields
	\begin{equation*}
		\sum_{i=0}^{k}\left|\mathcal{S}_i\cap\left(T'\setminus \mathcal{L}[u]\right)\right| \geq 2 |A_{T'}|.
	\end{equation*}
Then, applying Lemma~\ref{A} gives
	\begin{equation}\label{key2}
		\sum_{i=0}^{k}\left|\mathcal{S}_i\cap\left(T'\setminus \mathcal{L}[u]\right)\right|\geq{\rm \overleftrightarrow{MTT}(T',\tau^{\prime})}.
	\end{equation}
Suppose that $\mathcal{S}_0',\mathcal{S}_1',\ldots,\mathcal{S}_{k'}'$ is a minimum size TTS of $T'$ such that every positive node remains positive until the end of the process, using the construction given in Lemma~\ref{A}. Then 
\begin{align*}
\mathcal{S}_0',\mathcal{S}_1',\ldots,\mathcal{S}_{k'}',\emptyset,\emptyset,
\mathcal{S}_0\setminus\left(T'\setminus \mathcal{L}[u]\right), \mathcal{S}_1\setminus\left(T'\setminus \mathcal{L}[u]\right), \\ \ldots, \mathcal{S}_k\setminus\left(T'\setminus \mathcal{L}[u]\right)
\end{align*}
is a TTS of $T$.
% Using Equation~\eqref{key2} we conclude that its size is at most $\overleftrightarrow{MTT}(T',\tau')$. But the definition of $\overleftrightarrow{MTT}(T',\tau')$ implies that the size of this TTS is exactly $\overleftrightarrow{MTT}(T',\tau^{\prime})$.
Thus, Equation~\eqref{key2} holds in equality, which implies that
\begin{equation}
\sum_{i=0}^{k}\left|\mathcal{S}_i\cap\left(T'\setminus \mathcal{L}[u]\right)\right|=\overleftrightarrow{MTT}(T',\tau^{\prime}).
\end{equation}
Now, we consider the following cases for $\mathcal{S}_i\cap \mathcal{L}[u]$: 
\begin{enumerate}
\item $\mathcal{S}_i\cap \mathcal{L}[u]=\emptyset$, for all $1\leq i\leq k$. 
\item $u\in \mathcal{S}_i$, for some $1\leq i\leq k$. 
\item $x\in \mathcal{S}_i\cap \mathcal{L}(u)$, for some $1\leq i\leq k$. 
\end{enumerate}
In case 1, clearly $\mathcal{S}_0\setminus T',\mathcal{S}_1\setminus T',\ldots,\mathcal{S}_k\setminus T'$ is a TTS for $T''$. So the desired inequality $\overleftrightarrow{MTT}(T,\tau)\geq\overleftrightarrow{MTT}(T',\tau^{\prime})+\overleftrightarrow{MTT}(T'',\tau^{\prime\prime})$ holds. 

In case 2, remove $u$ from $\mathcal{S}_i$ and add its parent $z$ to $\mathcal{S}_{i-1}$ and in case 3 remove $x$ from $\mathcal{S}_i$ and add $z$ to $\mathcal{S}_{i-2}$. By doing so for all nodes of cases 2 and 3, we get a new sequence $\mathcal{S}_0,\mathcal{S}_1,\ldots,\mathcal{S}_k$ which is not necessarily a TTS of $T$ (where we are abusing the notation). However, the sequence 
\begin{align*}
\mathcal{S}_0',\mathcal{S}_1',\ldots,\mathcal{S}_{k'}',\emptyset,\emptyset,
\mathcal{S}_0\setminus\left(T'\setminus \mathcal{L}[u]\right), \mathcal{S}_1\setminus\left(T'\setminus \mathcal{L}[u]\right), \\ \ldots, \mathcal{S}_k\setminus\left(T'\setminus \mathcal{L}[u]\right)
\end{align*}
constructed from it will be a TTS for $T$. Clearly, its size is not larger, and also we have $\mathcal{S}_i\cap \mathcal{L}[u]=\emptyset$ for all $0\leq i\leq k$. Applying an argument similar to the one from case 1 finishes the proof.
\end{proof}
\begin{remark}\label{remark}
Let $T$ be a tree with at most one node belonging to $A''$ (with any number of nodes belonging to $C$) and $\tau$ be a threshold assignment. Repeatedly using Lemmas~\ref{C} and~\ref{A} gives an algorithm which finds a minimum size TTS of $(T,\tau)$, such that any positive node does not become negative in any step. 
\end{remark}
\begin{lemma}\label{L9}
Consider a pair $(T,\tau)$, where $T$ is rooted at a node $v\in A''\cup C$. Suppose that $u\in V(T)\setminus \{v\}$ is a node belonging to $A''$ such that non of its descendants belongs to $A''\cup C$. Denote by $z$ the parent of $u$. Let $T'$ be the induced subtree on $u$ and its descendants and $T''$ be the induced subtree on $\left(V(T)\setminus V(T')\right)$. Let also $\tau'$ and $\tau''$ be the threshold assignment of $T'$ and $T''$ respectively as follows: 
\begin{equation*}
		\tau'(w)=
		\begin{cases}
			\tau(w) & w\in V(T')\setminus\{u\} \\
			\tau(w)-1 & w=u
		\end{cases}
\end{equation*}
\begin{equation*}
		\tau''(w)=
		\begin{cases}
			\tau(w) & w\in V(T'')\setminus\{z\} \\
			\tau(w)-1 & w=z.
		\end{cases}
\end{equation*}
Then, $\overleftrightarrow{MTT}(T, \tau)= \overleftrightarrow{MTT}(T',\tau^{\prime})+\overleftrightarrow{MTT}(T'',\tau^{\prime\prime})$ and there exists a minimum size TTS for which each positive node never becomes negative in any step.
% \begin{enumerate}
% 		\item There exists a minimum size TTS for which each activated node never becomes negative in any step and 
% 		$$\overleftrightarrow{MTT}(T, \tau)= \overleftrightarrow{MTT}(T',\tau^{\prime})+\overleftrightarrow{MTT}(T'',\tau^{\prime\prime}).$$
% 		\item $\overrightarrow{MT}(T,\tau)=\overrightarrow{MT}(T',\tau')+\overrightarrow{MT}(T'',\tau^{\prime\prime}).$
% \end{enumerate}
\end{lemma}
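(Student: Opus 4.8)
The plan is to follow the template of the proof of Lemma~\ref{C}, with ``$u\in A''$'' playing the role that ``$u\in C$'' plays there, and to prove the two inequalities $\overleftrightarrow{MTT}(T,\tau)\le\overleftrightarrow{MTT}(T',\tau')+\overleftrightarrow{MTT}(T'',\tau'')$ and the reverse one separately. First I would record two structural facts. With respect to $(T',\tau')$ we still have $u\in A''$: indeed $\tau'(u)=\tau(u)-1=d(u)-1=d_{T'}(u)$, and since $l(u)>0$ we get $\tau'(u)=d(u)-1>\bar{l}(u)-1=\bar{l}_{T'}(u)$; moreover deleting the edge $zu$ changes neither the degree nor the threshold nor the (non-)leaf-neighbour count of any descendant of $u$, so none of them lies in $A''\cup C$ for $(T',\tau')$. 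Hence $(T',\tau')$ satisfies the hypothesis of Lemma~\ref{A} (with $u$ its only possible $A''\cup C$ node), so $\overleftrightarrow{MTT}(T',\tau')=2|V(T')\cap A|$ and there is a monotone minimum TTS $\sigma'$ of $(T',\tau')$ in which, by the Lemma~\ref{A}/Remark~\ref{remA} construction, $u$ is targeted exactly twice, at the last two non-empty steps, after which all of $V(T')$ is positive and stays positive. Likewise, since $|V(T'')|<|V(T)|$ and we may assume all thresholds are positive by Lemma~\ref{threshold0} (so $\tau''$ is a legal assignment), the induction hypothesis provides a monotone minimum TTS $\sigma''$ of $(T'',\tau'')$ in which each $A$-node is targeted exactly twice.

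For ``$\le$'' I would run $\sigma'$ and $\sigma''$ concurrently on $V(T')$ and $V(T'')$, with $\sigma''$ padded by leading empty steps if necessary so that the ``subtree part'' of $\sigma'$ (everything before the two targets of $u$) has finished well before the moment $z$ becomes permanently positive under $\sigma''$; because no descendant $c$ of $u$ is in $A''\cup C$ we have $\tau(c)\le d(c)-1$, so each such $c$ survives with all its children positive even while $u$ is still negative, so at that moment all of $V(T')\setminus\mathcal{L}[u]$ is positive and stable. The only place where the process on $T$ could deviate from $\sigma''$ is at $z$, which in $T$ has the extra neighbour $u$ but threshold $\tau(z)=\tau''(z)+1$, and the whole point of the choice $\tau''(z)=\tau(z)-1$ is that this single missing positive neighbour is the one supplied by $u$; so I would place the two targets of $u$ (if $z\in A''$, identified with the two targets $\sigma''$ already spends on $z$; otherwise immediately before and at the step $z$ first turns positive), so that when $z$ would rely on that one extra neighbour, $u$ is indeed positive. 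Then one step later $u$'s leaf-neighbours turn positive, one step after that $u$ has all $d(u)=\tau(u)$ neighbours positive, and the configuration is globally positive and stable; monotonicity is preserved. The resulting TTS is the union (with the time-shift) of $\sigma'$ and $\sigma''$, so of size $2|V(T')\cap A|+\overleftrightarrow{MTT}(T'',\tau'')$, which also yields the last clause of the lemma.

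For ``$\ge$'' I would copy the argument of Lemma~\ref{C}. Let $\mathcal{S}_0,\dots,\mathcal{S}_k$ be a minimum TTS of $(T,\tau)$. Since $u\in A''\subseteq A$ with respect to $T'$, Lemma~\ref{lemma2} applied to every $A$-node of $T'$ (their closed leaf-neighbourhoods being pairwise disjoint) gives $\sum_{i=0}^k|\mathcal{S}_i\cap(V(T')\setminus\mathcal{L}[u])|\ge 2|V(T')\cap A|=\overleftrightarrow{MTT}(T',\tau')$, and running $\sigma'$ on $T$ first (where the two targets on $u$ make it behave correctly despite the reduced threshold of $(T',\tau')$) followed by $\mathcal{S}_0\setminus(V(T')\setminus\mathcal{L}[u]),\dots,\mathcal{S}_k\setminus(V(T')\setminus\mathcal{L}[u])$ exhibits a TTS of $(T,\tau)$, forcing equality above. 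It then remains to bound $\sum_{i=0}^k|\mathcal{S}_i\cap(V(T)\setminus V(T'))|\ge\overleftrightarrow{MTT}(T'',\tau'')$: as in Lemma~\ref{C} one splits on whether $\mathcal{S}$ ever targets inside $\mathcal{L}[u]$; if not, $\mathcal{S}_0\setminus V(T'),\dots,\mathcal{S}_k\setminus V(T')$ is directly a TTS of $(T'',\tau'')$ (the lost positive neighbour $u$ of $z$ being absorbed by $\tau''(z)=\tau(z)-1$), and if it does, then because $u\in A''$ a positive $u$ helps only $z$ and only once $z$ is already positive, so a target on $u$ can be moved to $z$ one step earlier and a target on a leaf $x$ of $u$ can be moved to $z$ two steps earlier without increasing the total or destroying the all-positive outcome on $T''$, reducing to the previous case. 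Summing the two lower bounds gives the reverse inequality.

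The delicate step is the ``$\le$'' direction: one has to choreograph the three phases — clear $u$'s subtree, drive $z$ permanently positive through $\sigma''$, fire $u$'s two targets and let $u$'s leaves follow — so that no node is ever forced to rely on a neighbour that is still negative, so that switching off the subtree-targets does not trigger a cascade among $u$'s descendants, and so that after $u$'s second target $u$ (and hence $z$) is permanently positive. This is exactly where the two hypotheses are used: ``no descendant of $u$ in $A''\cup C$'' makes $u$'s subtree self-sustaining once positive, and ``$u\in A''$'' pins $u$ to come up in lock-step with $z$, which is what makes the accounting $\tau(u)=\tau'(u)+1$, $\tau(z)=\tau''(z)+1$ add up.
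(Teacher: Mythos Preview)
Your upper bound (the ``$\le$'' direction) and the overall inductive architecture match the paper's proof: run the monotone minimum TTS of $(T',\tau')$ first, then splice in the two targets of $u$ at the step $z$ first becomes positive under the monotone minimum TTS of $(T'',\tau'')$. The paper writes this splicing down explicitly as
\[
\mathcal{S}_0',\ldots,\mathcal{S}_{k'-3}',\ \mathcal{S}_0'',\ldots,\mathcal{S}_{i-2}'',\ \mathcal{S}_{i-1}''\cup\{u\},\ \mathcal{S}_i''\cup\{u\},\ \mathcal{S}_{i+1}'',\ldots,\mathcal{S}_{k''}'',
\]
which is exactly your ``immediately before and at the step $z$ first turns positive''.

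The ``$\ge$'' direction, however, has a genuine gap. You assert
\[
\sum_{i=0}^k\bigl|\mathcal{S}_i\cap\bigl(V(T')\setminus\mathcal{L}[u]\bigr)\bigr|\ \ge\ 2\,|V(T')\cap A|=\overleftrightarrow{MTT}(T',\tau'),
\]
appealing to Lemma~\ref{lemma2} over all $A$-nodes of $T'$. But $u\in A''\subseteq A$, so $u$ itself is one of those $A$-nodes, and its contribution $\sum_i|\mathcal{S}_i\cap\mathcal{L}[u]|\ge 2$ lives precisely in the set $\mathcal{L}[u]$ you have excluded from the left-hand side. Summing Lemma~\ref{lemma2} over the remaining $A$-nodes only yields $2(|V(T')\cap A|-1)$, which is $\overleftrightarrow{MTT}(T',\tau')-2$. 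This is exactly the point where the template of Lemma~\ref{C} breaks: there $u\in C$, so $u\notin A$ and the exclusion of $\mathcal{L}[u]$ is harmless; here it is not. The subsequent ``forcing equality'' step and the case split on targets in $\mathcal{L}[u]$ cannot recover the missing $2$, because after moving targets from $\mathcal{L}[u]$ into $V(T'')$ you only get a lower bound on the \emph{new} $\sum_i|\mathcal{S}_i\cap V(T'')|$, not on the original one.

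The paper sidesteps all of this. For $\ge$ it simply observes that the plain restriction $\mathcal{S}_i\cap V(T')$ is already a TTS of $(T',\tau')$ (the reduction $\tau'(u)=\tau(u)-1$ exactly compensates for the missing neighbour $z$), so Corollary~\ref{corollary1} and Lemma~\ref{A} give $\sum_i|\mathcal{S}_i\cap V(T')|\ge 2|A_{T'}|=\overleftrightarrow{MTT}(T',\tau')$. For the other half, one may enlarge every $\mathcal{S}_i$ by all of $V(T')$ (monotonicity keeps it a TTS and leaves $\mathcal{S}_i\cap V(T'')$ unchanged); then the restriction to $V(T'')$ is a TTS of $(T'',\tau'')$, since $\tau''(z)=\tau(z)-1$ absorbs the always-positive neighbour $u$. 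Adding the two inequalities finishes the proof without any case analysis on $\mathcal{L}[u]$.
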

\begin{proof}
% We prove the first part by induction on the size of $A''$.
We use induction on the size of $A''$. For $|A''|=1$, the proposition has been proved in Remark~\ref{remark}. Now, suppose that the statement is true for $|A''|<r$ and $T$ is a tree with $|A''|=r$. Consider a node $v\in A''$ as the root of $T$. Let $u$ be a node in $A''$ whose descendants do not belong to $A''\cup C$ (if there is not such a node, then using Lemma~\ref{C} we choose a descendant belonging to $C$ and remove $T'$ from $T$ and repeat this until finding such a node $u$). By the induction hypothesis, assume that $\mathcal{S}_0',\mathcal{S}_1',\ldots,\mathcal{S}_{k'}'$ and $\mathcal{S}_0'',\mathcal{S}_1'',\ldots,\mathcal{S}_{k''}''$ are minimum size timed target sets for $(T',\tau')$ and $(T'',\tau'')$, respectively, such that in both of them every positive node never becomes negative. Considering $u$ as the root of $T'$ since $u\in A''\subseteq A$ according to the proof of Lemma~\ref{A} we may assume that $\mathcal{S}_{k'-2}'=\mathcal{S}_{k'-1}'=\{u\}$ and $\mathcal{S}^{\prime}_{k'}=\emptyset$. Also note that $\mathcal{S}_0',\mathcal{S}_1',\ldots,\mathcal{S}_{k'-2}',\emptyset,\ldots,\emptyset,\{u\},\{u\}$ for any number of empty sets is a TTS of $T'$ too. 
	
Suppose that in $T''$, node $z$ becomes positive in step $i$. So by the assumption, it remains positive until the end of the process. In other words, $i$ is the smallest number such that $z\in \mathcal{S}_j''\cup Q_j''$, for all $i\leq j\leq k$. Obviously
\begin{align*}
\mathcal{S}_0',\mathcal{S}_1',\ldots,\mathcal{S}_{k'-3}',\mathcal{S}_0'',\mathcal{S}_1'',\ldots,\mathcal{S}_{i-2}'',\mathcal{S}_{i-1}''\cup\mathcal{S}_{k'-2}',\\ \mathcal{S}_{i}''\cup\mathcal{S}_{k'-1}',\mathcal{S}_{i+1}''\cup\mathcal{S}_{k'}',\mathcal{S}_{i+2}'',\ldots,\mathcal{S}_{k''}''
\end{align*}
is a TTS for $(T,\tau)$ such that every positive node will remain positive until the end of the process. This also shows that 
$$\overleftrightarrow{MTT}(T,\tau)\leq \overleftrightarrow{MTT}(T',\tau^{\prime})+ \overleftrightarrow{MTT}(T'',\tau^{\prime\prime}).$$
	
Now, we only need to prove the following inequality to complete the proof:
\begin{equation}\label{eq}
\overleftrightarrow{MTT}(T,\tau^{\prime})\geq \overleftrightarrow{MTT}(T',\tau^{\prime})+\overleftrightarrow{MTT}(T'',\tau^{\prime\prime}).
\end{equation}
Let $\mathcal{S}_0,\mathcal{S}_1,\ldots,\mathcal{S}_k$ be a minimum size TTS of $(T,\tau)$. Since $\mathcal{S}_0\cap V(T'),\mathcal{S}_1\cap V(T'),\ldots,\mathcal{S}_k\cap V(T')$ is a TTS for $(T',\tau')$, by Corollary~\ref{corollary1} we have 
\begin{equation}\label{eq1}
\sum_{i=0}^{k}|\mathcal{S}_i\cap V(T')|\geq 2|A_{T'}|=\overleftrightarrow{MTT}(T',\tau^{\prime})
\end{equation}
where the last equality is achieved by Lemma~\ref{A}. Also, we claim that 
\begin{equation}\label{eq2}
		\sum_{i=0}^{k}|\mathcal{S}_i\cap V(T'')|\geq \overleftrightarrow{MTT}(T'',\tau^{\prime\prime}).
\end{equation}
To show this, we may assume that $V(T')\subseteq \mathcal{S}_i$ for all $0\leq i\leq k$. Clearly, $N(V(T'))\cap V(T'')=\{z\}$ and the only neighbor of $z$ in $T'$ is $u$. Since $u$ is positive in any step and $\tau''(z)=\tau(z)-1$, we conclude that 
$$\mathcal{S}_0\cap V(T''),\mathcal{S}_1\cap V(T''),\ldots,\mathcal{S}_k\cap V(T'')$$
is a TTS of $T''$. This yields Equation~\eqref{eq2}. From Equations~\eqref{eq1} and~\eqref{eq2}, we get
$$\sum_{i=0}^{k}|\mathcal{S}_i|\geq \overleftrightarrow{MTT}(T',\tau^{\prime})+\overleftrightarrow{MTT}(T'',\tau^{\prime\prime})$$
which proves Equation~\eqref{eq} as desired.
	
% For the second part, suppose that $S'$ and $S''$ are TSs for $(T',\tau')$ and $(T'',\tau'')$ respectively. Since $u\in A''\subseteq A$, we may assume that $u\in S'$. Clearly, $S'\cup S''$ is a TS for $(T,\tau)$. This shows that $\overrightarrow{MT}(T,\tau)\leq\overrightarrow{MT}(T',\tau')+\overrightarrow{MT}(T'',\tau^{\prime\prime})$.
	
% So it is enough to prove $\overrightarrow{MT}(T,\tau)\geq\overrightarrow{MT}(T',\tau')+\overrightarrow{MT}(T'',\tau^{\prime\prime})$. Suppose that $\mathcal{S}$ is a minimum size TS of $(T,\tau)$. Clearly $\mathcal{S}\cap V(T')$ is a TS of $(T',\tau')$ and so by Corollary~\ref{corollary1} we have 
% \begin{equation}\label{eq3}
% 		|\mathcal{S}\cap V(T')|\geq |A_{T'}|=\overrightarrow{MT}(T',\tau')
% \end{equation}
% where the last equality is achieved by Lemma~\ref{A}.
	
% Since $u\in A''\subseteq A$, by Lemma~\ref{Lem2} for any TS $\mathcal{S}$ of $(T,\tau)$ we have $|\mathcal{S}\cap \mathcal{L}[u]|\geq1$. Without loss of generality (w.l.o.g.), we may assume that $\mathcal{S}\cap \mathcal{L}[u]=\{u\}$. Since $\tau''(z)=\tau(z)-1$ and the only neighbor of $z$ in $V(T')$ is $u$, we conclude that $\mathcal{S}\cap V(T'')$ is a TS of $T''$. So by the definition of $\overrightarrow{MT}(T',\tau')$, we have 
% \begin{equation}\label{eq4}
% 		|M\cap V(T'')|\geq \overrightarrow{MT}(T',\tau').
% \end{equation}
% Now, from Equations~\eqref{eq3} and \eqref{eq4} we conclude that 
% $\overrightarrow{MT}(T,\tau)=|\mathcal{S}|\geq\overrightarrow{MT}(T',\tau')+\overrightarrow{MT}(T'',\tau^{\prime\prime})$.
\end{proof}
Now, we are ready to prove the correctness of Algorithm~\ref{Algorithm3} in Theorem~\ref{tree-algorithm-proof}.

% \begin{algorithm}
% 	\caption{Minimum TTS of trees}
%         \label{Algorithm3}
% 	\KwIn{A tree $T$ with threshold assignment $\tau$.}
% 	\KwOut{A TTS of minimum size such that every activated node will remain positive.}
% 	Determine sets $A,A',A'',B,C$.
	
% 	Set $\overleftrightarrow{MTT}(T,\tau)=0$.
	
% 	\If{$|A''\cup C|\leq1$}{Use Algorithm~\ref{Algorithm1} to determine $\overleftrightarrow{MTT}(T,\tau)$.}
% 	\ElseIf{$|A''\cup C|\geq2$}{Choose $v\in A''\cup C$ as the root.
% 	Partition $V(T)$ into $V(T)=L_0\cup L_1\cup\cdots\cup L_d$.where the $i$-th level $L_i$ consists of nodes with distance $i$ from $v$. Set $L_{-1}=\emptyset$. 
	
% 	\For{$i=d$ to $1$}
% 	{
% 		\For{$u$ in $L_i$}
% 		{
% 			\If{$\tau(u)=0$}{Let $T'$ be induced subgraph on $u$ and its descendants. Set $\overleftrightarrow{MTT}(T,\tau)=\overleftrightarrow{MTT}(T,\tau)+2|V(T')\cap A|$. Remove $V(T')$ and set $\tau(z)=\tau(z)-1$ where $z\in L_{i-1}$ is parent of $u$.}
% 			\ElseIf{$u\in C$}
% 			{Let $T'$ be induced subgraph on $u$ and its descendants. Set $\overleftrightarrow{MTT}(T,\tau)=\overleftrightarrow{MTT}(T,\tau)+2|V(T')\cap A|$. Remove $V(T')\setminus\{u\}$ and set $\tau(u)=1$. Update sets $A,A',A'',B,C$.}
% 			\ElseIf{$u\in A''$}
% 			{Let $T'$ be induced subgraph on $u$ and its descendants. Set $\overleftrightarrow{MTT}(T,\tau)=\overleftrightarrow{MTT}(T,\tau)+2|V(T')\cap A|$. Set $\tau(z)=\tau(z)-1$ which $z$ is parent of $u$ and remove $V(T')$. Update sets $A,A',A'',B,C$.}
% 			\Else{Do nothing}
% 		}
% 	}
% 	Set $\overleftrightarrow{MTT}(T,\tau)=\overleftrightarrow{MTT}(T,\tau)+2|V(T)\cap A|$.
% 	}
% \end{algorithm}

\begin{thm}\label{tree-algorithm-proof}
For any tree $T$ and threshold assignment $\tau$, Algorithm~\ref{Algorithm3} finds $\overleftrightarrow{MTT}(T,\tau)$ in linear time. 
\end{thm}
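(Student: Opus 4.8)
The plan is to establish correctness by induction on the number of subtree removals (``peels'') performed by Algorithm~\ref{Algorithm3}, and to obtain the linear-time bound by charging every operation to a vertex that is deleted and never revisited.

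For correctness I would first apply Lemma~\ref{threshold0} to reduce to instances in which all thresholds are positive, observing that the only thresholds altered during a run are that of the parent $z$ of a peeled node (decremented) or of a peeled $C$-node (reset to $1$), and that whenever a threshold drops to $0$ the algorithm re-applies this reduction through its ``$\tau(u)=0$'' branch. In the base case $|A''\cup C|\le 1$ the algorithm outputs $2|A|$, which equals $\overleftrightarrow{MTT}(T,\tau)$ by Lemma~\ref{A} (its lower bound being Corollary~\ref{corollary1}, which rests on Lemma~\ref{lemma2}, and its upper bound an explicit TTS). When $|A''\cup C|\ge 2$ the tree is rooted at some $v\in A''\cup C$ and the levels are swept from $L_d$ up to $L_1$. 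The key claim to prove is the invariant that whenever a node $u$ is processed and lies in $A''$ (or has $\tau(u)=0$) or in $C$, no descendant of $u$ still present in the current tree lies in $A''\cup C$: such a descendant sits strictly deeper, was processed earlier, and had it been in $A''$ (resp.\ $C$) its subtree (resp.\ its subtree minus itself, which then became a leaf and left $A\cup B\cup C$) would already have been removed. Granting this, the hypotheses of Lemma~\ref{L9} (in the $A''$ branch) and of Lemma~\ref{C} (in the $C$ branch), together with Lemma~\ref{threshold0} for nodes with threshold $0$, all hold --- the root lies in $A''\cup C$ and $u\neq v$ because $v$ sits at level $0$ and is never processed --- so each peel replaces the current $(T,\tau)$ by $(T'',\tau'')$ with $\overleftrightarrow{MTT}(T,\tau)=\overleftrightarrow{MTT}(T',\tau')+\overleftrightarrow{MTT}(T'',\tau'')$, and Lemma~\ref{A} evaluates the removed piece as $\overleftrightarrow{MTT}(T',\tau')=2|V(T')\cap A|$, exactly the amount added to $x$. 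Here I would verify, from the defining inequalities for $A,B,C$ and the prescribed threshold shifts, that $u$ keeps its $A$-status inside $(T',\tau')$ (it stays in $A''$, resp.\ stays in $C$) while every other vertex of $T'$ keeps its class, so that $|V(T')\cap A|$ evaluated with the current global $A$ equals $|A_{T'}|$; the threshold updates and the re-derivation of $A,A',A'',B,C$ after each peel are precisely those mandated by Lemmas~\ref{C} and~\ref{L9}. Telescoping over all peels and then adding $2|V(T)\cap A|$ for the residual tree --- which by Lemma~\ref{A} is its $\overleftrightarrow{MTT}$, since every non-leaf vertex except possibly the root has been processed, leaving at most one vertex of $A''\cup C$ --- yields $x=\overleftrightarrow{MTT}(T,\tau)$.

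For the running time, degrees, the counts $\bar l(\cdot)$, the partition $A,A',A'',B,C$, the root, and the level lists $L_d,\dots,L_1$ are all computable in $O(n)$ by a constant number of traversals. In the main loop each vertex is visited once and performs $O(1)$ membership tests. A peel rooted at $u$ deletes $V(T')$ or $V(T')\setminus\{u\}$, and these deleted vertex sets are pairwise disjoint --- a deleted vertex never returns, and the single vertex kept by a $C$-peel becomes a leaf and thereafter lies outside $A\cup B\cup C$, contributing nothing to later counts --- so the total cost of traversing the $T'$'s to compute the terms $2|V(T')\cap A|$, together with the single final traversal of the residual tree, is $O(n)$. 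Each peel changes one threshold and/or deletes one child-edge at the neighbour $z$; since a vertex's class is determined by its stored degree, its incrementally maintained $\bar l$, and its threshold, re-classifying the $O(1)$ affected vertices and patching the partition costs $O(1)$ per peel, hence $O(n)$ overall. Summing, Algorithm~\ref{Algorithm3} runs in $O(n)$ time.

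The main obstacle will be pinning down the peeling invariant in the presence of threshold decrements: I must ensure that the bottom-up sweep keeps the hypotheses of Lemmas~\ref{C} and~\ref{L9} valid after every update, that the additive identities compose (the ``$T''$'' produced by one step being exactly the input to the next), and that the class-bookkeeping --- in particular the identity $|V(T')\cap A|=|A_{T'}|$ --- survives the prescribed shifts. The linear-time claim is comparatively routine, resting on the disjointness of the deleted subtrees.
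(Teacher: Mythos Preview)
Your proposal is correct and follows essentially the same route as the paper's proof. Both arguments reduce the base case $|A''\cup C|\le 1$ to Lemma~\ref{A}, handle the general case by bottom-up peeling governed by Lemmas~\ref{threshold0},~\ref{C}, and~\ref{L9}, use Lemma~\ref{A} to evaluate each removed piece as $2|V(T')\cap A|$, and telescope. Your write-up is in fact more explicit than the paper on several points the paper leaves informal: you state the peeling invariant (no surviving descendant of the processed node lies in $A''\cup C$), you flag the bookkeeping identity $|V(T')\cap A|=|A_{T'}|$ that the algorithm's line~12/15 relies on, and your linear-time argument via disjoint deleted subtrees and $O(1)$ reclassification per peel is more detailed than the paper's one-line claim.
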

\begin{proof}
In case of $|A''\cup C|\leq 1$, using Lemma~\ref{A} implies that the minimum size of a TTS is equal to $2|A|$, which can be calculated in linear time.
	
In Algorithm~\ref{Algorithm3}, we define a variable $x$ and set it to zero at the beginning which at the end of the algorithm will be the size of the minimum timed target set of $(T,\tau)$. In case of $|A''\cup C|\ge 2$, let $T$ be rooted at $v\in A''\cup C$. Suppose that for any node $w$ its distance from $v$ is denoted by $d(w, v)$. Find a node $u\in A''\cup C$ or a node $u$ with $\tau(u)=0$ for which $d(w,u)$ is maximum. So $u$ does not have any descendants belonging to $A''\cup C$. Now consider three cases:
\begin{enumerate}
		\item If $u\in A''$, then by defining $(T',\tau')$ and $(T'',\tau'')$ same as Lemma~\ref{L9} we have $$\overleftrightarrow{MTT}(T,\tau)=\overleftrightarrow{MTT}(T',\tau^{\prime})+\overleftrightarrow{MTT}(T'',\tau^{\prime\prime}).$$ Obviously $T'$ has the conditions of Lemma~\ref{A} and so $\overleftrightarrow{MTT}(T',\tau^{\prime})=2|A_{T'}|$. Set $x=x+2|A_{T'}|$ and $(T,\tau)=(T'',\tau'')$. This explains the first part of the \textit{if} statement in line 11 of Algorithm \ref{Algorithm3}.
  		\item If $\tau(u)=0$, then by Lemma~\ref{threshold0} if we remove $u$ from the graph and for all of its neighbors we reduce the threshold by one, then the size of a TTS does not change. Obviously, the minimum size of a TTS of $T\setminus\{u\}$ is the summation of it over the components. Denote by $T''$ the component of $T\setminus \{u\}$ including node $z$ (the parent of $u$) and suppose that $\tau''$ is the restriction of $\tau$ to $V(T'')\setminus \{z\}$ and $\tau''(z)=\tau(z)-1$. Let $T'=T\setminus T''$ and suppose that $\tau'$ is the restriction of $\tau$ over $V(T')$. Components of $T'\setminus \{u\}$ are the components of $T\setminus \{u\}$ other than $T''$. Using Lemma~\ref{threshold0} for $T'$ we conclude that   
		$$\overleftrightarrow{MTT}(T,\tau)=\overleftrightarrow{MTT}(T',\tau^{\prime})+\overleftrightarrow{MTT}(T'',\tau^{\prime\prime}).$$
		Consider $u$ as the root of $T'$ and also consider $u$ as a non-leaf node even if it is really a leaf in $T'$. It is easy to see that, although $T'$ has a node with threshold zero, we can use Lemma~\ref{A} for $T'$ and so $\overleftrightarrow{MTT}(T',\tau^{\prime})=2|V(T') \cap A|$. Set $x=x+2|V(T') \cap A|$ and $(T,\tau)=(T'',\tau'')$. This is equivalent to the second part of the \textit{if} statement in line 11 of Algorithm \ref{Algorithm3}.
  		\item If $u\in C$, then by defining $(T',\tau')$ and $(T'',\tau'')$ same as Lemma~\ref{C} we have $$\overleftrightarrow{MTT}(T,\tau)=\overleftrightarrow{MTT}(T',\tau^{\prime})+\overleftrightarrow{MTT}(T'',\tau^{\prime\prime}).$$ Obviously $T'$ has the conditions of Lemma~\ref{A} and so $\overleftrightarrow{MTT}(T',\tau^{\prime})=2|A_{T'}|$. 
		Set $x=x+2|A_{T'}|$ and $(T,\tau)=(T'',\tau'')$. This is equivalent to line 14 of Algorithm \ref{Algorithm3}.

We observe that when we loop over the nodes from $L_d$ to $L_1$, we always encounter one of the above cases. At the end, we are left with a subtree with no node in $A^{\prime\prime}\cup C$ (except potentially, the root). That is why we set $x+=2|V(T)\cap A|$ at the end. Furthermore, it is straightforward to check that the algorithm runs in linear time in the number of edges and nodes in the tree, which are both in $\Theta(n)$.
\end{enumerate}
\end{proof}

\end{document}